\def\res{\mathop{\mathrm {res}}\limits_}
\def\Z{\mathbb{Z}}
\def\C{\mathbb{C}}
\def\R{\mathbb{R}}
\def\J{\mathrm{J}}
\def\wt{\widetilde}
\def\wh{\widehat}
\def\G{\Gamma}
\def\d{\mathrm d}
\def\e{\mathrm{e}}
\def\i{\mathrm{i}}
\def\pa{\partial}
\def\bs{\boldsymbol}
\def\s{\sigma}
\def\a{\alpha}
\def\b{\beta}
\def\g{\gamma}
\def\tr{\mathrm{tr}\,}
\newcommand{\op}[1]{\mathcal{#1}}
\def\Be{\mathsf{Be}}
\def\reg{\mathsf{reg}}
\def\Ai{\mathrm{Ai}}
\def\be{\begin{equation}}
\def\ee{\end{equation}}
\newtheorem{theorem}{Theorem}[section]
\newtheorem{lemma}[theorem]{Lemma}
\newtheorem{proposition}[theorem]{Proposition} 
\newtheorem{corollary}[theorem]{Corollary}
\newtheorem{remark}[theorem]{Remark}
\newtheorem{theoremintro}{Theorem}
\begin{document}

\numberwithin{equation}{section}

\title{Integrable equations associated with the finite-temperature deformation of the discrete Bessel point process}
\author[1]{Mattia Cafasso}
\author[2]{Giulio Ruzza}
\renewcommand\Affilfont{\small}
\affil[1]{\textit{Univ Angers, SFR MATHSTIC, F-49000 Angers, France;} \texttt{mattia.cafasso@univ-angers.fr}}
\affil[2]{\textit{IRMP, UCLouvain, Chemin du Cyclotron 2, 1348 Louvain-la-Neuve, Belgium;} \texttt{giulio.ruzza@uclouvain.be}}

\date{}
\maketitle

\begin{abstract}
We study the finite-temperature deformation of the discrete Bessel point process.
We show that its largest particle distribution satisfies a reduction of the 2D Toda equation, as well as a discrete version of the integro-differential Painlev\'e II equation of Amir--Corwin--Quastel, and we compute initial conditions for the Poissonization parameter equal to 0.
As proved by Betea and Bouttier, in a suitable continuum limit the last particle distribution converges to that of the finite-temperature Airy point process.
We show that the reduction of the 2D Toda equation reduces to the Korteweg--de\thinspace Vries equation, as well as the discrete integro-differential Painlev\'e II equation reduces to its continuous version.
Our approach is based on the discrete analogue of Its--Izergin--Korepin--Slavnov theory of integrable operators developed by Borodin and Deift.
\end{abstract}

\medskip
\medskip

\noindent
{\small{\sc AMS Subject Classification (2020)}: 37K10, 35Q15, 33E30}

\noindent
{\small{\sc Keywords}: discrete Bessel kernel, 2D Toda equation, Poissonized Plancherel distribution}

\medskip

\section{Introduction and results}

\subsection{Finite-temperature discrete Bessel point process and the 2D Toda equation}
In this paper we study the {\it finite-temperature discrete Bessel point process}, which is the determinantal point process on~$\Z': =\Z+\tfrac 12$ with correlation kernel
\be
\label{eq:K}
K_\sigma^\Be(a,b)=\sum_{l \in \mathbb Z'} \s(l)\J_{a + l}(2L)\J_{b + l}(2L),\qquad a,b\in\Z',
\ee
where $L>0$ is a parameter, $\J_k(\cdot)$ is the Bessel function of first kind of order~$k$, and $\s:\Z'\to[0,1]$ is a function such that $\sigma\in\ell^1(\Z'\cap (-\infty,0))$.
The fact that the kernel~\eqref{eq:K} actually induces a determinantal point process on~$\Z'$ and the role of the decay conditions on~$\s$ at~$-\infty$ will be clarified in Section~\ref{sec:2}. 

\medskip

The specialization~$\s=1_{\Z'_+}$ of~\eqref{eq:K}, where $\Z'_+:=\Z'\cap(0,+\infty)$, yields the standard discrete Bessel point process~\cite{BOO, JohanssonPlan}, namely, the determinantal point process with correlation kernel
\be
\label{eq:kernelintegrable}
K^\Be(a,b)=\sum_{l \in \mathbb Z'_+}\J_{a + l}(2L)\J_{b + l}(2L)=L\,\frac{\J_{a-\frac 12}(2L)\J_{b+\frac 12}(2L)-\J_{a+\frac 12}(2L)\J_{b-\frac 12}(2L)}{a-b},\quad a,b\in\Z'.
\ee
(The last equality easily follows from a property of the Bessel functions and will be proved for the reader's convenience in Lemma~\ref{lemma:integrable}.)
The discrete Bessel point process has the following combinatorial interpretation.
Let~$\mathbb Y$ be the set of integer partitions (or, equivalently, Young diagrams).
Namely, elements $\lambda=(\lambda_1,\lambda_2,\dots)\in\mathbb Y$ are half-infinite sequences of non-negative integers $\lambda_i$, for $i\geq 1$, satisfying $\lambda_i\geq\lambda_{i+1}$ and with finitely many non-zero $\lambda_i$'s.
In particular, for $\lambda\in\mathbb Y$, the {\it weight} $|\lambda|:=\sum_{i\geq 1}\lambda_i$ is a finite number.
The {\it Poissonized Plancherel measure}~$\mathbb P_{\sf {Plan}}$ is the probability measure on~$\mathbb Y$, depending on a parameter~$L>0$, defined by
\begin{equation}
\label{eq:PoissPlancherel}
	\mathbb P_{\sf {Plan}}(\lbrace\lambda\rbrace) := {\rm e}^{-L^2}L^{2 | \lambda |}\left(\frac{\dim \lambda}{|\lambda|!}\right)^2 ,\qquad \lambda\in\mathbb Y.
\end{equation}
Here, $\dim \lambda$ is the dimension of the irreducibile representation of the symmetric group~$S_{|\lambda|}$ corresponding to $\lambda$, or, equivalently, $\dim \lambda$ is the number of standard Young tableaux of shape $\lambda$.
If we associate to each $\lambda\in\mathbb Y$ a subset of $\mathbb Z'$ through the map $\lambda \mapsto \lbrace\lambda_i - i + \tfrac 12 \rbrace_{i \geq 1}$, it was proven in~\cite{BOO,OkounkovWedge} that the push-forward of~$\mathbb P_{\sf {Plan}}$ is the determinantal point process on~$\mathbb Z'$ whose correlation kernel is precisely~\eqref{eq:kernelintegrable}.

\medskip

The kernel~\eqref{eq:K} has a similar interpretation when
\be
\label{eq:sigmaPP}
\sigma(l) = (1 + u^l)^{-1},\qquad l\in\Z',
\ee
for a parameter~$u \in [0,1)$.
Namely, introduce a probability measure~$\mathbb P_{\sf {cPlan}}$ on~$\mathbb Y$ ({\it cylindric Plancherel distribution}~\cite{Borodincylindric}), depending on parameters~$L>0$ and~$u\in[0,1)$, by
\be
\label{eq:cylindric}
\mathbb P_{\sf {cPlan}}(\lbrace\lambda\rbrace) := \frac 1{Z(u,L)}\sum_{\mu\subset\lambda} u^{|\mu|}
\biggl( \frac{\bigl(L(1-u)\bigr)^{|\lambda| - |\mu|}\dim(\lambda/\mu)}{(| \lambda | - | \mu |)!} \biggr)^2, \quad\lambda\in\mathbb Y,\ \quad
Z(u,L):=\frac{\e^{L^2(1-u)}}{\prod_{n\geq 1}(1-u^n)},
\ee
where the sum runs over partitions~$\mu\in\mathbb Y$ such that~$\mu_i\leq\lambda_i$ for all~$i\geq 1$, and~$\dim(\lambda/\mu)$ is the number of standard Young tableaux of shape~$\lambda/\mu$.
Consider also the probability measure~$\mathbb P_{\sf C}$ on~$\Z$ defined by
\begin{equation}
	\mathbb P_{\sf C}(\lbrace c\rbrace) = \frac{u^{c^2/2}}{\sum_{n \in \mathbb Z} u^{n^2/2}},\qquad c\in\Z.
\end{equation}
It is proven in~\cite{Borodincylindric,BeteaBouttier} that, under the map $(\lambda,C) \mapsto \{\lambda_i - i + 1/2 + C \}_{i \geq 1}$, the push-forward of~$\mathbb P_{\sf {cPlan}}\otimes\mathbb P_{\sf C}$ is the determinantal point process on~$\mathbb Z'$ whose correlation kernel is~\eqref{eq:K} with $\sigma$ as in~\eqref{eq:sigmaPP}.

\medskip

Going back to the kernel~\eqref{eq:K} for general~$\sigma$, we shall see in Lemma~\ref{lemma:defineDPP} that the induced determinantal point process has almost surely a largest particle~$a_{\sf max}$.
We shall study its cumulative distribution function
\be
Q_\s(L,s):=\mathbb P (a_{\sf max}\leq s),\qquad s\in\Z'.
\ee
By the general theory of determinantal point processes~\cite{SoshnikovDPP,JohanssonDPP,BorodinDPP}, this distribution can be expressed  as
\be
\label{eq:Fred}
Q_\s(L,s) = \det(1-\op P_s\op K_\s^\Be \op P_s),\qquad s\in\Z'.
\ee
Here, $\op K_\s^\Be$ is the operator on $\ell^2(\Z')$ induced\footnote{Throughout this paper, we agree that a kernel~$X:\Z'\times\Z'\to\C$ induces an operator $\op X$ on $\ell^2(\Z')$ by $(\op X \psi)(a)=\sum_{b\in \Z'}X(a,b)\psi(b)$, for $\psi\in\ell^2(\Z')$ and $a\in\Z'$.} by the kernel~\eqref{eq:K}, and $\op P_s$ is the orthogonal projector onto~$\ell^2(\{s+1,s+2\cdots\})$, namely, $\op P_s$ is induced by the kernel $P_s(a,b)=1_{a>s}\,\delta(a,b)$, for $s\in\Z'$.
The determinant in~\eqref{eq:Fred} is a Fredholm determinant, as the operator~$\op P_s\op K_\s^\Be \op P_s$ is trace-class on~$\ell^2(\Z')$ for all~$s\in\Z'$ (Lemma~\ref{lemma:defineDPP} below).

It is also worth noting that~$Q_\s(L,s)$ can be equivalently described as the following expectation with respect to the Poissonized Plancherel measure~\eqref{eq:PoissPlancherel} (see Lemma~\ref{lemma:123}):
\be
\label{eq:multstatPP}
Q_\s(L,s)=\mathbb E_{\sf {Plan}}\left[\prod_{i=1}^{+\infty}\bigl(1-\sigma(\lambda_i-i-s)\bigr)\right].
\ee

Finally, let us remark that~$0\leq Q_\s(L,s)\leq 1$ is a non-decreasing function of~$s\in\Z'$ such that $Q_\s(L,s)\to 1$ as~$s\to+\infty$.
In particular, there exists $s_0\in\Z'\cup\{-\infty\}$ (depending on~$\s$) such that $Q_\s(L,s)=0$ if~$s<s_0$ and $Q_\s(L,s)>0$ otherwise.
In particular, since for any~$\lambda\in\mathbb Y$ the set $\{\lambda_i-i+\tfrac 12\}_{i\geq 1}$ has largest particle $a_{\sf max}=\lambda_1-\tfrac 12\geq -\tfrac 12$, we deduce by the discussion above of the Poissonized Plancherel measure that, when~$\s=\mathbf 1_{\Z'_+}$, we have $s_0=-1/2$.
On the other hand, when~$\s(l)=(1+u^l)^{-1}$ as in~\eqref{eq:sigmaPP}, corresponding to the cylindric Plancherel measure, we have $s_0=-\infty$, because
\be
Q_\s(L,s)=\mathbb P(a_{\sf max}\leq s)\geq\mathbb P(a_{\sf max}=s)\geq \mathbb P_{\sf cPlan}\bigl(\lbrace\emptyset\rbrace\bigr)\mathbb P_{\sf C}(\lbrace s+\tfrac 12\rbrace)>0,\qquad\mbox{for all }s\in\Z'.
\ee

\medskip

Our first result is the following.

\begin{framed}
\begin{theoremintro}
\label{thm:1}
For all $s\in\Z'$ such that $Q_\s(L,s)>0$, we have
\be
\label{eq:thm1}
\frac{\pa^2}{\pa L^2}\log Q_\s(L,s)+\frac 1L\frac{\pa}{\pa L}\log Q_\s(L,s)+4=4\,\frac{Q_\s(L,s+1)Q_\s(L,s-1)}{Q_\s(L,s)^2}.
\ee
\end{theoremintro}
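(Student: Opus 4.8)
The plan is to recognize~\eqref{eq:thm1} as the first bilinear equation of the $2$D Toda hierarchy, read off from the Borodin--Deift Riemann--Hilbert analysis of the integrable kernel~\eqref{eq:K}, after accounting for the specific way the single parameter $L$ enters the Toda times and for an explicit prefactor. Anticipating the outcome, I expect to produce a sequence $\tau_s$ such that
\be
\label{eq:planTau}
Q_\sigma(L,s)=\e^{-L^2}\tau_s(L),\qquad \frac{\pa^2\log\tau_s}{\pa t_1\,\pa\bar t_1}=\frac{\tau_{s+1}\tau_{s-1}}{\tau_s^2},
\ee
where $t_1,\bar t_1$ are the first $2$D Toda times and the mixed derivative is evaluated at the locus $t_1=\bar t_1=L$ relevant to~\eqref{eq:K}. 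The factor $\e^{-L^2}$ is precisely what will convert the clean Toda equation into the inhomogeneous form~\eqref{eq:thm1}, and it cancels in the ratio on the right-hand side.

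First I would record the integrable structure. The Bessel orthogonality $\sum_{l\in\Z'}\J_{a+l}(2L)\J_{b+l}(2L)=\delta_{a,b}$ factors the kernel as $\op K_\sigma^\Be=\op A\,\Sigma\,\op A^*$, with $\op A$ given by $\J_{a+l}(2L)$ and $\Sigma=\mathrm{diag}(\sigma)$, so that $\op P_s\op K_\sigma^\Be\op P_s$ is a discrete integrable operator in the sense of Borodin--Deift; to it I attach the associated $2\times 2$ Riemann--Hilbert problem $Y=Y(z;L,s)$ whose tau function is the Fredholm determinant $Q_\sigma(L,s)$. Two commuting deformations act on $Y$: the continuous $L$-flow, governed by the two Bessel relations $\tfrac{\d}{\d L}\J_k(2L)=\J_{k-1}(2L)-\J_{k+1}(2L)$ and $\J_{k-1}(2L)+\J_{k+1}(2L)=\tfrac kL\J_k(2L)$, and the unit lattice shift $s\mapsto s\pm 1$ of the projector $\op P_s$. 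These realize, respectively, the first continuous Toda flows (with $L$ coupling $t_1$ and $\bar t_1$) and the Toda lattice step, and note already that the index recurrence carries the factor $\tfrac1L$ that will surface in~\eqref{eq:thm1}.

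Next I would derive the bilinear equation and perform the reduction. From the Jimbo--Miwa--Ueno theory for $Y$ I would express $\pa_{t_1}\log\tau_s$ and $\pa_{\bar t_1}\log\tau_s$ through the expansion coefficients of $Y$ at its singular points, while the elementary gauge transformation relating $Y(\,\cdot\,;s)$ to $Y(\,\cdot\,;s\pm1)$ gives the discrete ratios $\tau_{s\pm1}/\tau_s$ in terms of the same coefficients; the zero-curvature compatibility of the two deformations then closes into the first Toda equation in~\eqref{eq:planTau}. Because the symbol $\e^{L(z-1/z)}$ ties $L$ to both first times symmetrically, I then invoke the scaling homogeneity of the reduced tau function, which I expect makes $\tau_s$ depend on $t_1,\bar t_1$ essentially through the product $t_1\bar t_1=L^2$; this is transparent in the special case $\sigma=\mathbf 1_{\Z'_+}$, where $\tau_s$ is a Toeplitz determinant depending on the times only through $t_1\bar t_1$. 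The chain rule then converts the mixed derivative into the radial operator, $\pa_{t_1}\pa_{\bar t_1}\log\tau_s=\tfrac14\bigl(\pa_L^2+\tfrac1L\pa_L\bigr)\log\tau_s$.

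Finally I would assemble~\eqref{eq:thm1}: the last two displays give $\bigl(\pa_L^2+\tfrac1L\pa_L\bigr)\log\tau_s=4\,\tau_{s+1}\tau_{s-1}/\tau_s^2$, and substituting $\log Q_\sigma=-L^2+\log\tau_s$, using $\bigl(\pa_L^2+\tfrac1L\pa_L\bigr)(-L^2)=-4$ together with $Q_{\sigma}(L,s+1)Q_{\sigma}(L,s-1)/Q_\sigma(L,s)^2=\tau_{s+1}\tau_{s-1}/\tau_s^2$, yields the claimed identity. I expect the main obstacle to be the rigorous part of the third step for general $\sigma$: establishing the tau-function identification together with the discrete Toda bilinear relation directly from the Borodin--Deift Riemann--Hilbert problem, and fixing both the prefactor $\e^{-L^2}$ (responsible for the constant $+4$) and the reduction to the radial operator (responsible for the term $\tfrac1L\pa_L\log Q_\sigma$) with the correct coefficients; by contrast, once the tau-function picture is in place, the continuous flow and the chain-rule reduction are routine.
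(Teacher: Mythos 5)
Your overall reading of \eqref{eq:thm1} is the right one---it says precisely that $\e^{\theta_+\theta_-}Q_\s(\sqrt{\theta_+\theta_-},s)$ is a radially reduced 2D Toda tau function---and your bookkeeping is correct: the prefactor $\e^{-L^2}$ accounts for the constant $4$, and homogeneity in the two first times turns $\pa_{t_1}\pa_{\bar t_1}$ into $\tfrac 14\bigl(\pa_L^2+\tfrac 1L\pa_L\bigr)$. But as a proof there is a genuine gap, concentrated exactly where you flag it, and it cannot be deferred. First, you never define the two-parameter family whose mixed derivative you take: the kernel \eqref{eq:K} depends on the single parameter $L$, so before writing $\pa_{t_1}\log\tau_s$ and $\pa_{\bar t_1}\log\tau_s$ you must construct a two-time deformation of the determinant (e.g.\ replacing $\J_{a+l}(2L)$ by the Laurent coefficients of $\e^{t_1z-\bar t_1/z}$), check it is well defined for general $\s$, and only then restrict to $t_1\bar t_1=L^2$. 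Second, the homogeneity you invoke is argued only for $\s=\mathbf 1_{\Z'_+}$ via Toeplitz determinants; for general $\s$ it is true, but the argument must go through \eqref{eq:multstatPP} (the Schur-measure weights with only the first times switched on depend on $t_1\bar t_1$ alone), and you do not give it. Third, and decisively: the assertion that this deformed determinant satisfies the first bilinear Toda equation for \emph{general} $\s$ is the content of Theorem~\ref{thm:1} itself. There is no off-the-shelf ``Jimbo--Miwa--Ueno theory'' or ``zero-curvature compatibility'' for a discrete Riemann--Hilbert problem with simple poles at every point of $\Z'$ and residue data built from an arbitrary $\s:\Z'\to[0,1]$; the differential and difference identities you invoke in a single sentence are the entire technical body of the argument, and for $\s$ not an indicator they are new.

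For comparison, the paper never introduces two times. It proves directly, via the discrete IIKS/Borodin resolvent formula and a residue resummation (Theorem~\ref{thm:Qprime}), the one-variable identities
\be
\frac{Q_\s(L,s-1)}{Q_\s(L,s)}-1=\b(L,s),
\qquad
\frac{\pa}{\pa L}\log Q_\s(L,s)=-\frac{2\a(L,s)}{L},
\ee
with $\a,\b$ the coefficients of $Y$ at infinity; it then dresses $Y$ by the explicit Bessel--Hankel matrix $\Phi$ so that the pole data become independent of $(L,s)$, obtains a Lax pair in the pair of variables $(L,s)$ (Proposition~\ref{prop:eqPsi}), and reads off from its compatibility the scalar relation $\frac{\pa}{\pa L}\a(L,s)=2L\bigl(1-\frac{1+\b(L,s)}{1+\b(L,s+1)}\bigr)$, which combined with the displayed identities yields \eqref{eq:thm1} by a two-line computation. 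Your third step is a placeholder for exactly this work; until it is filled in---either along the paper's lines, or by a fermionic proof that the multiplicative expectation \eqref{eq:multstatPP} is a Toda tau function for general $\s$---the proposal remains a plan rather than a proof.
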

\end{framed}

The proof is given in Section~\ref{sec:thm1}.

The equation~\eqref{eq:thm1} is, essentially, a reduction of the {\it 2D Toda equation}.
Indeed, it implies that
\be
\tau_s(\theta_+,\theta_-):=\e^{\theta_+\theta_-}\,Q_\s(\sqrt{\theta_+\theta_-},s)
\ee
is a 2D Toda tau function, i.e., $\tau_s(\theta_+,\theta_-)$ satisfies the bilinear form of the 2D Toda equation~\cite{Hirota,UenoTakasaki}
\be
\frac{\pa^2}{\pa \theta_+\pa \theta_-}\log\tau_s(\theta_+,\theta_-) = \frac{\tau_{s+1}(\theta_+,\theta_-)\tau_{s-1}(\theta_+,\theta_-)}{\tau_s(\theta_+,\theta_-)^2}.
\ee
The equation~\eqref{eq:thm1}, or rather the corresponding equation for the variables $\{ {\rm e}^{L^2}Q_\s(L,s) \}_{s \in \mathbb Z'}$, is also known as \emph{cylindrical Toda equation}.
Another class of solutions of~\eqref{eq:thm1} written in terms of Fredholm determinants is studied in \cite{WcToda,TWcToda}.
More recently, using a Fredholm determinant representation, Matetski, Quastel, and Remenik proved that multi-point distributions associated to the polynuclear growth model with arbitrary initial data satisfy the {\it non-commutative Toda equation}~\cite{MQR}.

\medskip

It is appropriate to remark that, in the case~$\s=\mathbf 1_{X}$, with $X$ a subset of $\mathbb Z'$ bounded below\footnote{If $X$ is not bounded below, by~\eqref{eq:multstatPP} we have $Q_\s(L,s)=0$ identically.}, the connection to the 2D Toda equation is not new. Indeed, in this case, our result follows from~\cite[Theorem~3]{OkounkovWedge}, which relates more generally the Schur measure on partitions with the Toda hierarchy.
A particular case studied in even more detail is the one in which $X = \mathbb Z'_+$.
In this situation, by the combinatorial interpretation of the discrete Bessel point process explained above, we have $Q_\s(L,s)=0$ for $s\leq -\tfrac 32$ and $Q_\s(L,s)>0$ for $s\geq -\tfrac 12$.
Moreover, by the Borodin--Okounkov--Geronimo--Case formula~\cite{GC,BO}, the Fredholm determinant~$Q_\s(L,s)$, for~$s\in\Z'_+$, is related to a Toeplitz determinant of size~$\lceil s\rceil=s+\tfrac 12$ as
\be\label{eq:introToeplitz}
Q_\s(L,s)=\e^{-L^2}\det\left[{\rm I}_{i-j}(2L)\right]_{i,j=1,\dots,\lceil s\rceil}\,,
\qquad
{\rm I}_k(2L)=\res{u=0}\e^{L(u+u^{-1})}u^{-k-1}\d u.
\ee
Once this connection with Toeplitz determinants is established, the 2D Toda equation can be obtained in several different ways, essentially exploiting the relation with orthogonal polynomials on the unit circle, as for instance in \cite{AvMToda, Hisakado, BaikDeiftJohansson}.

Therefore, Theorem~\ref{thm:1} states that the connection of the discrete Bessel kernel to the 2D Toda equation extends to the deformation~\eqref{eq:K} of the kernel.
We complement this result by computing small~$L$ asymptotics for $Q_\s(L,s)$.

\begin{framed}
\begin{theoremintro}
\label{thm:ic}
For any $s\in\Z'$, let $Q^0_\sigma(s):=\prod_{i=1}^{+\infty}\bigl(1-\sigma(-i-s)\bigr)$.
For all $s\in\Z'$ such that $Q^0_\s(s)>0$, there exists $L_*=L_*(s)>0$ such that $Q_\s(L,s)>0$ for~$0\leq L<L_*$, and
\be
\label{eq:thmic}
\log Q_\s(L,s)=\log Q^0_\s(s)-\frac{\sigma(-s)-\sigma(-s-1)}{1-\sigma(-1-s)}L^2+\mathrm O(L^4),\qquad L\to 0.
\ee
\end{theoremintro}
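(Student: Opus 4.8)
The plan is to start from the Poissonized Plancherel representation \eqref{eq:multstatPP}, which expresses $Q_\s(L,s)$ as the absolutely convergent series
\be
Q_\s(L,s)=\sum_{\lambda\in\mathbb Y}\e^{-L^2}L^{2|\lambda|}\left(\frac{\dim\lambda}{|\lambda|!}\right)^2\prod_{i=1}^{+\infty}\bigl(1-\sigma(\lambda_i-i-s)\bigr),
\ee
and to extract the small-$L$ asymptotics by organizing this sum according to the weight $|\lambda|$. Since $0\leq\sigma\leq 1$, every factor $1-\sigma(\lambda_i-i-s)$ lies in $[0,1]$, and the infinite product over $i$ converges by the hypothesis $\sigma\in\ell^1(\Z'\cap(-\infty,0))$; thus each product is a well-defined number in $[0,1]$. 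The term $|\lambda|=0$ (the empty diagram) contributes exactly $\e^{-L^2}Q^0_\s(s)$, and since $L^{2|\lambda|}\geq L^4$ for $|\lambda|\geq 2$, only diagrams with $|\lambda|\leq 1$ can contribute to the coefficient of $L^2$.

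Next I would compute the two relevant contributions explicitly. The unique diagram with $|\lambda|=1$ is $\lambda=(1)$, for which $\dim\lambda=1$ and $|\lambda|!=1$; here $\lambda_1-1-s=-s$ while $\lambda_i-i-s=-i-s$ for $i\geq 2$, so its product over $i$ equals $(1-\sigma(-s))\prod_{i\geq 2}(1-\sigma(-i-s))=(1-\sigma(-s))\,Q^0_\s(s)/(1-\sigma(-1-s))$, the denominator being positive because $Q^0_\s(s)>0$ forces every factor of the product to be positive. Collecting the two terms gives
\be
Q_\s(L,s)=\e^{-L^2}Q^0_\s(s)\left(1+\frac{1-\sigma(-s)}{1-\sigma(-1-s)}L^2+\mathrm O(L^4)\right).
\ee
Taking the logarithm, expanding $\e^{-L^2}=1-L^2+\mathrm O(L^4)$, and simplifying $-1+\frac{1-\sigma(-s)}{1-\sigma(-1-s)}=-\frac{\sigma(-s)-\sigma(-1-s)}{1-\sigma(-1-s)}$ yields \eqref{eq:thmic}. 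The positivity $Q_\s(L,s)>0$ for $0\leq L<L_*(s)$ follows from $Q_\s(0,s)=Q^0_\s(s)>0$ and continuity of $L\mapsto Q_\s(L,s)$.

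The only point that genuinely requires care is the uniform $\mathrm O(L^4)$ control of the tail $\sum_{|\lambda|\geq 2}$. Here I would bound each product over $i$ by $1$ using $0\leq\sigma\leq 1$, and then invoke the classical identity $\sum_{|\lambda|=n}(\dim\lambda)^2=n!$, so that the total contribution of diagrams with $|\lambda|=n$ is at most $\e^{-L^2}L^{2n}/n!$. Summing over $n\geq 2$ gives $\e^{-L^2}(\e^{L^2}-1-L^2)=\mathrm O(L^4)$ uniformly, which simultaneously establishes absolute convergence of the series and justifies discarding all $|\lambda|\geq 2$ up to the required order. I expect this tail estimate, rather than the elementary bookkeeping for $|\lambda|\leq 1$, to be the main (albeit mild) technical step.
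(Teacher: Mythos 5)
Your proof is correct, but it takes a genuinely different route from the paper's. The paper proves Theorem~\ref{thm:ic} in Section~\ref{sec:proofthmic} entirely inside its discrete Riemann--Hilbert framework: it recasts the discrete RH problem for $Y$ as a linear equation for an operator $\mathsf D$ on $\ell^2(\Z')\otimes\C^2$, proves the small-norm estimate $\|\mathsf D\|<cL$ (Proposition~\ref{prop51}), deduces solvability and an expansion $Y=Y^{[0]}+L^2Y^{[2]}+\mathrm O(L^4)$ via the Neumann series (Corollary~\ref{corollary:series}), solves for $Y^{[0]}$ and $Y^{[2]}$ explicitly as rational functions with prescribed residues, and finally extracts the coefficient of $L^2$ from the variational formula $\pa_L\log Q_\s(L,s)=-2\a(L,s)/L$ of Theorem~\ref{thm:Qprime}. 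You instead expand the Plancherel expectation \eqref{eq:multstatPP} (which is legitimate to invoke: it is Lemma~\ref{lemma:123}\,(ii), established before either proof, and the paper itself uses it to get the constant term $Q^0_\s(s)$): the empty partition gives $\e^{-L^2}Q^0_\s(s)$, the partition $(1)$ gives the $L^2$ term, and the tail over $|\lambda|\geq 2$ is $\mathrm O(L^4)$ because, by the classical identity $\sum_{|\lambda|=n}(\dim\lambda)^2=n!$, the weight $|\lambda|$ is Poisson-distributed with parameter $L^2$ under $\mathbb P_{\sf Plan}$, so $\mathbb P_{\sf Plan}(|\lambda|\geq 2)=\mathrm O(L^4)$; your bookkeeping for $\lambda=(1)$ and the final algebra reproduce \eqref{eq:thmic} exactly. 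What your route buys: it is elementary and short, it extends mechanically to any order (the coefficient of $L^{2n}$ only involves partitions of weight $\leq n$), and it yields positivity in a stronger form --- since every term of the sum is non-negative, $Q_\s(L,s)\geq\e^{-L^2}Q^0_\s(s)>0$ for \emph{all} $L>0$, so the continuity argument you invoke at the end is not even needed. What the paper's route buys: Corollary~\ref{corollary:series} (solvability of the discrete RH problem and analyticity of $Y$ in $L$ near $L=0$) is of independent interest, and the computation exercises the same Riemann--Hilbert machinery that underlies Theorems~\ref{thm:1} and~\ref{thm:2}, keeping the initial-condition analysis within the framework on which the rest of the paper relies.
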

\end{framed}

We note that when $Q^0_\s(s)>0$, the denominator in the term of order $L^2$ of~\eqref{eq:thmic} does not vanish. The proof is given in Section~\ref{sec:proofthmic}.

\subsection{Continuum limit to the Korteweg--de Vries equation}

The finite-temperature discrete Bessel kernels~\eqref{eq:K} have continuum limits to the finite-temperature Airy kernels~\cite{BeteaBouttier}.
These are kernels of the form
\be
\label{eq:airy}
K^{\sf Ai}_\varsigma(\xi,\eta;t)=\int_\R \varsigma(t^{-2/3}r)\Ai(\xi+r)\Ai(\eta+r)\d r,\qquad\xi,\eta\in\R,
\ee
with $\Ai$ and $\Ai'$ the Airy function and its derivative, respectively, $t>0$ a positive real parameter, and $\varsigma:\R\to[0,1]$ a function which is smooth and satisfies $\varsigma(r)\in L^1\bigl((-\infty,0),\sqrt{|r|}\d r\bigr)$.
In~\cite{BeteaBouttier}, the authors proved this limit for~$\s$ as in~\eqref{eq:sigmaPP}, but their result extends easily to more general functions, as long as $\sigma=\sigma_\epsilon$ depends on an additional parameter $\epsilon$ in such a way that $\sigma_\epsilon(\zeta/\epsilon)\to\varsigma(\zeta)$ for some function $\varsigma$ as $\epsilon\to 0$.
More precisely, when $\s$ is given by~\eqref{eq:sigmaPP}, one has to identify the parameter $\epsilon$ with $1-u$. Then, we have the convergence
\be
\sigma\biggl(\frac{r}{t^{2/3}(1-u)}\biggr)\to\varsigma\left(\frac{r}{t^{2/3}}\right)=\frac 1{1+\e^{-rt^{-2/3}}},\qquad\mbox{as }u\to 1^-,
\ee
which is the scaling limit used in~\cite{BeteaBouttier} to study the edge behavior of the cylindrical Plancherel measure.

\medskip

These type of kernels (and related Fredholm determinants) attracted a great deal of interest in the last 15 years. They first appeared in the field of random matrices~\cite{JohanssonTWGumb}, in the theory of the Kardar--Parisi--Zhang equation~\cite{AmirCorwinQuastel}, and in relation with one-dimensional systems of fermions at finite temperature~\cite{DeanLeDoussalMajumdarSchehr}.
Riemann--Hilbert techniques for the study of related Fredholm determinants have been developed and used in \cite{Bothner,CafassoClaeys,CafassoClaeysRuzza,BothnerCafassoTarricone, CharlierClaeysRuzza}.
In particular, Fredholm determinants on~$L^2(\R)$ of the form
\be
\label{eq:FredAiry}
F_\varsigma(x,t) = \det\bigl(1-\mathbf{1}_{(-xt^{-1/3},+\infty)}\op K^{\sf Ai}_\varsigma \mathbf{1}_{(-xt^{-1/3},+\infty)}\bigr)
\ee
have been shown to satisfy\footnote{Only equation~\eqref{eq:KdV} appears explicitly in \cite[Theorem~1.3]{CafassoClaeysRuzza}. However,~\eqref{eq:bilinearcKdV} can be obtained by substituting $U_\varsigma(x,t)=\partial_x^2\log F(x,t)+x/(2t)$ in~\eqref{eq:KdV} and integrating once in~$x$ thanks to the asymptotics proved in \cite[Section~5]{CafassoClaeysRuzza}.}~\cite{CafassoClaeysRuzza}
\be
\label{eq:bilinearcKdV}
\frac{\pa^2}{\pa t\pa x}\log F_\varsigma(x,t)+\frac xt\frac{\pa^2}{\pa x^2}\log F_\varsigma(x,t)+\left(\frac{\pa^2}{\pa x^2}\log F_\varsigma(x,t)\right)^2+\frac 16\frac{\pa^4}{\pa x^4}\log F_\varsigma(x,t)=0,
\ee
i.e., the function
\be
\label{eq:UKdV}
U_\varsigma(x,t):=\frac{\pa^2}{\pa x^2}\log F_\varsigma(x,t)+\frac x{2t}
\ee
satisfies the Korteweg--de\thinspace Vries equation
\be
\label{eq:KdV}
\frac{\pa}{\pa t}U_\varsigma(x,t)+2U_\varsigma(x,t)\frac{\pa}{\pa x}U_\varsigma(x,t)+\frac 16\frac{\pa^3}{\pa x^3}U_\varsigma(x,t)=0.
\ee

It is instructive to look at how equation~\eqref{eq:bilinearcKdV} (closely related to the bilinear form of the Korteweg--de\thinspace Vries equation) emerges in such continuum limit from equation~\eqref{eq:thm1} (which is in turn related to the 2D Toda equation).
Let the variables $L,s$ be given in terms of variables $x,t$ and of an additional parameter $\epsilon>0$ as
\be
\label{eq:continuumvariables}
s(x,t;\epsilon)=\frac 2{\epsilon^3t^2}-\frac{x}{\epsilon t},\quad L(x,t;\epsilon)=\frac 1{\epsilon^3t^2}.
\ee
Under this transformation, we have
\be
\frac{\pa}{\pa L}=\frac{\pa x}{\pa L}\frac{\pa}{\pa x}+\frac{\pa t}{\pa L}\frac{\pa}{\pa t}
=-\frac 12\epsilon t\left((\epsilon^2xt-4)\frac{\pa}{\pa x}+\epsilon^2t^2\frac{\pa}{\pa t}\right).
\ee
Moreover, let us introduce
\be
\label{eq:continuumF}
F(x,t;\epsilon) := Q_\s(L(x,t;\epsilon),s(x,t;\epsilon)).
\ee
As shown in~\cite{BeteaBouttier}, $F(x,t;\epsilon)$ converges, as~$\epsilon\to 0$, to $F_\varsigma(x,t)$, and we shall now explain how the equation for~$Q_\s$ of Theorem~\ref{thm:1} reduces to~\eqref{eq:bilinearcKdV}.
Expanding at $\epsilon=0$ as
\be
\label{eq:continuumLogF}
\log F(x,t;\epsilon)=f_0(x,t)+\epsilon f_1(x,t)+\epsilon^2 f_2(x,t)+\mathrm O(\epsilon^3),
\ee
the left-hand side of~\eqref{eq:thm1} is
\begin{align}
\nonumber
\biggl(\frac{\pa^2}{\pa L^2}+\frac 1L\frac\pa{\pa L}\biggr)\log F(x,t;\epsilon)+4={}&4+4\epsilon^2t^2\frac{\pa^2}{\pa x^2}f_0(x,t)
+4\epsilon^3t^2\frac{\pa^2}{\pa x^2}f_1(x,t)
\\
&
-2\epsilon^4t^4\biggr(\frac{\pa^2}{\pa t\pa x}f_0(x,t)+\frac xt\frac{\pa^2}{\pa x^2}f_0(x,t)-\frac 2{t^2}\frac{\pa^2}{\pa x^2}f_2(x,t)\biggl)+\mathrm O(\epsilon^5)
\end{align}
and, similarly, the right-hand side of~\eqref{eq:thm1} is
\begin{align}
\nonumber
4\frac{F(x-\epsilon t,t;\epsilon)F(x+\epsilon t,t;\epsilon)}{F(x,t;\epsilon)^2}={}&4+4\epsilon^2t^2\frac{\pa^2}{\pa x^2}f_0(x,t)+4\epsilon^3t^2\frac{\pa^2}{\pa x^2}f_1(x,t)
\\
&+\epsilon^4t^4\biggl(2\biggl(\frac{\pa^2}{\pa x^2}f_0(x,t)\biggr)^2+\frac 13\frac{\pa^4}{\pa x^4}f_0(x,t)+\frac 4{t^2}\frac{\pa^2}{\pa x^2}f_2(x,t)\biggr)+\mathrm O(\epsilon^5).
\end{align}
Terms of order up to~$\epsilon^3$ match identically, whilst at order~$\epsilon^4$ we obtain precisely~\eqref{eq:bilinearcKdV} (whose relation to the Korteweg--de\thinspace Vries equation has been explained above) for the function $F_\varsigma(x,t)=\exp\bigl(f_0(x,t)\bigr)$.

\begin{remark}
After submission, we learned that this scaling limit of the cylindrical Toda equation to the cylindrical KdV equation had already appeared in~\cite{Masuda}.
\end{remark}

\subsection{A discrete version of the integro-differential Painlev\'e II equation}

For the Korteweg--de\thinspace Vries solutions~$U_\varsigma(x,t)$ associated with Fredholm determinants~\eqref{eq:FredAiry} of the finite-temperature Airy kernel~\eqref{eq:airy} there is an identity between the potential and the wave-function\footnote{We thank Percy Deift for pointing out that such relation is the analogue of the {\it Trace Formula} of~\cite{DeiftTrubowitz} for potentials $U_\varsigma(x,t)$ which, unlike the classical setting of op. cit., do not vanish as $x\to\pm\infty$ but rather behave as $x/(2t)$.}.
Namely, provided exponential decay of~$\varsigma$ at~$-\infty$, it is shown in~\cite{CafassoClaeysRuzza} that the solution to the boundary value problem
\be
\label{eq:continuumSchroedinger}
\frac{\pa^2}{\pa x^2}\psi(\zeta;x,t) = (\zeta-2U_\varsigma(x,t))\psi(\zeta;x,t),\qquad\psi(\zeta;x,t)\sim t^{1/6}\Ai(t^{2/3}\zeta-xt^{-1/3}),\ \ x\to-\infty,
\ee
satisfies
\be
\label{eq:traceidentity}
U_\varsigma(x,t)=\frac x{2t}-\frac 1t\int_\R\psi(\eta;x,t)^2\varsigma'(\eta)\d\eta.
\ee
Plugging~\eqref{eq:traceidentity} into~\eqref{eq:continuumSchroedinger} one obtains the so-called {\it integro-differential Painlev\'e II equation} of Amir, Corwin, and Quastel~\cite{AmirCorwinQuastel}
\be
\label{eq:ACQ}
\frac{\pa^2}{\pa x^2}\psi(\zeta;x,t) = \left(\zeta-\frac xt+\frac 2t\int_\R\psi(\eta;x,t)^2\varsigma'(\eta)\d\eta\right)\psi(\zeta;x,t),
\ee
whose solution (subject to the boundary value condition in~\eqref{eq:continuumSchroedinger}) characterizes the distribution $F_\varsigma$, since, by~\eqref{eq:UKdV} and~\eqref{eq:traceidentity},
\be
\label{eq:airyfrednonlocal}
\frac{\pa^2}{\pa x^2}\log F_\varsigma(x,t)=-\frac 1t\int_\R\psi(\eta;x,t)^2\varsigma'(\eta)\d\eta.
\ee

It is worth recalling that in the limit~$\varsigma\to\mathbf 1_{(0,+\infty)}$, the kernel~\eqref{eq:airy} reduces to the classical Airy kernel, the integro-differential Painlev\'e II equation~\eqref{eq:ACQ} reduces to the standard Painlev\'e II equation, and its solution selected by the boundary behavior in~\eqref{eq:continuumSchroedinger} is the Hastings--McLeod solution (in agreement with the celebrated result by Tracy and Widom~\cite{TracyWidom}). 

The next result is an analogous property for the finite-temperature discrete Bessel kernels.

\begin{framed}
\begin{theoremintro}
\label{thm:2}
Let $L>0$ and $s_0:=\min\{s\in\Z':\ Q_\s(L,s)>0\}\in\Z'\cup\{-\infty\}$.
For all $s\in\Z'$ with $s\geq s_0$, we introduce
\be
\label{eq:frakthm}
\mathfrak a(L,s):=\frac{\sqrt{Q_\s(L,s+1)Q_\s(L,s-1)}}{Q_\s(L,s)},\qquad
\mathfrak b(L,s+1):=\frac{\pa}{\pa L}\log\frac{Q_\s(L,s+1)}{Q_\s(L,s)}.
\ee
Then, for all $s\in\Z'$, $s\geq s_0$,
\begin{align}
\label{eq:frakanonlocal}
\mathfrak a^{-1}(L,s)-\mathfrak a(L,s)&=\frac 1L\sum_{l\in\Z'}\bigl(\s(l+1)-\s(l)\bigr)\varphi(l+1;L,s-1)\varphi(l;L,s),
\\
\label{eq:frakbnonlocal}
\mathfrak b(L,s+1)&=\frac 2L\sum_{l\in\Z'}\bigl(\s(l+1)-\s(l)\bigr)\varphi(l+1;L,s)\varphi(l;L,s),
\end{align}
where $\varphi(l;L,s)$ are defined for $l\in\Z'$ and for $s\in\Z'$ with $s\geq s_0-1$ and satisfy the recursion
\be
\label{eq:eqvarphithm}
\mathfrak a(L,s+1)\varphi(l;L,s+1)+\mathfrak a(L,s)\varphi(l;L,s-1)=\left(\frac{l+s+1}L+\frac{\mathfrak b(L,s+1)}2\right)\varphi(l;L,s).
\ee
Moreover, for all $l\in\Z'$ we have
\be
\label{eq:asympvarphi}
\varphi(l;L,s)\sim \sqrt L\,\J_{l+s+1}(2L),\qquad s\to+\infty.
\ee
\end{theoremintro}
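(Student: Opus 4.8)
The plan is to realise $Q_\sigma(L,s)$ as the Fredholm determinant of a \emph{discrete integrable operator} in the variable over which $\sigma$ is summed, to identify $\varphi(\,\cdot\,;L,s)$ as the resolvent-dressing of a Bessel building block, and then to read off the four assertions from: (a) the three-term recursion and the derivative relation for Bessel functions; (b) the rank-one nature of the increment in $s$; and (c) trace identities for the determinant. The key preliminary move is to pass to the internal variable: writing $\op K_\sigma^\Be=\op J\,\sigma\,\op J$, where $\op J$ is induced by $J(a,l)=\J_{a+l}(2L)$ and $\sigma$ denotes multiplication by $\sigma$, the cyclic invariance of Fredholm determinants gives $Q_\sigma(L,s)=\det(1-\op N_s)$, where $\op N_s$ acts on $\ell^2(\Z')$ in the variable $l$ through the kernel
\be
N_s(l,m)=\sigma(l)\sum_{a>s}\J_{a+l}(2L)\J_{a+m}(2L).
\ee
By (the proof of) Lemma~\ref{lemma:integrable}, the inner sum is a shifted discrete Bessel kernel with Christoffel--Darboux representation involving the edge functions $\J_{\bullet+s}(2L)$ and $\J_{\bullet+s+1}(2L)$, so $\op N_s$ is $\sigma$ times a rank-two integrable kernel: this is precisely the structure to which the Borodin--Deift theory applies, with $\sigma$ playing the role of the weight. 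I would then \emph{define} $\varphi$, for $s\geq s_0$ (so that $1-\op N_s$ is invertible), as the (dual) resolvent-dressing of the higher edge source $u_{s+1}(l):=\J_{l+s+1}(2L)$,
\be
\varphi(\,\cdot\,;L,s):=\sqrt L\,\bigl(1-\op N_s^{\mathsf T}\bigr)^{-1}u_{s+1},
\ee
the transpose being dictated by the requirement that the undressed source carry no factor of $\sigma$, as demanded by~\eqref{eq:asympvarphi}.

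The asymptotics~\eqref{eq:asympvarphi} are then immediate: the hypotheses on $\sigma$ (in particular $\sigma\in\ell^1(\Z'\cap(-\infty,0))$) together with the decay of $\J_{a+l}(2L)$ force $\op N_s\to0$ in trace norm as $s\to+\infty$, whence $(1-\op N_s^{\mathsf T})^{-1}\to1$ and $\varphi(l;L,s)\sim\sqrt L\,\J_{l+s+1}(2L)$. For the recursion~\eqref{eq:eqvarphithm} I would dress the Bessel three-term recursion $\J_{k-1}(2L)+\J_{k+1}(2L)=\tfrac kL\J_k(2L)$ with $k=l+s+1$, that is $u_s+u_{s+2}=\tfrac{l+s+1}{L}u_{s+1}$. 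The only difficulty is that the three sources are dressed by three different resolvents; this is handled by the fact that consecutive operators differ by a rank-one term,
\be
\op N_s-\op N_{s+1}=\bigl(\sigma\,\J_{\bullet+s+1}(2L)\bigr)\otimes\J_{\bullet+s+1}(2L),
\ee
so that $(1-\op N_{s\pm1})^{-1}$ are rank-one (Sherman--Morrison) updates of $(1-\op N_s)^{-1}$. Pushing $u_s,u_{s+1},u_{s+2}$ through these updates produces $\varphi(\,\cdot\,;s-1),\varphi(\,\cdot\,;s),\varphi(\,\cdot\,;s+1)$ up to connection coefficients, and matching those coefficients identifies them with $\mathfrak a(L,s),\mathfrak a(L,s+1)$ and $\tfrac{\mathfrak b(L,s+1)}{2}$. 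Here the definitions~\eqref{eq:frakthm} enter through the rank-one determinant identity
\be
\frac{Q_\sigma(L,s+1)}{Q_\sigma(L,s)}=1+\bigl\langle u_{s+1},(1-\op N_s)^{-1}\sigma\,u_{s+1}\bigr\rangle=1+\frac1{\sqrt L}\sum_{l\in\Z'}\sigma(l)\,\varphi(l;L,s)\,\J_{l+s+1}(2L),
\ee
together with its $L$-derivative, which computes $\mathfrak b(L,s+1)$.

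Finally, the nonlocal identities~\eqref{eq:frakanonlocal}--\eqref{eq:frakbnonlocal} come from trace identities. Differentiating $\log Q_\sigma=\tr\log(1-\op N_s)$ gives $\pa_L\log Q_\sigma=-\tr[(1-\op N_s)^{-1}\pa_L\op N_s]$, and the relation $\pa_L\J_k(2L)=\J_{k-1}(2L)-\J_{k+1}(2L)$ turns $\pa_L\op N_s$ into shifts of the Bessel index. A summation by parts in $l$ then transfers these shifts onto the weight, producing the discrete derivative $\sigma(l+1)-\sigma(l)$ and, after the shift $l\mapsto l+1$ is used to align indices, the paired products $\varphi(l+1;L,s)\varphi(l;L,s)$; applied to $\pa_L\log[Q_\sigma(L,s+1)/Q_\sigma(L,s)]$ this yields~\eqref{eq:frakbnonlocal}, and applied to $\mathfrak a^{-1}-\mathfrak a$ (expressed through the ratios above and simplified with the recursion~\eqref{eq:eqvarphithm}) it yields~\eqref{eq:frakanonlocal}. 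I expect this last step to be the main obstacle: one must carry out the Abel summation keeping precise track of the index shifts, so that exactly $\varphi(l+1)\varphi(l)$---and not $\varphi(l)^2$---appears in~\eqref{eq:frakanonlocal}, and one must verify that the boundary terms as $l\to\pm\infty$ vanish, which relies on the decay of $\sigma$ at $-\infty$ and on the Bessel asymptotics controlling the tails at $+\infty$.
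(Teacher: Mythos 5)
Your route---realizing $Q_\sigma$ as a Fredholm determinant in the internal variable $l$ and working with resolvents, Sherman--Morrison updates, and trace identities instead of the paper's discrete Riemann--Hilbert problem and Lax pair---is a genuinely different framework, and several of your ingredients are correct: $Q_\sigma(L,s)=\det(1-\op N_s)$, the rank-one relation $\op N_s-\op N_{s+1}=(\sigma u_{s+1})\otimes u_{s+1}$ with $u_{s+1}(l)=\J_{l+s+1}(2L)$, and the determinant-ratio identity for $Q_\sigma(L,s+1)/Q_\sigma(L,s)$. However, there are two genuine gaps, the first of which makes the stated conclusions false for the object you actually define.

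First, your $\varphi$ has the wrong normalization. Unwinding the resolvent identity shows that your function $\sqrt L\,(1-\op N_s^{\mathsf T})^{-1}u_{s+1}$ is precisely $L^{-1/2}\chi(\,\cdot\,;L,s)$, where $\chi$ is the $(2,1)$ entry of $\Theta$ in~\eqref{eq:Thetastructure}. By~\eqref{eq:eqf}, this function satisfies the \emph{asymmetric} recursion
\be
f(l;L,s+1)+\mathfrak a^2(L,s)\,f(l;L,s-1)=\left(\frac{l+s+1}L+\frac{\mathfrak b(L,s+1)}2\right)f(l;L,s),
\ee
and \emph{not}~\eqref{eq:eqvarphithm}: if you carry out the Sherman--Morrison matching you assert, the connection coefficients come out as $1$ and $\mathfrak a^2(L,s)$, not as $\mathfrak a(L,s+1)$ and $\mathfrak a(L,s)$. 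The theorem's $\varphi$ is~\eqref{eq:defvarphi}, which carries the extra gauge factor $\sqrt{1+\beta(L,s+1)}=\sqrt{Q_\sigma(L,s)/Q_\sigma(L,s+1)}$ (see~\eqref{eq:dlogQ}); this factor is exactly what symmetrizes~\eqref{eq:eqf} into~\eqref{eq:eqvarphithm}. Correspondingly, for your $\varphi$ the identity~\eqref{eq:frakbnonlocal} fails by the factor $Q_\sigma(L,s)/Q_\sigma(L,s+1)$ and~\eqref{eq:frakanonlocal} by $\sqrt{Q_\sigma(L,s-1)/Q_\sigma(L,s+1)}$, neither of which equals $1$ in general. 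The repair is cheap---multiply your definition by $\sqrt{Q_\sigma(L,s)/Q_\sigma(L,s+1)}$, a factor tending to $1$ as $s\to+\infty$, so that~\eqref{eq:asympvarphi} is unaffected---but as written the coefficient matching you claim would not close.

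Second, the asymptotics~\eqref{eq:asympvarphi} are not ``immediate'' from $\op N_s\to 0$ in trace norm. That argument yields only a norm bound, $\|\varphi-\sqrt L\,u_{s+1}\|_{\ell^2}\le C\,\|\op N_s^{\mathsf T}u_{s+1}\|_{\ell^2}$, and this quantity decays no faster than the tail of $\sigma$: taking the admissible choice $\sigma(l)\sim|l|^{-2}$ as $l\to-\infty$, it is only polynomially small in $s$. But~\eqref{eq:asympvarphi} is a \emph{pointwise} statement at the super-exponentially small scale $\J_{l+s+1}(2L)\sim(2\pi(l+s+1))^{-1/2}\bigl(\e L/(l+s+1)\bigr)^{l+s+1}$, cf.~\eqref{eq:1lineasymptotics}; a norm bound on the error cannot rule out that the error, though small in $\ell^2$, dwarfs $\J_{l+s+1}(2L)$ at the fixed site $l$. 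What rescues the statement is a weighted pointwise estimate: uniformly in $b\in\Z'$, one has $|K^\Be(l+s+\tfrac12,b)|\le C(L)\,\J_{l+s+1}(2L)$ once $l+s+1>2L$ (by positivity and monotonicity of $k\mapsto\J_k(2L)$ beyond $k=2L$), so that the kernel of $\op N_s^{\mathsf T}$ itself carries the factor $\J_{l+s+1}(2L)$; one must then propagate this gain through the resolvent with $\ell^1$-control of the dressed vector. This is exactly what the paper's proof spends most of its length establishing (the estimates~\eqref{eq:vaazero}--\eqref{eq:finalclaim}, in particular~\eqref{eq:tobebounded}). Your proposal locates the main difficulty in the Abel summation instead; that part is routine bookkeeping by comparison, while the step you call immediate is where the real analytic work of the theorem lies.
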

\end{framed}

The proof is given in Section~\ref{sec:thm2} and is based on a Lax pair argument. In particular, when $\sigma=\mathbf 1_{\Z'_+}$ we obtain a Lax pair which, although different from the one used by Borodin~\cite{Borodin}, can be equivalently used to prove the connection to the discrete Painlev\'e II equation established in op.~cit. (and independently proved by other methods in~\cite{Baik,AvMToda}).
See Section~\ref{sec:dPII} for more details.

It is worth observing that in the scaling limit~\eqref{eq:continuumvariables} as~$\epsilon\to 0$, the equations of Theorem~\ref{thm:2} formally reduce to above mentioned equations for Fredholm determinants of the finite-temperature Airy kernel.
More precisely, with the notations of~\eqref{eq:continuumvariables},~\eqref{eq:continuumF}, and~\eqref{eq:continuumLogF}, we have, as~$\epsilon\to 0$,
\begin{align}
\label{eq:atoone}
\mathfrak a\bigl(L(x,t;\epsilon),s(x,t;\epsilon)\bigr)&= 1+\frac12\epsilon^2t^2\frac{\pa^2}{\pa x^2}f_0(x,t)+\mathrm O(\epsilon^3),
\\
\label{eq:limitlhs1}
\mathfrak a^{-1}\bigl(L(x,t;\epsilon),s(x,t;\epsilon)\bigr)-\mathfrak a\bigl(L(x,t;\epsilon),s(x,t;\epsilon)\bigr)&\sim-\epsilon^2t^2\frac{\pa^2}{\pa x^2}f_0(x,t),
\\
\label{eq:limitlhs2}
\mathfrak b\bigl(L(x,t;\epsilon),s(x,t;\epsilon)+1\bigr)&\sim-2\epsilon^2t^2\frac{\pa^2}{\pa x^2}f_0(x,t).
\end{align}
Introducing $\psi$ and $\varsigma$ by the $\epsilon\to 0$ expansions
\be
(\epsilon t)^{1/2}\varphi(\zeta/\epsilon;L(x,t;\epsilon),s(x,t;\epsilon))=\psi(\zeta;x,t)+\mathrm O(\epsilon),\qquad \sigma(\zeta/\epsilon)=\varsigma(\zeta)+\mathrm O(\epsilon),
\ee
we also have (by approximating a Riemann--Stieltjes sum with the corresponding integral)
\begin{align}
\label{eq:limitrhs1}
\frac 1L\sum_{l\in\Z'}\bigl(\s(l+1)-\s(l)\bigr)\varphi(l+1;L,s-1)\varphi(l;L,s)\Big|_{L=L(x,t;\epsilon),\ s=s(x,t;\epsilon)}&\sim\epsilon^2t\int_\R\varsigma'(\eta)\psi(\eta;x,t)^2\d\eta,
\\
\label{eq:limitrhs2}
\frac 2L\sum_{l\in\Z'}\bigl(\s(l+1)-\s(l)\bigr)\varphi(l+1;L,s)\varphi(l;L,s)\Big|_{L=L(x,t;\epsilon),\ s=s(x,t;\epsilon)}&\sim 2\epsilon^2t\int_\R\varsigma'(\eta)\psi(\eta;x,t)^2\d\eta,
\end{align}
By~\eqref{eq:frakanonlocal} we have equality of~\eqref{eq:limitlhs1} and ~\eqref{eq:limitrhs1}, and looking at the leading order terms gives~\eqref{eq:airyfrednonlocal}.
Similarly, by~\eqref{eq:frakbnonlocal} we have equality of~\eqref{eq:limitlhs2} and~\eqref{eq:limitrhs2}, and looking at the leading order terms gives again~\eqref{eq:airyfrednonlocal}.
Moreover, using~\eqref{eq:atoone} and~\eqref{eq:limitlhs2}, equation~\eqref{eq:eqvarphithm} reduces to~\eqref{eq:continuumSchroedinger}.
Finally, also the asymptotic relation in~\eqref{eq:asympvarphi} for $\varphi$ formally matches with the one for $\psi$ in~\eqref{eq:continuumSchroedinger} by using~\cite[Lemma~4.4]{BOO}
\be
L^{1/3}\J_{2L+\xi L^{1/3}}(2L)\sim \Ai(\xi),\qquad L\to+\infty.
\ee

\subsection{Organization of the rest of the paper}
In Section~\ref{sec:2} we gather some properties of the discrete Bessel point process and its finite-temperature deformation.
In Section~\ref{sec:RH} we prove a {\it discrete Riemann--Hilbert} characterization of~$Q_\s(L,s)$, following a general strategy developed by Borodin and Deift~\cite{BorodinIMRN} which parallels the theory of integrable operators of Its--Izergin--Korepin--Slavnov~\cite{IIKS} in a discrete setting.
Next, we prove Theorem~\ref{thm:1}, \ref{thm:ic}, and~\ref{thm:2} in Sections~\ref{sec:thm1}, \ref{sec:proofthmic}, and~\ref{sec:thm2}, respectively.
We briefly discuss the connection of our approach to the results of Borodin~\cite{Borodin} relative to the special case $\s=\mathbf 1_{\Z_+'}$ in Section~\ref{sec:dPII}.
An elementary technical lemma which is helpful in the discussion of discrete Riemann--Hilbert problems is deferred to Appendix~\ref{app:partialfraction}.

\section{Preliminaries on the discrete Bessel kernel}\label{sec:2}

The Bessel functions satisfy~\cite[eq.~10.6.1]{DLMF}
\be
\label{eq:diffeqBessel}
L\bigl(\J_{k+1}(2L)+\J_{k-1}(2L)\bigr)=k\J_k(2L),\qquad k\in\C,
\ee
and~\cite[eq.~10.4.1]{DLMF}
\be
\label{eq:symmetryBessel}
\J_{-k}(2L)=(-1)^k\J_k(2L),\qquad k\in\Z.
\ee

\begin{lemma}
As $k\to+\infty$, we have
\begin{align}
\label{eq:1lineasymptotics}
\J_k(2L)&\sim\frac{1}{\sqrt{2\pi k}}\left(\frac{\e L}k\right)^k,&
\frac{\pa}{\pa \kappa}\J_\kappa(2L)\biggr|_{\kappa=k}&\sim \frac{\log(L/k)}{\sqrt{2\pi k}}\left(\frac{\e L}k\right)^k,
\intertext{and, as $k\to+\infty$ through integer values, we have}
\label{eq:2lineasymptotics}
\J_{-k}(2L)&\sim\frac{(-1)^k}{\sqrt{2\pi k}}\left(\frac{\e L}k\right)^k,&
\frac{\pa}{\pa \kappa}\J_\kappa(2L)\biggr|_{\kappa=-k}&\sim (-1)^{k+1}\sqrt{\frac{2\pi}{k}}\left(\frac k{\e L}\right)^k.
\end{align}
\end{lemma}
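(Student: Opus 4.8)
The statement is purely about the large-order asymptotics of the Bessel function $\J_\kappa(2L)$ (and its derivative in the order) for fixed $L>0$, so the plan is to derive everything from a single uniform asymptotic expansion and then specialize. The starting point I would use is the classical large-order formula
\be
\J_k(z)\sim\frac{1}{\sqrt{2\pi k}}\left(\frac{\e z}{2k}\right)^{k},\qquad k\to+\infty,
\ee
valid for fixed $z$; with $z=2L$ this is exactly the first relation in~\eqref{eq:1lineasymptotics}. The cleanest way to obtain it is to start from the integral/series representation $\J_k(2L)=\sum_{m\geq 0}\frac{(-1)^m L^{2m+k}}{m!\,(m+k)!}$, observe that for large $k$ the $m=0$ term dominates, so $\J_k(2L)\sim L^k/k!$, and then invoke Stirling's formula $k!\sim\sqrt{2\pi k}\,(k/\e)^k$ to rewrite $L^k/k!\sim (2\pi k)^{-1/2}(\e L/k)^k$. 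This also makes the error control transparent: the ratio of consecutive terms is $-L^2/\big((m+1)(m+k+1)\big)=\mathrm O(1/k)$, so the tail is a convergent geometric-type series contributing a relative $\mathrm O(1/k)$ correction, which is negligible at the level of the stated leading asymptotics.

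For the derivative in the order, I would differentiate the relation $\J_\kappa(2L)\sim L^\kappa/\Gamma(\kappa+1)$ with respect to $\kappa$. Writing $\J_\kappa(2L)\approx \exp\bigl(\kappa\log L-\log\Gamma(\kappa+1)\bigr)$ near $\kappa=k$ gives
\be
\frac{\pa}{\pa\kappa}\J_\kappa(2L)\sim\bigl(\log L-\psi(\kappa+1)\bigr)\J_\kappa(2L),
\ee
where $\psi$ is the digamma function. Since $\psi(k+1)=\log k+\mathrm O(1/k)$, the prefactor is $\log L-\log k+\mathrm O(1/k)=\log(L/k)+\mathrm O(1/k)$, which combined with the leading asymptotics of $\J_k(2L)$ yields the second relation in~\eqref{eq:1lineasymptotics}. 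To make this rigorous I would differentiate the full series term by term (justified by local uniform convergence of $\sum_m \frac{(-1)^m L^{2m+\kappa}}{m!\,\Gamma(m+\kappa+1)}$ in $\kappa$), confirming that the $m=0$ term again dominates and produces the factor $\log L-\psi(\kappa+1)$.

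The negative-order formulas in~\eqref{eq:2lineasymptotics} follow from the positive-order ones together with the reflection symmetry~\eqref{eq:symmetryBessel}, $\J_{-k}(2L)=(-1)^k\J_k(2L)$, which immediately gives the first relation in~\eqref{eq:2lineasymptotics}. The derivative at negative order is the one genuinely new computation, because $\Gamma(\kappa+1)$ has poles at the negative integers and the naive continuation of $L^\kappa/\Gamma(\kappa+1)$ vanishes to all orders there; the true behavior comes instead from the complementary part of the series. Concretely, I would use the reflection formula for the Gamma function to write $\J_\kappa(2L)$ near $\kappa=-k$ in a form where $1/\Gamma(m+\kappa+1)$ is finite, differentiate, and extract the residue-type contribution from the term that is singular at $\kappa=-k$. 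The key point is that, whereas $\J_{-k}(2L)$ itself is exponentially small like $(\e L/k)^k$, its $\kappa$-derivative at $\kappa=-k$ is exponentially \emph{large} like $(k/\e L)^k$, precisely because differentiating $1/\Gamma(\kappa+1)$ across its pole at $-k$ produces a factor $(-1)^{k}\,k!\,(\text{from }\Gamma\text{'s residue})$ that overwhelms the small power of $L$; tracking the sign and the Stirling factor gives $(-1)^{k+1}\sqrt{2\pi/k}\,(k/\e L)^k$.

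\textbf{Main obstacle.} The routine part is the positive-order asymptotics, where the dominant-term-plus-geometric-tail argument is completely standard. The delicate step, and the one I expect to require the most care, is the derivative at negative integer order in~\eqref{eq:2lineasymptotics}: the leading term of $\J_{-k}$ is exponentially small, so its order-derivative is governed by a subtler mechanism (the pole of $1/\Gamma$ at $-k$) that reverses the direction of exponential growth and changes the sign pattern. Getting the constant $\sqrt{2\pi/k}$ and the sign $(-1)^{k+1}$ right there is the crux, and I would double-check it by differentiating the symmetry relation $\frac{\pa}{\pa\kappa}\J_\kappa(2L)\big|_{\kappa=-k}$ against $\frac{\pa}{\pa\kappa}\bigl((-1)^{\kappa}\J_{-\kappa}(2L)\bigr)$ continued appropriately, using~\eqref{eq:symmetryBessel}, as a consistency cross-check.
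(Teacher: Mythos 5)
Your proposal is correct, but it takes a genuinely different route from the paper. The paper obtains the positive-order asymptotics from the Poisson integral representation $\J_k(2L)=\frac{L^k}{\sqrt\pi\,\Gamma(k+\frac 12)}\int_0^\pi\cos(2L\cos\theta)\e^{2k\log(\sin\theta)}\d\theta$ via Laplace's method plus Stirling, and it differentiates that representation (picking up the digamma term) for the order-derivative; for the delicate negative-order derivative it invokes the connection formula $(-1)^k\frac{\pa}{\pa\kappa}\J_\kappa(2L)\big|_{\kappa=-k}=\pi\mathrm Y_k(2L)-\frac{\pa}{\pa\kappa}\J_\kappa(2L)\big|_{\kappa=k}$ together with the known large-order asymptotics of the Bessel function of the second kind, $\mathrm Y_k(2L)\sim-\sqrt{2/(\pi k)}(\e L/k)^{-k}$. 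You instead work throughout with the power series $\J_\kappa(2L)=\sum_{m\geq 0}\frac{(-1)^mL^{2m+\kappa}}{m!\,\Gamma(m+\kappa+1)}$: dominant-term-plus-Stirling for the positive order, term-by-term differentiation (with the digamma factor) for the positive-order derivative, and, for the negative-order derivative, the observation that for $m\leq k-1$ the function $1/\Gamma(m+\kappa+1)$ vanishes at $\kappa=-k$ but its $\kappa$-derivative there equals $(-1)^{k-m-1}(k-m-1)!$ (by the residue of $\Gamma$ at non-positive integers), so the $m=0$ term produces the exponentially large contribution $(-1)^{k-1}(k-1)!\,L^{-k}\sim(-1)^{k+1}\sqrt{2\pi/k}\,(k/(\e L))^k$, all other terms being subdominant. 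Your approach is more elementary and self-contained — it needs neither Laplace's method nor any facts about $\mathrm Y_k$ — at the cost of some bookkeeping to control the tails of the differentiated series; the paper's approach outsources the hard analysis to standard DLMF asymptotics.

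Two small points to fix when writing this up. First, the factor you quote in the prose, "$(-1)^k\,k!$ from $\Gamma$'s residue", has an index slip: the relevant quantity is $\frac{\d}{\d\kappa}\frac{1}{\Gamma(\kappa+1)}\big|_{\kappa=-k}=(-1)^{k-1}(k-1)!$ (the pole of $\Gamma$ involved is at $1-k$, not $-k$), and it is $(k-1)!\sim\sqrt{2\pi/k}\,(k/\e)^k$, not $k!\sim\sqrt{2\pi k}\,(k/\e)^k$, that yields the prefactor $\sqrt{2\pi/k}$ in the stated formula; your final answer is consistent with the corrected factor, not with the one written. Second, $1/\Gamma$ is entire, so you should speak of differentiating $1/\Gamma(\kappa+1)$ across its \emph{zero} at $\kappa=-k$ (equivalently, across the pole of $\Gamma$), not across "its pole".
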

\begin{proof}
For real $k>-1/2$, we can represent the Bessel function by the Poisson integral~\cite[eq.~10.9.4]{DLMF}
\be
\label{eq:Poisson}
\J_k(2L)=\frac{L^k}{\sqrt\pi\,\Gamma(k+\tfrac 12)}\int_0^\pi\cos(2L\cos\theta)\e^{2k\log(\sin\theta)}\d\theta.
\ee
Since~$\theta\mapsto\log(\sin\theta)$ has a unique non-degenerate maximum at~$\theta=\pi/2$ for $\theta\in(0,\pi)$, it suffices to use Laplace's method to obtain the large~$k$ asymptotics of the integral. Combining with Stirling's asymptotics, we obtain the first relation in~\eqref{eq:1lineasymptotics}.
The first relation in~\eqref{eq:2lineasymptotics} then follows from~\eqref{eq:symmetryBessel}.

Next, by~\eqref{eq:Poisson}, for real $k>-1/2$,
\be
\frac\pa{\pa k}\J_k(2L)=\frac\pa{\pa k}\biggl(\log\frac{L^k}{\sqrt\pi\,\Gamma(k+\tfrac 12)}\biggr)\J_k(2L)+\frac{L^k}{\sqrt\pi\,\Gamma(k+\tfrac 12)}\int_0^\pi 2\log(\sin\theta)\cos(2L\cos\theta)\e^{2k\log(\sin\theta)}\d\theta.
\ee
Using the asymptotics for the digamma function $\G'/\G$, as well as the already established first relation in~\eqref{eq:1lineasymptotics} for the first term, and again Laplace's method for the second term, we obtain (after some computations) the second relation in~\eqref{eq:1lineasymptotics}.
Finally, for the last relation we use that, for $k\in\Z$, we have~\cite[10.2.4]{DLMF},
\be
(-1)^k\frac{\pa}{\pa \kappa}\J_\kappa(2L)\biggr|_{\kappa=-k}=\pi\mathrm Y_k(2L)-\frac{\pa}{\pa \kappa}\J_\kappa(2L)\biggr|_{\kappa=k}
\ee
where $\mathrm Y_k(\cdot)$ is the Bessel function of second kind of order $k$, and it suffices to use the second relation in~\eqref{eq:1lineasymptotics} along with the asymptotics~$\mathrm Y_k(2L)\sim -\sqrt{2/(\pi k)}\,(\e L/k)^{-k}$ as~$k\to+\infty$~\cite[eq.~10.19.2]{DLMF}.
\end{proof}

Let us recall the discrete Bessel kernel~$K^\Be(a,b)=\sum_{l\in\Z'_+}\J_{a+l}(2L)\J_{b+l}(2L)$, as in~\eqref{eq:kernelintegrable}.
It is worth observing that only Bessel functions of integer order appear in this expression. 

\begin{lemma}
\label{lemma:integrable}
We have
\begin{align}
\label{eq:offdiagonal}
K^\Be(a,b)&=L\,\frac{\J_{a-\frac 12}(2L)\J_{b+\frac 12}(2L)-\J_{a+\frac 12}(2L)\J_{b-\frac 12}(2L)}{a-b}\,,&
&a,b\in\Z',\ a\not=b,
\\
\label{eq:diagonal}
K^\Be(a,a)&=L\biggl(\J_{a+\frac 12}(2L)\frac{\pa\J_{a-\frac 12}(2L)}{\pa a}-\J_{a-\frac 12}(2L)\frac{\pa\J_{a+\frac 12}(2L)}{\pa a}\biggr)\,,& &a\in\Z'.
\end{align}
In particular, $K^\Be(a,a)$ is a decreasing function of $a\in\Z'$ satisfying
\be
\label{eq:diagtoone}
K^\Be(a,a)\to 1,\qquad a\to-\infty,\ a\in\Z'.
\ee
\end{lemma}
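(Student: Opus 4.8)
The plan is to obtain the off-diagonal formula~\eqref{eq:offdiagonal} by a Christoffel--Darboux-type telescoping that uses only the three-term recurrence~\eqref{eq:diffeqBessel}, then to deduce the diagonal formula~\eqref{eq:diagonal} as a confluent limit, and finally to read off the monotonicity and the limit~\eqref{eq:diagtoone} directly from the defining series.

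First I would introduce the discrete Wronskian
\be
D(l):=\J_{a+l}(2L)\,\J_{b+l-1}(2L)-\J_{a+l-1}(2L)\,\J_{b+l}(2L),\qquad l\in\Z'.
\ee
Writing $u_l=\J_{a+l}(2L)$ and $v_l=\J_{b+l}(2L)$, the recurrence~\eqref{eq:diffeqBessel} reads $u_{l+1}=\tfrac{a+l}{L}u_l-u_{l-1}$ and $v_{l+1}=\tfrac{b+l}{L}v_l-v_{l-1}$. Substituting these into $D(l+1)=u_{l+1}v_l-u_lv_{l+1}$ and collecting terms, the common factor $u_lv_l$ produces a coefficient $\tfrac{(a+l)-(b+l)}{L}$, and the remainder is exactly $D(l)$, so that
\be
D(l+1)-D(l)=\frac{a-b}{L}\,u_l v_l=\frac{a-b}{L}\,\J_{a+l}(2L)\,\J_{b+l}(2L).
\ee
This is the key identity, the discrete analogue of the Christoffel--Darboux formula attached to the recurrence~\eqref{eq:diffeqBessel}.

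Next I would sum this identity over $l\in\Z'_+$. The left-hand side telescopes to $\lim_{l\to+\infty}D(l)-D(\tfrac12)$, and the boundary term at $+\infty$ vanishes because every Bessel factor decays super-exponentially by~\eqref{eq:1lineasymptotics}, so $D(l)\to 0$; the same estimate guarantees absolute convergence of the defining series. Since $D(\tfrac12)=\J_{a+\frac12}(2L)\J_{b-\frac12}(2L)-\J_{a-\frac12}(2L)\J_{b+\frac12}(2L)$, multiplying by $L/(a-b)$ and rearranging gives~\eqref{eq:offdiagonal}. I would emphasize that, as both~\eqref{eq:diffeqBessel} and the decay hold for arbitrary complex order, this identity is valid for all $a,b\in\C$ with $a\neq b$, the right-hand side extending analytically across the diagonal with a removable singularity. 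The diagonal value~\eqref{eq:diagonal} is then the confluent limit $b\to a$: applying L'Hôpital's rule in the continuous variable $b$ to the indeterminate form $\tfrac{N(b)}{a-b}$ in~\eqref{eq:offdiagonal}, with $N(a)=0$ and $\tfrac{\pa}{\pa b}(a-b)=-1$, differentiates the numerator with respect to the order and yields precisely~\eqref{eq:diagonal}.

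Finally, for the qualitative statements I would bypass~\eqref{eq:diagonal} and work from $K^\Be(a,a)=\sum_{l\in\Z'_+}\J_{a+l}(2L)^2$. As $a\in\Z'$ and $l\in\Z'_+$, the order $a+l$ runs exactly over the integers $\geq a+\tfrac12$, so $K^\Be(a,a)=\sum_{n\in\Z,\,n\geq a+1/2}\J_n(2L)^2$. Monotonicity is then immediate, since $K^\Be(a+1,a+1)-K^\Be(a,a)=-\J_{a+\frac12}(2L)^2\leq 0$. For the limit~\eqref{eq:diagtoone} I would invoke the Parseval/completeness identity $\sum_{n\in\Z}\J_n(2L)^2=1$ (which follows from $\e^{2\i L\sin\theta}=\sum_n\J_n(2L)\e^{\i n\theta}$ and Plancherel); as $a\to-\infty$ the discarded tail $\sum_{n<a+1/2}\J_n(2L)^2$ tends to $0$, whence $K^\Be(a,a)\to 1$. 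The only non-elementary inputs are this completeness identity and the justification that the $b\to a$ confluence commutes with the absolutely convergent series; both rest on the uniform super-exponential decay recorded in~\eqref{eq:1lineasymptotics}. I expect the confluent limit to be the step demanding the most care, as it requires treating the closed form in~\eqref{eq:offdiagonal} as an analytic function of continuous order rather than merely at half-integer values.
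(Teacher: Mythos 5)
Your proof is correct, and for the two identities it is essentially the paper's argument in different clothing: your discrete Wronskian recursion $D(l+1)-D(l)=\tfrac{a-b}{L}\J_{a+l}(2L)\J_{b+l}(2L)$ is exactly the Christoffel--Darboux telescoping the paper performs by multiplying the partial sum by $(a-b)$ and applying \eqref{eq:diffeqBessel}. The differences are in the limiting steps. For the diagonal \eqref{eq:diagonal}, the paper takes $a\to b$ \emph{inside the finite sum} and only then sends $M\to+\infty$, which requires the order-derivative asymptotics (the second relations in \eqref{eq:1lineasymptotics}--\eqref{eq:2lineasymptotics}) to kill the boundary terms; you instead pass to the infinite series first and then take the confluent limit $b\to a$ by L'H\^opital, which requires justifying that the series is continuous (locally uniformly convergent) at the diagonal --- a point you correctly flag, and which does follow from the decay estimates. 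For \eqref{eq:diagtoone} your route is genuinely different and arguably cleaner: the paper inserts \eqref{eq:2lineasymptotics} into \eqref{eq:diagonal}, which involves a somewhat delicate product of a super-exponentially decaying factor with a super-exponentially growing derivative factor, whereas you use the completeness relation $\sum_{n\in\Z}\J_n(2L)^2=1$ together with $K^\Be(a,a)=\sum_{n\geq a+1/2}\J_n(2L)^2$, so the limit is just tail convergence; importantly, since you derive the completeness relation from the Jacobi--Anger expansion and Parseval rather than from Lemma~\ref{lemma:orthoBessel} (whose proof in the paper actually uses \eqref{eq:diagtoone}), there is no circularity. This also makes the monotonicity claim explicit ($K^\Be(a+1,a+1)-K^\Be(a,a)=-\J_{a+\frac12}(2L)^2\leq 0$), which the paper leaves implicit. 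One minor caveat: you assert validity of \eqref{eq:offdiagonal} for all complex $a,b$, but the decay estimates \eqref{eq:1lineasymptotics} are only established in the paper for real order; this is harmless, since your confluent limit only needs real $b$ near $a$.
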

\begin{proof}
Fix $M\in\Z'_+$.
Using~\eqref{eq:diffeqBessel} we compute, for any real $a\not=b$, omitting the argument~$2L$ of the Bessel functions,
\begin{align}
\nonumber
(a-b)\sum_{l \in \mathbb Z'_+\cap[\tfrac 12,M]} \J_{a + l} \J_{b + l}&=\sum_{l \in \mathbb Z'_+\cap[\tfrac 12,M]} \left((a+l) \J_{a + l} \J_{b + l}-(b+l)\J_{a + l} \J_{b + l}\right)
\\
\nonumber
&=L\sum_{l \in \mathbb Z'_+\cap[\tfrac 12,M]} \left(\J_{a + l -1} \J_{b + l}+\J_{a + l +1} \J_{b + l}-\J_{a + l} \J_{b + l -1}-\J_{a + l}\J_{b + l +1}\right)
\\
\label{eq:computationkernelintegrable}
&=L\bigl(\J_{a-\frac 12}\J_{b+\frac 12}+
\J_{a+M+1} \J_{b + M}-\J_{a+\frac 12}\J_{b-\frac 12}-\J_{a + M}\J_{b + M+1}\bigr),
\end{align}
where in the last step we telescope the sum.
Sending $M\to+\infty$ and using the first asymptotics in~\eqref{eq:1lineasymptotics} and~\eqref{eq:2lineasymptotics}, we obtain~\eqref{eq:offdiagonal}.
Sending instead $a\to b$ first and then sending $M\to+\infty$ we obtain~\eqref{eq:diagonal}.
Finally, it suffices to insert~\eqref{eq:2lineasymptotics} in~\eqref{eq:diagonal} to obtain~\eqref{eq:diagtoone}.
\end{proof}

\begin{lemma}
\label{lemma:orthoBessel}
For all~$a,b\in\Z'$ we have $\sum_{l\in\Z'}\J_{a+l}(2L)\J_{b+l}(2L) = \delta_{a,b}$.
\end{lemma}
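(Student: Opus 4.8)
The plan is to recognize this sum as the classical completeness (Parseval) relation for Bessel functions of integer order. First I would reduce to the standard index range: since $a,b\in\Z'$, the substitution $m:=a+l$ sends $l\in\Z'$ to $m\in\Z$, and setting $n:=b-a\in\Z$ the claimed identity becomes
\[
\sum_{m\in\Z}\J_m(2L)\J_{m+n}(2L)=\delta_{n,0}=\delta_{a,b},
\]
so it suffices to prove this purely integer-indexed relation. Note that only Bessel functions of integer order appear, consistently with the fact that the half-integer shifts $a,b$ add to an integer.

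The key input is the Jacobi--Anger expansion $\e^{2\i L\sin\theta}=\sum_{m\in\Z}\J_m(2L)\e^{\i m\theta}$, which identifies $\J_m(2L)$ with the $m$-th Fourier coefficient of the smooth $2\pi$-periodic function $f(\theta):=\e^{2\i L\sin\theta}$. I would then apply Parseval's identity in bilinear form: since $g(\theta):=\sum_{m\in\Z}\J_{m+n}(2L)\e^{\i m\theta}=\e^{-\i n\theta}f(\theta)$ has $\J_{m+n}(2L)$ as its $m$-th Fourier coefficient, and since all the $\J_m(2L)$ are real, one obtains
\[
\sum_{m\in\Z}\J_m(2L)\J_{m+n}(2L)=\frac 1{2\pi}\int_{-\pi}^{\pi}f(\theta)\,\overline{g(\theta)}\,\d\theta=\frac 1{2\pi}\int_{-\pi}^{\pi}|f(\theta)|^2\,\e^{\i n\theta}\,\d\theta.
\]
Because $L>0$ is real, $\sin\theta$ is real, hence $|f(\theta)|^2=|\e^{2\i L\sin\theta}|^2\equiv 1$, and the integral collapses to $\frac1{2\pi}\int_{-\pi}^{\pi}\e^{\i n\theta}\,\d\theta=\delta_{n,0}$, which is exactly the claim.

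The computation is essentially routine; the only points deserving care---and thus the main, mild, obstacle---are the analytic justifications. One must check that Parseval applies, which is immediate since $f$ is $C^\infty$ and periodic, so that $(\J_m(2L))_{m\in\Z}\in\ell^2(\Z)$ (in fact it decays super-exponentially, by the asymptotics established earlier), and one must correctly track the index shift by $n$ when passing from $f$ to $g$. As an alternative avoiding Fourier analysis, I could instead multiply the two Laurent series $\e^{L(t-1/t)}=\sum_{m\in\Z}\J_m(2L)t^m$ and $\e^{L(1/t-t)}=\sum_{k\in\Z}\J_k(2L)t^{-k}$, whose product is identically $1$; the absolutely convergent Cauchy product has coefficient of $t^{-n}$ equal to $\sum_{m\in\Z}\J_m(2L)\J_{m+n}(2L)$, which must therefore equal $\delta_{n,0}$. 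Either route delivers the result.
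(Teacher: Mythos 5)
Your proof is correct, but it takes a genuinely different route from the paper. The paper proves this lemma by the same telescoping trick used for Lemma~\ref{lemma:integrable}: multiplying by $(a-b)$ and using the three-term recurrence $L(\J_{k+1}+\J_{k-1})=k\J_k$ turns the partial sum over $l\in\Z'\cap[N,M]$ into four boundary terms, which are then killed as $M\to+\infty$, $N\to-\infty$ using the large-order asymptotics of the preceding lemma; the diagonal case $a=b$ requires a separate limiting argument ($a\to b$ first) invoking~\eqref{eq:diagtoone}. Your Parseval/generating-function argument instead exploits that $a-b\in\Z$ (so only integer orders and an integer shift $n$ appear, which is exactly the situation of the lemma), and it handles all $n$ at once: unimodularity of $\e^{2\i L\sin\theta}$ does everything, with no asymptotic estimates beyond the automatic rapid decay of Fourier coefficients of a smooth function, and no separate treatment of the diagonal. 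What the paper's approach buys in exchange is uniformity with the rest of Section~\ref{sec:2}: the identical computation~\eqref{eq:computationkernelintegrable} establishes the integrable form~\eqref{eq:offdiagonal} of the kernel (where $a-b$ need not be an integer, so your Fourier argument would not apply), and it reuses the same asymptotic lemma, whereas your route imports a classical but external identity. Both your main argument and your Laurent-series alternative are sound; the only point worth making explicit in the second one is absolute convergence of the two Laurent series on $|t|=1$ (guaranteed by $|\J_k(2L)|\leq L^k/k!$), which justifies the Cauchy product rearrangement.
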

\begin{proof}
Let $M,N\in\Z'$ with $N<0<M$. By using a similar argument as in~\eqref{eq:computationkernelintegrable}, we obtain, for real $a\not=b$,
\be
(a-b)\sum_{l \in\Z'\cap[N,M]}\J_{a+l}\J_{b+l}=L\bigl(\J_{a+N-1}\J_{b+N}-\J_{a+N}\J_{b+N-1}+
\J_{a+M+1} \J_{b + M}-\J_{a + M}\J_{b + M+1}\bigr),
\ee
and so sending $M\to+\infty,N\to-\infty$ and using~\eqref{eq:1lineasymptotics} we obtain the thesis for~$a\not=b$. Sending instead $a\to b$ first, and then sending $M\to+\infty,N\to-\infty$ using~\eqref{eq:2lineasymptotics} and~\eqref{eq:diagtoone} we obtain the thesis for~$a=b$.
\end{proof}

\begin{lemma}
\label{lemma:defineDPP}
We have $0\leq\op K_\s^\Be\leq 1$.
Moreover, if~$\s\in\ell^1(\Z'\cap(-\infty,0))$, the operator $\op P_s\mathcal K_\s^\Be\op P_s$ is trace-class, where $\op P_s$ is the orthogonal projector onto~$\ell^2(\lbrace s+1,s+2,\dots\rbrace)$, for all $s\in\Z'$.
\end{lemma}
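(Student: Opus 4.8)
The plan is to factor the kernel through the Bessel operator and reduce everything to Lemma~\ref{lemma:orthoBessel} and to the super-exponential decay of the Bessel functions recorded in~\eqref{eq:1lineasymptotics}--\eqref{eq:2lineasymptotics}. First I would introduce the operator $\op B$ on $\ell^2(\Z')$ induced by the real symmetric kernel $\J_{a+l}(2L)$. For finitely supported $\phi$ one computes, using Lemma~\ref{lemma:orthoBessel},
\be
\|\op B\phi\|^2=\sum_{a\in\Z'}\Bigl(\sum_{l\in\Z'}\J_{a+l}(2L)\phi(l)\Bigr)^2=\sum_{l,l'\in\Z'}\phi(l)\phi(l')\sum_{a\in\Z'}\J_{a+l}(2L)\J_{a+l'}(2L)=\|\phi\|^2,
\ee
so $\op B$ extends to an isometry of $\ell^2(\Z')$; being real and symmetric, $\op B^*=\op B$ and $\op B^2=\op B^*\op B=1$. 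Denoting by $\Sigma$ the bounded self-adjoint multiplication operator $(\Sigma\phi)(l)=\s(l)\phi(l)$, a direct computation of matrix elements gives the factorization $\op K_\s^\Be=\op B\,\Sigma\,\op B=\op B^*\Sigma\op B$.

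The estimate $0\le\op K_\s^\Be\le 1$ is then immediate: since $\s$ takes values in $[0,1]$ we have $0\le\Sigma\le 1$, and therefore $0\le\op B^*\Sigma\op B\le\op B^*\op B=1$.

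For the trace-class statement I would write $\Sigma=\Sigma^{1/2}\Sigma^{1/2}$, with $\Sigma^{1/2}$ the multiplication by $\sqrt{\s}$, and set $\op A:=\Sigma^{1/2}\op B\op P_s$. Using $\op B^*=\op B$ and $\op P_s^*=\op P_s$ one gets $\op P_s\op K_\s^\Be\op P_s=\op A^*\op A$. A nonnegative operator of this form is trace-class if and only if $\op A$ is Hilbert--Schmidt, so it suffices to estimate the Hilbert--Schmidt norm of $\op A$. Since the kernel of $\op A$ is $\sqrt{\s(l)}\,\J_{l+b}(2L)\,\mathbf 1_{b>s}$,
\be
\|\op A\|_{\mathrm{HS}}^2=\sum_{l\in\Z'}\s(l)\sum_{\substack{b\in\Z'\\ b>s}}\J_{l+b}(2L)^2=\sum_{l\in\Z'}\s(l)\,T(l+s),\qquad T(x):=\sum_{\substack{m\in\Z\\ m>x}}\J_m(2L)^2.
\ee

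To conclude I would split the outer sum. Each tail satisfies $T(l+s)\le\sum_{m\in\Z}\J_m(2L)^2=1$ (Lemma~\ref{lemma:orthoBessel} with $a=b$), so the contribution of $l<0$ is bounded by $\|\s\|_{\ell^1(\Z'\cap(-\infty,0))}<\infty$. For $l>0$ I would instead use $\s(l)\le 1$ and interchange the two summations,
\be
\sum_{l\in\Z'_+}T(l+s)=\sum_{m\in\Z}\J_m(2L)^2\,\#\{l\in\Z'_+:\ l<m-s\}\le\sum_{m\in\Z}\max\{0,m-s\}\,\J_m(2L)^2,
\ee
which converges because $\J_m(2L)^2$ decays super-exponentially as $|m|\to\infty$ by~\eqref{eq:1lineasymptotics}--\eqref{eq:2lineasymptotics} (in particular $\sum_{m\in\Z}|m|\,\J_m(2L)^2<\infty$). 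Hence $\op A$ is Hilbert--Schmidt and $\op P_s\op K_\s^\Be\op P_s$ is trace-class. The main obstacle is precisely this last estimate: neither the projection $\op P_s$ nor the boundedness of $\s$ alone makes the kernel summable (in the main examples $\s\equiv 1$ near $+\infty$), and trace-class-ness rests on combining two independent decay mechanisms---the $\ell^1$-summability of $\s$ at $-\infty$, which controls the regime $l\to-\infty$, and the super-exponential decay of $\J_m(2L)$, which controls $l\to+\infty$. The care needed is to organize the double sum so that each regime is governed by the appropriate mechanism.
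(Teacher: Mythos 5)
Your proof is correct and follows essentially the same route as the paper: your $\op B$ is the paper's unitary involution $\op J$, your operator $\op A$ is precisely the adjoint of the paper's Hilbert--Schmidt factor $\op H_s$, and trace-class-ness is obtained by the same estimate of the Hilbert--Schmidt norm, split in the same way ($\ell^1$-summability of $\s$ governing $l\to-\infty$, super-exponential Bessel decay governing $l\to+\infty$). The only cosmetic differences are that you deduce $0\leq\op K_\s^\Be\leq 1$ by directly conjugating $0\leq\Sigma\leq 1$ rather than via the paper's inequality $(\op K_\s^\Be)^2\leq\op K_\s^\Be$, and you bound the tail sums by $1$ using the orthogonality relation instead of invoking the asymptotics of $K^\Be(l,l)$.
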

\begin{proof}
It follows from Lemma~\ref{lemma:orthoBessel} that the operator $\op J$ induced by the kernel $\J_{a+b}(2L)$ is an unitary involution of~$\ell^2(\Z')$, i.e.~$\op J=\op J^\dagger=\op J^{-1}$. By a slight abuse of notation, denote with $\sigma$ the operator of multiplication by $\s$, i.e. the operator on~$\ell^2(\Z')$ induced by the kernel $\sigma(a)\delta(a,b)$.
Then, by definition,~$\op K_\s^\Be=\op J\s\op J$.
Let $\langle\cdot,\cdot\rangle$ be the scalar product on~$\ell^2(\Z')$: since $0\leq \sigma\leq 1$, we have
\be
\langle(\op K_\s^\Be)^2\psi,\psi\rangle=\langle\s^2\op J\psi,\op J\psi\rangle\leq \langle\s\op J\psi,\op J\psi\rangle = \langle\op K_\s^\Be\psi,\psi\rangle,\qquad\mbox{for all }\psi\in\ell^2(\Z').
\ee
Therefore, $\op K_\s^\Be\geq (\op K_\s^\Be)^2\geq 0$, which also implies~$1-\op K_\s^\Be\geq(1-\op K_\s^\Be)^2\geq 0$.

For the second statement, observe that~$\op P_s\op K_\s^\Be\op P_s=\op H_s\op H^\dagger_s$ where~$\op H_s$ is induced by kernel 
\be
\label{eq:H}
H_s(a,b)=\mathbf 1_{a>s}\J_{a+b}(2L)\sqrt{\s}(b),\qquad a,b\in\Z'.
\ee
For a fixed~$s\in\Z'$, the operator~$\op H_s$ is Hilbert--Schmidt on~$\ell^2(\Z')$ if and only if
\be
\label{eq:proofcondition}
\sum_{a,b\in\Z'}|H_s(a,b)|^2
=\sum_{a\in\Z'_+}\sum_{b\in\Z'}\mathbf 1_{a>s}\J_{a+b}(2L)^2\s(b)
=L\sum_{l\in\Z'}\s(l-s-\tfrac 12)K^\Be(l,l)<+\infty.
\ee
The convergence of the latter series at~$l \to+\infty$ follows from~\eqref{eq:diagonal} and the first asymptotic relations in~\eqref{eq:1lineasymptotics} and~\eqref{eq:2lineasymptotics}, along with~$0\leq\sigma\leq 1$.
The convergence at~$l \to -\infty$ follows instead by~\eqref{eq:diagonal} and the second asymptotic relations in~\eqref{eq:1lineasymptotics} and~\eqref{eq:2lineasymptotics}, along with the summability assumption on~$\sigma$.
\end{proof}

It follows from this lemma and the Macchi--Soshnikov criterion~\cite[Theorem~3]{SoshnikovDPP} that there exists a unique determinantal point process on~$\Z'$ whose correlation kernel is~$K_\s^\Be$.
It also follows from the general theory of determinantal point processes, e.g., from~\cite[Theorem~4]{SoshnikovDPP}, that this process has almost surely a largest particle $a_{\sf max}$, whose distribution is given by the Fredholm determinant as in~\eqref{eq:Fred}.

\section{Discrete Riemann--Hilbert characterization of \texorpdfstring{$Q_\s$}{Q}}\label{sec:RH}

Let us introduce the operator $\op M_s$ on $\ell^2(\Z')$, for $s\in\Z'$, induced by the kernel
\be
\label{eq:Lkernel}
M_s(a,b)=\sqrt{\s}\bigl(a-s-\tfrac 12\bigr)K^\Be(a,b)\sqrt{\s}\bigl(b-s-\tfrac 12\bigr),\qquad a,b\in\Z'.
\ee
where $K^\Be$ is as in~\eqref{eq:kernelintegrable}.
The operator~$\op M_s$ is of {\it discrete integrable form}~\cite{BorodinIMRN,Borodin}, namely, using~\eqref{eq:offdiagonal} the off-diagonal entries of the kernel can be expressed as
\be
\label{eq:Mintegrable}
M_s(a,b)=\frac{\mathbf f^\top(a)\mathbf g(b)}{a-b},\qquad a,b,\in\Z',\ a\not=b,
\ee
where
\be
\label{eq:fg}
\mathbf f(a):=\sqrt{\s}(a-s-\tfrac 12)\begin{pmatrix}
\J_{a-\frac 12}(2L) \\ L\J_{a+\frac 12}(2L)
\end{pmatrix},\qquad
\mathbf g(b):=\sqrt{\s}(b-s-\tfrac 12)\begin{pmatrix}
L \J_{b+\frac 12}(2L) \\ -\J_{b-\frac 12}(2L)\end{pmatrix}\,.
\ee
Using~\eqref{eq:diagonal} we can express the diagonal entries as
\be
\label{eq:Mdiagonal}
M_s(a,a)=\s(a-s-\tfrac 12) K^\Be(a,a)=
L\,\s(a-s-\tfrac 12)\,\biggl(\J_{a+\frac 12}(2L)\frac{\pa \J_{a-\frac 12}(2L)}{\pa a}-\J_{a-\frac 12}(2L)\frac{\pa \J_{a+\frac 12}(2L)}{\pa a}\biggr).
\ee

\begin{lemma}
\label{lemma:123}
\noindent{{\it(i)}}
The operator $\op M_s$ is trace-class and we have
\be
\label{eq:lemma1}
Q_\s(L,s)=\det(1-\op M_s).
\ee

\noindent{{\it(ii)}}
The identity~\eqref{eq:multstatPP} holds true.

\noindent{{\it(iii)}}
For all $s\in\Z'$ such that $Q_\s(L,s)>0$, we have
\be
\label{eq:lemma2}
\frac{Q_\s(L,s-1)}{Q_\s(L,s)}-1=\tr\left((1-\op M_s)^{-1}\op N_s\right),
\ee
where $\op N_s$ is the rank one operator on $\ell^2(\Z')$ induced by the kernel
\be
\label{eq:rankone}
N_s(a,b)=\sqrt{\s}(a-s-\tfrac 12)\,\J_{a-\frac 12}(2L)\,\sqrt{\s}(b-s-\tfrac 12)\,\J_{b-\frac 12}(2L).
\ee
\end{lemma}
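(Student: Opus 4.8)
The plan is to establish the three statements in sequence, using the factorization of the kernel and the conjugation-invariance of Fredholm determinants and traces.

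For part (i), I would first observe that the similarity transformation by the diagonal (multiplication) operator $D_s$ with kernel $\sqrt{\s}(a-s-\tfrac12)\,\delta(a,b)$ identifies $\op M_s$ with $D_s^{-1}(\op P_s\op K_\s^\Be\op P_s)D_s$ after a shift; more concretely, comparing \eqref{eq:Lkernel} with \eqref{eq:K}--\eqref{eq:kernelintegrable} one checks directly that $M_s(a,b)=\sqrt{\s}(a-s-\tfrac12)K^\Be(a,b)\sqrt{\s}(b-s-\tfrac12)$ is, up to the unitary shift $a\mapsto a-s-\tfrac12$ on $\ell^2(\Z')$, the same kernel as the one whose Fredholm determinant defines $Q_\s(L,s)$ in \eqref{eq:Fred}. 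The point is that the projection $\op P_s$ onto $\ell^2(\{s+1,s+2,\dots\})$ combined with the Bessel weight $\s(\cdot)$ appearing in $\op K_\s^\Be=\op J\s\op J$ can be rearranged, using Lemma \ref{lemma:defineDPP} and the explicit form \eqref{eq:H} of the Hilbert--Schmidt factorization, so that $\det(1-\op P_s\op K_\s^\Be\op P_s)=\det(1-\op M_s)$. Trace-class-ness of $\op M_s$ follows from the same Hilbert--Schmidt estimate \eqref{eq:proofcondition} already established in the proof of Lemma \ref{lemma:defineDPP}, writing $\op M_s$ itself as a product of two Hilbert--Schmidt operators built from \eqref{eq:fg}. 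This gives \eqref{eq:lemma1}.

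For part (ii), the identity \eqref{eq:multstatPP} is the gap-probability formula expressed as a multiplicative statistic over the Poissonized Plancherel ensemble. I would deduce it from \eqref{eq:Fred} together with the combinatorial interpretation recalled after \eqref{eq:PoissPlancherel}: under the map $\lambda\mapsto\{\lambda_i-i+\tfrac12\}_{i\geq1}$, the event $a_{\sf max}\leq s$ and, more generally, the determinantal thinning induced by $\s$ translate into the product $\prod_i(1-\s(\lambda_i-i-s))$. The cleanest route is to use the general fact that for a determinantal process with kernel $\op K=\op J\s\op J$ arising as a $\s$-thinning of the Plancherel process (kernel $\op J\,\mathbf 1_{\Z'_+}\op J$), the Fredholm determinant $\det(1-\op P_s\op K_\s^\Be\op P_s)$ equals the expectation over the un-thinned configuration $\{\lambda_i-i+\tfrac12\}$ of $\prod_i(1-\s(\text{shifted position}))$; this is precisely the probabilistic meaning of thinning, and the shift by $s+\tfrac12$ accounts for the projection $\op P_s$.

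For part (iii), which I expect to be the main obstacle, the goal is the ratio formula \eqref{eq:lemma2}. I would start from \eqref{eq:lemma1} and compute $Q_\s(L,s-1)/Q_\s(L,s)=\det(1-\op M_{s-1})/\det(1-\op M_s)$. The key is that passing from $s$ to $s-1$ enlarges the projection range by the single site $a=s+\tfrac12$, so $\op M_{s-1}$ differs from $\op M_s$ by a rank-one (or finite-rank) modification localized at that site, weighted by $\sqrt{\s}(-\tfrac12)$ and involving $\J_{(s+\frac12)\pm\frac12}(2L)$. Using the identity $\det(1-\op M_{s-1})=\det(1-\op M_s)\det\!\big(1-(1-\op M_s)^{-1}(\op M_{s-1}-\op M_s)\big)$ and the fact that for a rank-one perturbation $\det(1-A)=1-\tr A$, the ratio collapses to $1-\tr((1-\op M_s)^{-1}\op N_s)$ once one verifies that $\op M_{s-1}-\op M_s$ equals $-\op N_s$ modulo the structure that reproduces \eqref{eq:rankone}. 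The delicate points will be (a) correctly bookkeeping the shift in the $\sqrt{\s}$-weights and the Bessel indices between the two kernels, and (b) confirming that the relevant difference is genuinely the rank-one kernel $N_s(a,b)=\sqrt{\s}(a-s-\tfrac12)\J_{a-\frac12}(2L)\sqrt{\s}(b-s-\tfrac12)\J_{b-\frac12}(2L)$ rather than a rank-two object; here the telescoping identity \eqref{eq:offdiagonal} for $K^\Be$ and the symmetry \eqref{eq:symmetryBessel} of the Bessel functions should conspire to cancel one of the two terms. I would handle this by writing $\op M_s=\op A_s\op A_s^\dagger$ with $\op A_s$ built from \eqref{eq:fg}, expressing both $\op M_{s-1}$ and $\op M_s$ in terms of the shift operator on $\ell^2(\Z')$, and isolating the boundary contribution at the site $s+\tfrac12$.
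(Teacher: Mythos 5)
Parts (i) and (ii) of your plan do follow the paper's route, but two framing claims should be corrected. In (i), $\op M_s$ is neither a diagonal similarity transform of $\op P_s\op K_\s^\Be\op P_s$ nor ``up to a unitary shift the same kernel'': one operator has $\s$ inside the sum over $l$ and the projection outside, the other the reverse. The actual mechanism --- which your appeal to \eqref{eq:H} points to, and which is the paper's proof --- is the factorization $\op P_s\op K_\s^\Be\op P_s=\op H_s\op H_s^\dagger$ together with $\det(1-\op H_s\op H_s^\dagger)=\det(1-\op H_s^\dagger\op H_s)$; only after this swap does a unitary shift identify $\op H_s^\dagger\op H_s$ with $\op M_s$ (and give trace-class-ness). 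In (ii), the deformed process is \emph{not} a $\s$-thinning of the Plancherel process: thinning the kernel $K^\Be=\op J\mathbf 1_{\Z'_+}\op J$ by $\s$ produces $\sqrt\s\,K^\Be\sqrt\s$, not $\op J\s\op J$. What is true --- and is precisely \eqref{eq:multstatPP} --- is that the gap probability of the deformed process equals the emptiness probability of a thinned, shifted Plancherel process; this follows from the swap in (i) combined with the standard multiplicative-functional identity for determinantal processes (the paper's citation), so invoking ``the probabilistic meaning of thinning'' directly for the kernel \eqref{eq:K} assumes the very identity to be proved.

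The genuine gap is in (iii). Your key claim --- that $\op M_{s-1}-\op M_s$ is a rank-one (or finite-rank) perturbation localized at one added site of the projection window --- is false: in the $\op M_s$ picture there is no projection window; the whole $s$-dependence of \eqref{eq:Lkernel} sits in the weights $\sqrt\s(\cdot-s-\tfrac12)$, which change at \emph{every} site of $\Z'$ when $s\mapsto s-1$. For general $\s$ (e.g.\ \eqref{eq:sigmaPP}) the difference $\op M_{s-1}-\op M_s$ is not of finite rank and is certainly not $\pm\op N_s$: already on the diagonal it equals $\bigl(\s(a-s+\tfrac12)-\s(a-s-\tfrac12)\bigr)K^\Be(a,a)$. (The one-added-site intuition belongs to the $\op P_s\op K_\s^\Be\op P_s$ picture, where the perturbation is a rank-two modification at the site $s$ --- not $s+\tfrac12$, which is not even in $\Z'$ --- and leads to a Schur-complement formula, not to \eqref{eq:lemma2}.) The missing idea, which is the heart of the paper's proof, is to conjugate \emph{first} by the unitary shift $\op T$ so as to re-align the weights: writing $\op M_s=\op S_s\op K^\Be\op S_s$ with $\op S_s$ multiplication by $\sqrt\s(\cdot-s-\tfrac12)$, one has $\op S_{s-1}=\op T^{-1}\op S_s\op T$, hence $\det(1-\op M_{s-1})=\det\bigl(1-\op S_s\op T\op K^\Be\op T^{-1}\op S_s\bigr)$, and only now is the deviation from $\op M_s$ rank one, because the semi-infinite sum \eqref{eq:kernelintegrable} telescopes under translation: $K^\Be(a-1,b-1)-K^\Be(a,b)=\J_{a-\frac12}(2L)\,\J_{b-\frac12}(2L)$, which is exactly the kernel \eqref{eq:rankone} up to the weights. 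Neither the closed form \eqref{eq:offdiagonal} nor the symmetry \eqref{eq:symmetryBessel} plays any role here, and there is no rank-two object to cancel. Finally, your signs do not cohere: the premise $\op M_{s-1}-\op M_s=-\op N_s$ would produce the ratio $1+\tr\bigl((1-\op M_s)^{-1}\op N_s\bigr)$, yet you conclude $1-\tr\bigl((1-\op M_s)^{-1}\op N_s\bigr)$. The sign can be pinned down independently: $Q_\s(L,\cdot)$ is non-decreasing in $s$, so the left-hand side of \eqref{eq:lemma2} is $\le 0$, while $\op N_s\ge 0$ and $(1-\op M_s)^{-1}\ge 1$ force $\tr\bigl((1-\op M_s)^{-1}\op N_s\bigr)\ge 0$; hence the telescoping computation above, which gives the shift-conjugate of $\op M_{s-1}$ as $\op M_s+\op N_s$ and therefore the ratio $1-\tr\bigl((1-\op M_s)^{-1}\op N_s\bigr)$, is the only tenable outcome. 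In other words, your final formula is the correct identity, but it is reached from an incompatible premise, and it establishes \eqref{eq:lemma2} only up to this sign, which deserves explicit discussion rather than silent adjustment.
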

\begin{proof}
\noindent{{\it(i)}}
We have
\be
\label{eq:computationlemma}
Q_\sigma(L,s)=\det(1-\op P_s\op K^\Be_\s \op P_s)=\det(1-\op H_s\op H_s^\dagger)=\det(1-\op H_s^\dagger\op H_s),
\ee
where $\op H_s$ is induced by the kernel~\eqref{eq:H}.
Let $\op T$ be the shift operator on $\ell^2(\Z')$, induced by the kernel $T(a,b)=\delta_{a,b+1}$.
It is straightforward to verify that $\op H_s^\dagger\op H_s=\op T^{s+\tfrac 12}\op M_s\op T^{-s-\tfrac 12}$.
This identity implies that $\op M_s$ is trace-class, and, by combining it with \eqref{eq:computationlemma}, we obtain~\eqref{eq:lemma1}.

\noindent{{\it(ii)}}
We have, by the previous point,
\be
Q_\s(L,s) = \det(1-\sigma(\cdot-s-\tfrac 12)\op K^\Be)
\ee
where~$\sigma(\cdot-s-\tfrac 12)$ denotes the multiplication operator induced by the kernel~$\sigma(a-s-\tfrac 12)\delta_{a,b}$.
Then~\eqref{eq:multstatPP} follows from a general property of determinantal point processes (e.g., see~\cite[eq.~(11.2.4)]{BorodinDPP}).

\noindent{{\it(iii)}}
Let~$\op S_s$ be the operator on~$\ell^2(\Z')$ induced by the kernel~$S_s(a,b)=\sqrt\s(a-s-\tfrac 12)\delta_{a,b}$.
Then, $\op M_s=\op S_s\op K^\Be\op S_s$ where $\op K^\Be$ is the operator induced by the discrete Bessel kernel $K^\Be$, defined in~\eqref{eq:kernelintegrable}.
Recalling the shift operator~$\op T$, induced by the kernel $T(a,b)=\delta_{a,b+1}$, we observe that $\op S_{s-1}=\op T^{-1}\op S_s\op T$ so that
\be
Q_\s(L,s-1)=\det\left(1-\op S_s\op T\op K^\Be\op T^{-1}\op S_s\right)=
\det\left(1-\op M_s+\op S_s\left(\op K^\Be-\op T\op K^\Be\op T^{-1}\right)\op S_s\right).
\ee
From~\eqref{eq:kernelintegrable}, we note that $\op N_s:=\op S_s\left(\op K^\Be-\op T\op K^\Be\op T^{-1}\right)\op S_s$ is the rank one operator induced by the kernel~\eqref{eq:rankone}.
As long as $Q_\s(L,s)\not=0$, we have
\begin{align}
\nonumber
Q_\s(L,s-1)=\det(1-\op M_s+\op N_s)&=\det(1-\op M_s)
\,\det\bigl(1+(1-\op M_s)^{-1}\op N_s\bigr)
\\
&=Q_\s(L,s)\biggl(1+\tr\bigl((1-\op M_s)^{-1}\op N_s\bigr)\biggr),
\end{align}
by using a standard formula for the determinant of a rank-one perturbation of the identity.
\end{proof}

The next key step is to apply the discrete version of Its--Izergin--Korepin--Slavnov procedure \cite{IIKS}, as developed for instance by Borodin~\cite{Borodin}.
This approach provides us with an effective way of computing the resolvent operator $\op R_s:=(1-\op M_s)^{-1}-1$ that proves useful to investigate~\eqref{eq:lemma2}.
Indeed, the main result of this theory (Theorem~\ref{thm:Borodin} below, following from general results of Borodin) is that the resolvent operator $\op R_s$ is also induced by a kernel of integrable form expressed through a meromorphic $2\times 2$ matrix-valued function $Y(\cdot)$ (parametrically depending on $\s,s,L$ as well) which is uniquely characterized by the following \emph{discrete Riemann--Hilbert} (RH) conditions.

\subsubsection*{Discrete RH problem for \texorpdfstring{$Y$}{Y}}
\begin{itemize}
\item[(a)] {\em $Y(z)$ is a $2\times 2$ matrix-valued meromorphic function of $z$ with simple poles at $\Z'$ only.}
\item[(b)] {\em For all $a\in\Z'$, the function
\be 
Y_a^\reg(z):=Y(z)\left(I-\frac{W_Y(a)}{z-a}\right)
\ee
has a removable singularity at $z=a$, where
\be
\label{eq:WY}
W_Y(a):=\frac{\mathbf f(a)\mathbf g^\top(a)}{1-M_{s}(a,a)},\qquad a\in\Z'.
\ee
Here, $\mathbf f(a)$, $\mathbf g(a)$, and $M_s(a,a)$ are given explicitly in~\eqref{eq:fg} and~\eqref{eq:Mdiagonal}.
}
\item[(c)] {\em We have $\lim_{n\to+\infty}\sup_{|z|=n}|Y(z)-I|=0$, where the limit is taken over integer values of $n$, $I$ denotes the $2\times 2$ identity matrix and $|\cdot|$ denotes any matrix-norm.}
\end{itemize}

Before describing how $Y$ allows us to express the resolvent operator $\op R_s$, we make a few observations.

\begin{remark}
\label{remark:dRHp}
\begin{itemize}
\item[{\it (i)}] The usual formulation of condition (b) in the discrete RH problem is the slightly different but completely equivalent requirement that, for all $a\in\Z'$, the limit $\lim_{z\to a}Y(z)W_Y(a)$ exists and that
\be
\label{eq:bstandard}
\lim_{z\to a}Y(z)W_Y(a) = \res{z=a}Y(z)\,\d z.
\ee

\item[{\it (ii)}] Since~$0\leq \s(a)\leq 1$ and~$K^\Be(a,a)<1$ for all~$a\in\Z'$ (see Lemma~\ref{lemma:integrable}), we get $1-M_s(a,a)>0$ for all $a\in\Z'$.
In particular, \eqref{eq:WY} is well defined.

\item[{\it (iii)}] For any solution $Y$ to the above discrete RH problem, we have $\det Y(z)=1$ identically in $z$. Indeed, $\mathbf f^\top(a)\mathbf g(a)=0$ implies $W_Y^2(a)=0$, hence $\det Y(z)=\det Y_a^\reg(z)$ for all $a\in\Z'$ and so $\det Y(z)$ extends to an entire function of $z$.
By condition~(c) together with the maximum modulus theorem we conclude that $\det Y(z)=1$ identically in $z$.

\item[{\it (iv)}] The solution $Y$ to the above discrete RH problem is unique, if any exists.
Indeed, for any two solutions $Y(z)$ and $\wt Y(z)$, the matrix $T(z):=\wt Y(z)Y^{-1}(z)$ has removable singularities at $\Z'$ by condition~(b), because $T(z)=\wt Y_a^\reg(z)(Y_a^\reg)^{-1}(z)$ for all $a\in\Z'$, hence $T(z)$ extends to an entire matrix function of $z$.
By condition~(c) together with the maximum modulus theorem, we infer that $T(z)=I$ identically in $z$.

\item[{\it (v)}] The condition~(b) in the discrete RH problem for $Y$ implies that $Y(z)$ has the following Laurent expansion near $z=a\in\Z'$:
\be
\label{eq:LaurentY}
Y(z)=C_Y(a)\left(\frac{W_Y(a)}{z-a}+I+Y_1(a)(z-a)+O\bigl((z-a)^2\bigr)\right),
\ee
where $C_Y(a)$ is an invertible matrix.
In particular, although $Y(z)$ has a pole as $z\to a\in\Z'$, the limits $\lim_{z\to a}Y(z)\mathbf f(a)$ and $\lim_{z\to a}Y^{-\top}(z)\mathbf g(a)$ for $a\in\Z'$ exist and are finite.
In the interest of lighter notations, we suppress the limit notation in such expressions, namely for $a\in\Z'$ we define
\be
Y(a)\mathbf f(a):=\lim_{z\to a}Y(z)\mathbf f(a),\quad Y^{-\top}(a)\mathbf g(a):=\lim_{z\to a}Y^{-\top}(z)\mathbf g(a).
\ee
Similarly, for $a\in\Z'$ we also define
\be
Y'(a)\mathbf f(a):=\lim_{z\to a}\frac{\d Y(z)}{\d z}\mathbf f(a)=C_Y(a)Y_1(a)\mathbf f(a).
\ee
Similarly, the inverse matrix $Y^{-1}$ has the Laurent expansion
\be
\label{eq:LaurentYinverse}
Y^{-1}(z)=\left(-\frac{W_Y(a)}{z-a}+I+\wt Y_1(a)(z-a)+O\bigl((z-a)^2\bigr)\right)\wt C_Y(a),
\ee
where $\wt C_Y(a)$ is an invertible matrix, which does not necessarily coincide with $C_Y^{-1}(a)$.

\item[{\it (vi)}]
In what follows we shall need also the subleading terms in the expansion at $z\to\infty$:
\be
\label{eq:Yinfty}
Y(z)=I+\begin{pmatrix}
\a & \b \\ \g & -\a
\end{pmatrix}z^{-1}+O(z^{-2}),
\ee
for functions $\a=\a(L,s)$, $\b=\b(L,s)$ and $\g=\g(L,s)$. This matrix is traceless because $\det Y(z)=1$ identically in $z$.
Here, as in condition~(c) of the discrete RH problem, $|z|\to+\infty$ through integer values.
\end{itemize}
\end{remark}

\begin{lemma}
\label{lemma:structure}
Fix $a\in\Z'$.
Let $C_Y(a)$ and $Y_1(a)$ be as in~\eqref{eq:LaurentY}, and let $c_Y(a):=\det C_Y(a)$.
We have
\be
\label{eq:structuredet}
\frac{{\mathbf g}^\top(a)Y_1(a){\mathbf f}(a)}{1-M_{s}(a,a)} = \frac{c_Y(a)-1}{c_Y(a)}
\ee
and, for some $d_Y(a)\in\C$,
\be
\label{eq:structureCC}
\wt C_Y(a) C_Y(a)=c_Y(a)I+d_Y(a)W_Y(a).
\ee
\end{lemma}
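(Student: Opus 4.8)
The plan is to extract both identities directly from the local Laurent structure of $Y$ near $z=a$, using the defining property that $Y_a^\reg(z)=Y(z)(I-W_Y(a)/(z-a))$ is regular at $z=a$ together with the rank-one, nilpotent nature of $W_Y(a)$. First I would record the elementary algebraic facts about $W_Y(a)=\mathbf f(a)\mathbf g^\top(a)/(1-M_s(a,a))$: since $\mathbf f^\top(a)\mathbf g(a)=0$ (this is the computation behind Remark~\ref{remark:dRHp}(iii)), we have $W_Y(a)^2=0$, so $W_Y(a)$ is a rank-one nilpotent whose image is spanned by $\mathbf f(a)$ and whose kernel is the annihilator of $\mathbf g(a)$. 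From $\eqref{eq:LaurentY}$ one reads off $\res_{z=a}Y(z)\,\d z=C_Y(a)W_Y(a)$, and the condition that $Y_a^\reg$ be regular forces the relation $C_Y(a)W_Y(a)=\bigl(\text{constant term of }Y\bigr)W_Y(a)=C_Y(a)W_Y(a)$, consistent with $\eqref{eq:LaurentY}$; the real content sits in the \emph{next} Laurent order.

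For $\eqref{eq:structuredet}$ I would compute $\det Y_a^\reg(a)$ in two ways. On the one hand $\det Y(z)\equiv 1$ by Remark~\ref{remark:dRHp}(iii) and $\det\bigl(I-W_Y(a)/(z-a)\bigr)=1$ because $W_Y(a)$ is traceless (being nilpotent) and nilpotent, so $\det Y_a^\reg(z)\equiv 1$ and in particular $\det Y_a^\reg(a)=1$. On the other hand, writing $Y_a^\reg(a)=\lim_{z\to a}Y(z)\bigl(I-W_Y(a)/(z-a)\bigr)$ and substituting the expansion $\eqref{eq:LaurentY}$, the pole term and the $O(z-a)$ term combine: the $(z-a)^{-1}$ part $C_Y(a)W_Y(a)$ is cancelled by the $-C_Y(a)W_Y(a)/(z-a)$ produced by $I\cdot(-W_Y(a)/(z-a))$, while the finite value becomes
\be
Y_a^\reg(a)=C_Y(a)\bigl(I-Y_1(a)W_Y(a)\bigr).
\ee
Taking determinants and using $\det Y_a^\reg(a)=1$ and $c_Y(a)=\det C_Y(a)$ gives $\det\bigl(I-Y_1(a)W_Y(a)\bigr)=1/c_Y(a)$. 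Since $Y_1(a)W_Y(a)$ is rank one, $\det(I-Y_1(a)W_Y(a))=1-\tr\bigl(Y_1(a)W_Y(a)\bigr)$, and inserting $W_Y(a)=\mathbf f(a)\mathbf g^\top(a)/(1-M_s(a,a))$ turns the trace into $\mathbf g^\top(a)Y_1(a)\mathbf f(a)/(1-M_s(a,a))$; rearranging yields exactly $\eqref{eq:structuredet}$.

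For $\eqref{eq:structureCC}$ I would multiply the two Laurent expansions $\eqref{eq:LaurentY}$ and $\eqref{eq:LaurentYinverse}$ and impose $Y^{-1}(z)Y(z)\equiv I$. Matching the $(z-a)^{-1}$ coefficient gives $\wt C_Y(a)$ and $C_Y(a)$ intertwined through $W_Y(a)$, while the constant ($z-a)^0$ coefficient relates $\wt C_Y(a)C_Y(a)$ to $c_Y(a)I$ plus a correction proportional to $W_Y(a)$; because $W_Y(a)^2=0$, all higher corrections collapse and the correction term is a single scalar multiple $d_Y(a)W_Y(a)$, which is precisely $\eqref{eq:structureCC}$. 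The main obstacle I anticipate is purely bookkeeping rather than conceptual: keeping track of which side each $C_Y(a)$ or $\wt C_Y(a)$ sits on (the expansions are left-normalized for $Y$ and right-normalized for $Y^{-1}$, and $\wt C_Y(a)\neq C_Y(a)^{-1}$ in general), and confirming that the $W_Y(a)^2=0$ identity is what forces the correction to be rank one with a \emph{single} undetermined scalar $d_Y(a)$ rather than a general rank-one matrix. Once the nilpotency is used consistently, both identities drop out of the order-by-order matching.
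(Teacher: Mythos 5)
Your treatment of \eqref{eq:structuredet} is correct, and it is essentially the paper's own argument in a cleaner dress: the paper likewise exploits $\det Y\equiv 1$ together with the expansion \eqref{eq:LaurentY}, but does so by conjugating with an orthogonal matrix built from $\mathbf f(a),\mathbf g(a)$ and extracting a $(2,1)$ entry, whereas your evaluation $Y_a^\reg(a)=C_Y(a)\bigl(I-Y_1(a)W_Y(a)\bigr)$ followed by the rank-one identity $\det\bigl(I-Y_1(a)W_Y(a)\bigr)=1-\tr\bigl(Y_1(a)W_Y(a)\bigr)=1-\mathbf g^\top(a)Y_1(a)\mathbf f(a)/(1-M_s(a,a))$ gets there with less computation. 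That part is fine.

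Your plan for \eqref{eq:structureCC}, however, has a genuine gap at the decisive step. Matching the $(z-a)^{-1}$ coefficients of $Y^{-1}(z)Y(z)\equiv I$ gives the commutation relation $W_Y(a)\,\wt C_Y(a)C_Y(a)=\wt C_Y(a)C_Y(a)\,W_Y(a)$, and since the commutant of a nonzero $2\times 2$ rank-one nilpotent is spanned by $I$ and the nilpotent itself, this yields $\wt C_Y(a)C_Y(a)=e_Y(a)I+d_Y(a)W_Y(a)$ for \emph{some} scalar $e_Y(a)$; the nilpotency you emphasize settles this easily and is not where the difficulty lies. The problem is your claim that the constant coefficient ``relates $\wt C_Y(a)C_Y(a)$ to $c_Y(a)I$ plus a correction proportional to $W_Y(a)$'': the constant-order equation actually reads
\be
\wt C_Y(a)C_Y(a)+\wt Y_1(a)\,\wt C_Y(a)C_Y(a)\,W_Y(a)-W_Y(a)\,\wt C_Y(a)C_Y(a)\,Y_1(a)=I,
\ee
in which $c_Y(a)$ does not appear at all, so no amount of bookkeeping in the order-by-order matching will produce it. To identify $e_Y(a)=c_Y(a)$ one must substitute $e_Y(a)I+d_Y(a)W_Y(a)$ into this equation, project it — e.g.\ multiply by $\mathbf f^\top(a)$ on the left and $\mathbf f(a)$ on the right, using $W_Y(a)\mathbf f(a)=0$ — to obtain $e_Y(a)-1=e_Y(a)\,\mathbf g^\top(a)Y_1(a)\mathbf f(a)/(1-M_s(a,a))$, and then invoke the already-proven identity \eqref{eq:structuredet} to conclude $(e_Y(a)-1)/e_Y(a)=(c_Y(a)-1)/c_Y(a)$, hence $e_Y(a)=c_Y(a)$. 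This is exactly how the paper closes the argument: the second identity of the lemma \emph{depends} on the first, a dependence entirely absent from your plan, which instead predicts that the scalar drops out automatically once nilpotency is used consistently.
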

\begin{proof}
Since $\mathbf f(a),\mathbf g(a)$ are orthogonal and nonzero, the $2\times 2$ matrix
\be
\label{eq:Uproof}
U:=\left(\frac{\mathbf f(a)}{|\mathbf f(a)|}\ \right |\left.\ \frac{\mathbf g(a)}{|\mathbf g(a)|}\right)
\ee
is an orthogonal matrix, $UU^\top=I$.
Here, we denote $|\mathbf v|:=\sqrt{\mathbf v^\top\mathbf v}$ for a column vector $\mathbf v\in\C^2$.
Introducing 
\be
\label{eq:kappaproof}
\kappa:=\frac{|\mathbf f(a)|\cdot|\mathbf g(a)|}{1-M_{\s,s}(a,a)},
\ee
we have
\be
W_Y(a)=U\begin{pmatrix}0 & \kappa \\ 0 & 0 \end{pmatrix}U^\top.
\ee
Using that $\det Y(z)=1$ identically in $z$ (Remark~\ref{remark:dRHp}) and \eqref{eq:LaurentY},
\begin{align}
\nonumber
\frac 1{c_Y(a)}=\frac{\det Y(z)}{\det C_Y(a)}&=\det\biggl(\frac{1}{z-a}U\begin{pmatrix}0 & \kappa \\ 0 & 0 \end{pmatrix}U^\top+I+Y_1(a)(z-a)+O\bigl((z-a)^2\bigr)\biggr)
\\
\nonumber
&=\det\biggl(\frac{1}{z-a}\begin{pmatrix}0 & \kappa \\ 0 & 0 \end{pmatrix}+I+U^\top Y_1(a)U(z-a)+O\bigl((z-a)^2\bigr)\biggr)
\\
&=1-\kappa\bigl(U^\top Y_1(a)U\bigr)_{2,1}+O(z-a)=1-\kappa\bigl(U^\top Y_1(a)U\bigr)_{2,1}.
\end{align}
Finally, using~\eqref{eq:Uproof} and~\eqref{eq:kappaproof} we get
\be
\kappa\bigl(U^\top Y_1(a)U\bigr)_{2,1}=\frac{\mathbf g^\top(a)Y_1(a)\mathbf f(a)}{1-M_{\s,s}(a,a)}
\ee
and~\eqref{eq:structuredet} follows.

By multiplying the Laurent expansion of $Y^{-1}$, given in~\eqref{eq:LaurentYinverse}, on the right by that of $Y$, given in~\eqref{eq:LaurentY}, vanishing of terms of order $(z-a)^{-1}$ implies $W_Y(a)\wt C_Y(a)C_Y(a)=\wt C_Y(a)C_Y(a)W_Y(a)$.
In turn, this means that $\wt C_Y(a)C_Y(a)=e_Y(a)I+d_Y(a)W_Y(a)$ for some constants $d_Y(a),e_Y(a)$.
Next, the fact that the constant term is the identity gives
\begin{align}
\nonumber
&\wt C_Y(a)C_Y(a)+\wt Y_1(a) \wt C_Y(a)C_Y(a)W_Y(a)-W_Y(a)\wt C_Y(a)C_Y(a)Y_1=I
\\
&\quad \Rightarrow (e_Y(a)-1)I+\bigl(d_Y(a)I+e_Y(a)\wt Y_1(a) \bigr)W_Y(a)=e_Y(a)W_Y(a)Y_1(a).
\end{align}
Multiplying the last relation by~$\mathbf f^\top(a)$ on the left and by~$\mathbf f(a)$ on the right, and combining with~\eqref{eq:structuredet}, we obtain~$e_Y(a)=c_Y(a)$, and so also \eqref{eq:structureCC} is proved.
\end{proof}

Using~\cite[Theorem~1.1]{Borodin}, we immediately obtain the following result.

\begin{theorem}
\label{thm:Borodin}
Let $s\in\Z'$ be such that $Q_\s(L,s)>0$, so that $1-\op M_s$ is invertible.
Then, the discrete RH problem for $Y$ has a unique solution and the resolvent operator $\op R_s:=(1-\op M_s)^{-1}-1$ is induced by the kernel
\be
\label{eq:resolventkernel}
R_s(a,b)=\frac{\wt{\mathbf f}^\top (a)Y^\top(a)Y^{-\top}(b)\wt{\mathbf g}(b)}{a-b},
\quad
R_s(a,a)=\frac{M_{s}(a,a)}{1-M_{s}(a,a)}+\wt{\mathbf g}^\top(a) Y^{-1}(a)Y'(a)\wt{\mathbf f}(a),
\ee
for $a,b\in\Z'$, $a\not=b$, where
\be
\label{eq:FG}
\wt{\mathbf f}(a):=\frac{\mathbf f(a)}{1-M_{s}(a,a)},\qquad \wt{\mathbf g}(a):=\frac{\mathbf g(a)}{1-M_{s}(a,a)}.
\ee
\end{theorem}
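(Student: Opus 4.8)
The plan is to deduce Theorem~\ref{thm:Borodin} as a direct specialization of the general theory of Borodin~\cite[Theorem~1.1]{Borodin}, so that the bulk of the work consists in verifying that our operator~$\op M_s$ fits the hypotheses of that theorem and in translating its conclusions into the explicit form stated here. The key structural input is already in place: by Lemma~\ref{lemma:123}(i) the operator~$\op M_s$ is trace-class, by~\eqref{eq:Mintegrable} it is of discrete integrable form with the explicit vectors $\mathbf f,\mathbf g$ of~\eqref{eq:fg}, and by Remark~\ref{remark:dRHp}(ii) the diagonal correction~$1-M_s(a,a)$ is strictly positive, so that the data $W_Y(a)$ in~\eqref{eq:WY} are well-defined. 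The assumption $Q_\s(L,s)=\det(1-\op M_s)>0$ guarantees that $1-\op M_s$ is invertible, which is precisely the nondegeneracy condition under which Borodin's theory produces a solvable discrete RH problem together with an explicit resolvent formula.

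First I would check that the orthogonality relation $\mathbf f^\top(a)\mathbf g(a)=0$ holds, which is immediate from~\eqref{eq:fg} since the two components pair antisymmetrically; this is what makes $W_Y(a)$ nilpotent (as already noted in Remark~\ref{remark:dRHp}(iii)) and is the algebraic feature that Borodin's construction requires for the discrete integrable kernel. Next I would record that the large-order decay of the Bessel functions from~\eqref{eq:1lineasymptotics}--\eqref{eq:2lineasymptotics}, already used in Lemma~\ref{lemma:defineDPP}, furnishes the summability/decay of $\mathbf f(a),\mathbf g(a)$ at $a\to\pm\infty$ needed for the normalization condition~(c) at $z\to\infty$ and for the convergence of the series defining the resolvent. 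With these verified, invertibility of $1-\op M_s$ yields existence of the solution $Y$ to the discrete RH problem; uniqueness we have already established independently in Remark~\ref{remark:dRHp}(iv).

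The remaining task is to transcribe Borodin's resolvent formula into the notation~\eqref{eq:resolventkernel}--\eqref{eq:FG}. Borodin expresses the resolvent of a discrete integrable operator through the boundary values of the RH solution acting on the defining vectors; writing $\wt{\mathbf f}(a)=\mathbf f(a)/(1-M_s(a,a))$ and $\wt{\mathbf g}(a)=\mathbf g(a)/(1-M_s(a,a))$ absorbs the diagonal denominator and puts the off-diagonal kernel into the integrable form $R_s(a,b)=\wt{\mathbf f}^\top(a)Y^\top(a)Y^{-\top}(b)\wt{\mathbf g}(b)/(a-b)$. For the diagonal entry $R_s(a,a)$ one takes the limit $b\to a$ of the off-diagonal expression; here the simple pole of $Y$ at $a\in\Z'$ interacts with the $1/(a-b)$ factor, and the finite part is extracted using the Laurent expansion~\eqref{eq:LaurentY} together with the existence of the regularized boundary values $Y(a)\mathbf f(a)$ and $Y'(a)\mathbf f(a)$ defined in Remark~\ref{remark:dRHp}(v). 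This produces the stated diagonal formula with its two contributions, the explicit $M_s(a,a)/(1-M_s(a,a))$ term and the $\wt{\mathbf g}^\top(a)Y^{-1}(a)Y'(a)\wt{\mathbf f}(a)$ correction.

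The main obstacle I anticipate is precisely this diagonal limit: because $Y$ has genuine poles on $\Z'$, the naive $b\to a$ limit of $R_s(a,b)$ is a $0/0$ or $\infty\cdot 0$ indeterminacy, and one must carefully match the residue of $Y$ at $a$ (governed by $W_Y(a)$) against the $(a-b)^{-1}$ singularity, keeping track of the regular part $C_Y(a)Y_1(a)$. The bookkeeping is streamlined by the identities of Lemma~\ref{lemma:structure}, notably~\eqref{eq:structuredet}, which controls the quantity $\mathbf g^\top(a)Y_1(a)\mathbf f(a)/(1-M_s(a,a))$ appearing in the finite part; invoking these is what lets the diagonal formula close in the clean form of~\eqref{eq:resolventkernel} rather than leaving an unevaluated combination of Laurent coefficients. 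Everything else reduces to citing~\cite[Theorem~1.1]{Borodin} and substituting our explicit data, so I would keep that transcription brief and concentrate the exposition on the diagonal regularization.
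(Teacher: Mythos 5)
Your core strategy---verify that $\op M_s$ fits the hypotheses of \cite[Theorem~1.1]{Borodin} (trace-class by Lemma~\ref{lemma:123}{\it(i)}, integrable form~\eqref{eq:Mintegrable} with the orthogonality $\mathbf f^\top(a)\mathbf g(a)=0$, positivity of $1-M_s(a,a)$ from Remark~\ref{remark:dRHp}{\it(ii)}, Bessel decay for the normalization condition, uniqueness from Remark~\ref{remark:dRHp}{\it(iv)})---and then cite that theorem is exactly what the paper does: its entire proof is the one sentence ``Using \cite[Theorem~1.1]{Borodin}, we immediately obtain the following result.'' Up to that point your proposal is correct, and indeed more careful than the paper.

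The issue is the step you single out as ``the main obstacle'': deriving the diagonal entry $R_s(a,a)$ by taking the limit $b\to a$ of the off-diagonal expression. This is a misconception imported from the continuous IIKS theory, where kernels are continuous functions and the diagonal genuinely is a coincidence limit. On the discrete set $\Z'$ there is no limit $b\to a$ to take; to make sense of one you must choose an analytic interpolation of the data, and under the natural choice (continuing only the argument of $Y^{-\top}$ to a complex variable while freezing $\wt{\mathbf g}$ at $a$) the Laurent expansions \eqref{eq:LaurentY}--\eqref{eq:LaurentYinverse} yield the finite part $-\wt{\mathbf g}^\top(a)\wt Y_1(a)\wt C_Y(a)C_Y(a)\wt{\mathbf f}(a)$, which contains no trace of the explicit term $M_s(a,a)/\bigl(1-M_s(a,a)\bigr)$ appearing in \eqref{eq:resolventkernel}; reconciling the two would require identities you have not established, beyond Lemma~\ref{lemma:structure}. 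In the discrete theory the diagonal of the resolvent is separate data, obtained in Borodin's proof from the resolvent identity $(1+\op R_s)(1-\op M_s)=1$ read on the diagonal, not from a limit of off-diagonal values---and, crucially for your write-up, the diagonal formula is already part of the statement of \cite[Theorem~1.1]{Borodin}, so nothing needs to be derived. The fix is simply to drop the limit argument and transcribe both formulas in \eqref{eq:resolventkernel} from the cited theorem; with that change your proof coincides with the paper's.
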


Thanks to this result we can prove the following variational formulas for~$Q_\s$.

\begin{theorem}
\label{thm:Qprime}
For all $L>0$ and all $s\in\Z'$ such that $Q_\s(L,s)>0$, we have
\be
\label{eq:dlogQ}
\frac{Q_\s(L,s-1)}{Q_\s(L,s)}-1=\beta(L,s),
\qquad\qquad
\frac{\pa}{\pa L}\log Q_\s(L,s)=-\frac{2\a(L,s)}L,
\ee
where $\a(L,s)$ and $\b(L,s)$ are defined in \eqref{eq:Yinfty}.
\end{theorem}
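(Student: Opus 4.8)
The plan is to read off both identities from the coefficient matrix in the large-$z$ expansion~\eqref{eq:Yinfty} of the RH solution $Y$, using a residue-at-infinity (Mittag--Leffler) representation as the bridge between operator traces and the RH data $\a,\b$. Since $Y(z)-I$ is meromorphic with poles only on $\Z'$ and decays like $z^{-1}$ at infinity (conditions (a) and (c)), and since by~\eqref{eq:bstandard} and Remark~\ref{remark:dRHp}(v) its residue at $a\in\Z'$ is $\res{z=a}Y(z)=\bigl(Y(a)\wt{\mathbf f}(a)\bigr)\mathbf g^\top(a)$, I would establish
\be
\label{eq:planYinfty}
Y_\infty=\sum_{a\in\Z'}\bigl(Y(a)\wt{\mathbf f}(a)\bigr)\mathbf g^\top(a)=\sum_{a\in\Z'}\mathbf F(a)\,\mathbf g^\top(a),\qquad \mathbf F(a):=Y(a)\wt{\mathbf f}(a),
\ee
the convergence of the series and the vanishing of the entire part being justified by the partial-fraction lemma of Appendix~\ref{app:partialfraction} and the rapid decay (for fixed $L$) of the Bessel functions entering $\mathbf f,\mathbf g$ in~\eqref{eq:fg}. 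Writing $f_1,f_2,g_1,g_2$ for the components in~\eqref{eq:fg} and using $\mathbf g(a)=\bigl(\begin{smallmatrix}0&1\\-1&0\end{smallmatrix}\bigr)\mathbf f(a)$, so that $g_2=-f_1$, reading off entries of~\eqref{eq:planYinfty} gives $\b=\sum_{a}F_1(a)g_2(a)=-\sum_a F_1(a)f_1(a)$ and $\a=\sum_a F_1(a)g_1(a)=-\sum_a F_2(a)g_2(a)$. The key structural input, which I would take from Theorem~\ref{thm:Borodin}, is the dressing relation identifying $\mathbf F=(1-\op M_s)^{-1}\mathbf f$ with the left vector $Y(a)\wt{\mathbf f}(a)$ of the resolvent kernel~\eqref{eq:resolventkernel}.

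For the first identity I would start from Lemma~\ref{lemma:123}(iii), which gives $Q_\s(L,s-1)/Q_\s(L,s)-1=\tr\bigl((1-\op M_s)^{-1}\op N_s\bigr)$ with $\op N_s$ the rank-one operator~\eqref{eq:rankone}, that is, up to an overall sign, the outer product built from the single function $f_1$. Because $\op N_s$ has rank one, the trace collapses to the scalar $\pm\sum_a f_1(a)\bigl((1-\op M_s)^{-1}f_1\bigr)(a)=\pm\sum_a f_1(a)F_1(a)$, where $F_1=\bigl((1-\op M_s)^{-1}\mathbf f\bigr)_1$ is the first component of $\mathbf F$. Comparing with $\b=-\sum_a F_1(a)f_1(a)$ from~\eqref{eq:planYinfty}, the two sums coincide and yield $Q_\s(L,s-1)/Q_\s(L,s)-1=\b$; the overall sign is pinned unambiguously by the fact that this quantity must be non-positive, $Q_\s(L,\cdot)$ being non-decreasing.

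For the second identity I would differentiate, $\pa_L\log Q_\s=-\tr\bigl((1-\op M_s)^{-1}\pa_L\op M_s\bigr)$, the crucial observation being that $\pa_L\op M_s$ is a finite-rank (separable) kernel. Using the recurrence~\eqref{eq:diffeqBessel} together with $\pa_L\J_k(2L)=\J_{k-1}(2L)-\J_{k+1}(2L)$ one checks that $\pa_L\mathbf f(a)=B(a)\mathbf f(a)$ and $\pa_L\mathbf g(a)=-B^\top(a)\mathbf g(a)$ with $B(a)-B(b)=\tfrac{a-b}{L}\bigl(\begin{smallmatrix}1&0\\0&-1\end{smallmatrix}\bigr)$, so that in the off-diagonal entries the factor $(a-b)^{-1}$ cancels and $\pa_L M_s(a,b)=\tfrac1L\,\mathbf f^\top(a)\bigl(\begin{smallmatrix}1&0\\0&-1\end{smallmatrix}\bigr)\mathbf g(b)=\tfrac1L\bigl(f_1(a)g_1(b)-f_2(a)g_2(b)\bigr)$; the same expression is matched on the diagonal by differentiating $K^\Be(a,a)=\sum_{l\in\Z'_+}\J_{a+l}(2L)^2$ termwise and telescoping to get $\pa_L K^\Be(a,a)=2\,\J_{a-\frac12}(2L)\J_{a+\frac12}(2L)$, which combined with~\eqref{eq:Mdiagonal} makes the formula hold for all $a,b\in\Z'$. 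Taking the trace of the resulting separable kernel against $(1-\op M_s)^{-1}$, and using the dressing relation once more, gives $\tr\bigl((1-\op M_s)^{-1}\pa_L\op M_s\bigr)=\tfrac1L\bigl(\sum_a g_1(a)F_1(a)-\sum_a g_2(a)F_2(a)\bigr)=\tfrac1L(\a-(-\a))=\tfrac{2\a}{L}$, whence $\pa_L\log Q_\s=-2\a/L$.

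The main obstacle, and the conceptual heart of the argument, is the residue-at-infinity bridge~\eqref{eq:planYinfty}: one must rigorously justify the Mittag--Leffler representation of $Y$ (convergence of the sum of residues and absence of an entire part) and, crucially, the dressing relation $\mathbf F(a)=\bigl((1-\op M_s)^{-1}\mathbf f\bigr)(a)=Y(a)\wt{\mathbf f}(a)$ with its diagonal normalization~\eqref{eq:FG}, so that the operator-theoretic traces translate correctly into entries of $Y_\infty$. Once this dictionary is secured, the two remaining ingredients -- the rank-one/separable recognitions of $\op N_s$ and $\pa_L\op M_s$, and the Bessel-recurrence computation (including the diagonal telescoping) -- are essentially routine, the only delicate point being the careful tracking of signs and of the $1/(1-M_s(a,a))$ normalizations.
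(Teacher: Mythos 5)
Your proposal is correct in substance and takes a genuinely different, more economical route than the paper's own proof. The paper starts from the same two trace identities (Lemma~\ref{lemma:123}(iii) and $\pa_L\log Q_\s=-\tr\bigl((1-\op M_s)^{-1}\pa_L\op M_s\bigr)$), but then works with the resolvent \emph{kernel} of Theorem~\ref{thm:Borodin} directly: it splits the double sums into diagonal and off-diagonal parts, applies the partial-fraction Lemma~\ref{lemma:partialfraction} to $\bs\psi(z)=Y^{-\top}(z)\,(-1,0)^\top$, computes a residue at infinity, and needs Lemma~\ref{lemma:structure} (the constants $c_Y(a)$ and the relation $\wt C_Y(a)C_Y(a)=c_Y(a)I+d_Y(a)W_Y(a)$) precisely so that the diagonal contribution cancels a leftover off-diagonal sum. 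Your dictionary $\mathbf F:=(1-\op M_s)^{-1}\mathbf f=Y(\cdot)\wt{\mathbf f}(\cdot)$ together with $Y_\infty=\sum_a\mathbf F(a)\mathbf g^\top(a)$ bypasses all of this bookkeeping: the rank-one structure $\op N_s=f_1f_1^\top$ and the separability $\pa_LM_s(a,b)=\tfrac1L\bigl(f_1(a)g_1(b)-f_2(a)g_2(b)\bigr)$ (your Bessel computation is correct, including the diagonal telescoping $\pa_LK^\Be(a,a)=2\J_{a-\frac12}(2L)\J_{a+\frac12}(2L)$) collapse both traces directly onto entries of $Y_\infty$, with no cancellation to engineer. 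Your caution about the sign in the first identity is also warranted, and your positivity argument resolves it correctly: as printed, Lemma~\ref{lemma:123}(iii) has the sign of the rank-one term reversed, since $K^\Be(a-1,b-1)-K^\Be(a,b)=+\J_{a-\frac12}(2L)\J_{b-\frac12}(2L)$ gives $Q_\s(L,s-1)=\det(1-\op M_s-\op N_s)$, hence $Q_\s(L,s-1)/Q_\s(L,s)-1=-\tr\bigl((1-\op M_s)^{-1}\op N_s\bigr)$; this is exactly the sign your argument selects, consistent with $\b=-\sum_aF_1(a)f_1(a)\le 0$.

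The one genuine gap is the step you yourself flag as the heart of the matter: the dressing relation $\bigl((1-\op M_s)^{-1}\mathbf f\bigr)(a)=Y(a)\wt{\mathbf f}(a)$ cannot simply be ``taken from Theorem~\ref{thm:Borodin}''. That theorem, as quoted, only gives the kernel $R_s(a,b)$, and a factorization of the form $u^\top(a)v(b)/(a-b)$ is invariant under the constant gauge $u\mapsto A^\top u$, $v\mapsto A^{-1}v$, $A\in\mathrm{GL}_2(\C)$, so the identification of the left vector $Y(a)\wt{\mathbf f}(a)$ with the operator-theoretically dressed $\mathbf F$ does not follow from the kernel formula alone; it has to be proved (or cited from~\cite{Borodin}, where it is established in the course of the proof rather than in the statement). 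Fortunately the proof is short, and you should include it. On the one hand, the identity $(a-b)M_s(a,b)=\mathbf f^\top(a)\mathbf g(b)$, valid also on the diagonal by orthogonality, yields $(a-b)R_s(a,b)=\mathbf F^\top(a)\mathbf G(b)$ with $\mathbf F=(1-\op M_s)^{-1}\mathbf f$ and $\mathbf G=(1-\op M_s)^{-1}\mathbf g$ (using $M_s(a,b)=M_s(b,a)$), by conjugating the commutator with multiplication by the lattice variable. On the other hand, writing $Y(z)=I+\sum_b\mathbf H(b)\mathbf g^\top(b)/(z-b)$ with $\mathbf H(b):=Y(b)\wt{\mathbf f}(b)$, which follows from conditions (a)--(c), the residue formula \eqref{eq:Yres}, and Lemma~\ref{lemma:partialfraction}, and letting $z\to a$ against $\wt{\mathbf f}(a)$, one finds $\bigl(1-M_s(a,a)\bigr)\mathbf H(a)-\sum_{b\neq a}M_s(a,b)\mathbf H(b)=\mathbf f(a)$, i.e.\ $(1-\op M_s)\mathbf H=\mathbf f$; since $Q_\s(L,s)>0$ makes $1-\op M_s$ invertible, $\mathbf H=\mathbf F$. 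Once this lemma is in place (together with the routine verification, which the paper also performs via Cauchy's theorem, that $Y_\infty$ equals the sum of the residues of $Y$), your argument is complete, and it is arguably cleaner than the published one.
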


\begin{proof}
We start with the first equation in~\eqref{eq:dlogQ}.
By~\eqref{eq:lemma2} and~$(1-\op M_s)^{-1}=1+\op R_s$, we have
\be
\label{eq:start}
\frac{Q_\s(L,s-1)}{Q_\s(L,s)}-1=\tr\left((1-\op M_s)^{-1}\op N_s\right)
=\sum_{a,b\in\Z'}\J_{a-\frac 12}\J_{b-\frac 12}\sqrt{\wt\s}(a)\sqrt{\wt\s}(b)\left(\delta_{a,b}+R_{s}(a,b)\right),
\ee
where $\wt\s(a):=\s(a-s-\tfrac 12)$ and $R_{s}(a,b)$ is explicitly given in \eqref{eq:resolventkernel}, and, throughout this proof, we omit the argument~$2L$ of the Bessel functions.
We start by computing the part of the sum that comes from $a\not=b$; denoting $\Delta=\{(a,a):\ a\in\Z'\}$, this is
\be
\label{eq:startoffdiagonal}
\sum_{a,b\in\Z'\setminus\Delta}
\frac{\J_{a-\frac 12}\sqrt{\wt\s}(a)}{1-M_{s}(a,a)}
\frac{\mathbf f^\top(a)Y^\top(a)Y^{-\top}(b)\mathbf g(b)}{a-b}\frac{\J_{b-\frac 12}\sqrt{\wt\s}(b)}{1-M_{s}(b,b)}
=\sum_{a,b\in\Z'\setminus\Delta}\res{z=a}\res{w=b}\frac{\bs\varphi^\top(z)\bs\psi(w)}{z-b}\d w\d z,
\ee
where we introduce the meromorphic vector functions
\be
\label{eq:varphipsi}
\bs\varphi(z):=Y(z)\begin{pmatrix}
0 \\ -1
\end{pmatrix},\qquad
\bs\psi(w):=Y^{-\top}(w)\begin{pmatrix}
-1 \\ 0
\end{pmatrix}.
\ee
Indeed, condition~(b) in the discrete RH problem for $Y$ implies that
\be
\label{eq:Yres}
\res{z=a}Y(z)\d z=\frac{Y(a)\mathbf f(a)\mathbf g^\top(a)}{1-M_s(a,a)},
\ee
yielding
\be
\res{z=a}\bs\varphi(z)\d z=Y(a)\mathbf f(a)\frac{\J_{a-\frac 12}\sqrt{\wt\s}(a)}{1-M_s(a,a)},\qquad
\res{w=b}\bs\psi(w)\d w=Y^{-\top}(b)\mathbf g(b)\frac{\J_{b-\frac 12}\sqrt{\wt\s}(b)}{1-M_s(b,b)}.
\ee
Using condition~(c) in the discrete RH problem for $Y$, we can represent $\bs\psi$ by its (infinite) partial fraction expansion (see Lemma \ref{lemma:partialfraction}), namely
\be
\label{eq:psipartial}
\bs\psi(z)=\begin{pmatrix}-1 \\ 0 \end{pmatrix}+\sum_{b\in\Z'}\frac{\res{w=b}\bs\psi(w)\d w}{z-b}.
\ee
Hence we can rewrite \eqref{eq:startoffdiagonal} as
\begin{multline}
\sum_{a\in\Z'}\res{z=a}\bs\varphi^\top(z)\left[\bs\psi(z)+\begin{pmatrix}1 \\ 0 \end{pmatrix}-\frac{\res{w=a}\bs\psi(w)\d w}{z-a}\right]\d z
\\
\label{eq:prima}
=\sum_{a\in\Z'}\res{z=a}\bs\varphi^\top(z)\begin{pmatrix}1 \\ 0 \end{pmatrix}\d z-\sum_{a\in\Z'}\res{z=a}\res{w=a}\frac{\bs\varphi^\top(z)\bs\psi(w)}{z-a}\d w\d z,
\end{multline}
where we use that $\bs\varphi^\top(z)\bs\psi(z)=0$.
For the first term in \eqref{eq:prima} we appeal to Cauchy theorem to write the sum as a formal residue at $z=\infty$;
\be
\label{eq:seconda}
\sum_{a\in\Z'}\res{z=a}\bs\varphi^\top(z)\begin{pmatrix}1 \\ 0 \end{pmatrix}\d z
=\lim_{n\to+\infty}\frac1{2\pi\i}\oint_{|z|=n}\bs\varphi^\top(z)\d z\begin{pmatrix}1 \\ 0 \end{pmatrix}=\b(L,s),
\ee
where $\b(L,s)$ is introduced in \eqref{eq:Yinfty}.
Using the Laurent expansion \eqref{eq:LaurentY} and \eqref{eq:structureCC}, we compute the second part in \eqref{eq:prima} as
\be
\label{eq:terza}
-\sum_{a\in\Z'}\res{z=a}\res{w=a}\frac{\bs\varphi^\top(z)\bs\psi(w)}{z-a}\d w\d z=-\sum_{a\in\Z'}\frac{c_Y(a)\J_{a-\frac 12}^2\,\wt\s(a)}{1-M_s(a,a)},
\ee
where $c_Y(a):=\det C_Y(a)$.
We now compute the terms in \eqref{eq:start} coming from the diagonal $\Delta\subset\Z'\times\Z'$; this contribution is, using \eqref{eq:resolventkernel} and Lemma~\ref{lemma:structure},
\begin{align}
\nonumber
&\sum_{a\in\Z'}\J_{a-\frac 12}^2\,\wt\s(a)\left(1+\frac{M_s(a,a)}{1-M_s(a,a)}+\wt{\mathbf g}^\top(a)Y^{-1}(a)Y'(a)\wt{\mathbf f}(a)\right)
\\
\nonumber
&\qquad\qquad\qquad
=\sum_{a\in\Z'}\J_{a-\frac 12}^2\,\wt\s(a)\left(\frac{1}{1-M_s(a,a)}+\wt{\mathbf g}^\top(a)\wt C_Y(a) C_Y(a)Y_1(a)\wt{\mathbf f}(a)\right)
\\
\nonumber
&\qquad\qquad\qquad
=\sum_{a\in\Z'}\J_{a-\frac 12}^2\,\wt\s(a)\left(\frac{1}{1-M_s(a,a)}+\frac{c_Y(a)}{1-M_s(a,a)}\frac{\mathbf g^\top(a)Y_1(a)\mathbf f(a)}{1-M_s(a,a)}\right)
\\
\nonumber
&\qquad\qquad\qquad
=\sum_{a\in\Z'}\J_{a-\frac 12}^2\,\wt\s(a)\left(\frac{1}{1-M_s(a,a)}+\frac{c_Y(a)-1}{1-M_s(a,a)}\right)
\\
\label{eq:quarta}
&\qquad\qquad\qquad
=\sum_{a\in\Z'}\frac{c_Y(a)\J_{a-\frac 12}^2\,\wt\s(a)}{1-M_s(a,a)}.
\end{align}
The proof of the first equation in~\eqref{eq:dlogQ} is obtained by combining~\eqref{eq:startoffdiagonal},~\eqref{eq:prima}--\eqref{eq:quarta}.

The proof of the second equation in~\eqref{eq:dlogQ} is similar.
We have
\begin{align}
\nonumber
\frac{\pa}{\pa L}\log Q_\s(L,s)&=-\tr\left((1-\op M_s)^{-1}\frac{\pa\op M_s}{\pa L}\right)
\\
\nonumber
&=-\sum_{a,b\in\Z'}\biggl(\J_{a-\frac 12}\J_{b+\frac 12}+\J_{a+\frac 12}\J_{b-\frac 12}\biggr)\sqrt{\wt \s}(a)\sqrt{\wt\s}(b)\left(\delta_{a,b}+R_s(a,b)\right)
\\
\label{eq:start2}
&=-2\sum_{a,b\in\Z'}\J_{a+\frac 12}\J_{b-\frac 12}\sqrt{\wt \s}(a)\sqrt{\wt\s}(b)\left(\delta_{a,b}+R_s(a,b)\right),
\end{align}
where we use the identity~$(1-\op M_s)^{-1}=1+\op R_s$, the symmetry~$R_s(b,a)=R_s(a,b)$, and we compute $\pa M_s/\pa L$ using
\be
\pa_L\begin{pmatrix}
\J_{a+\frac 12} (2L)\\ L\J_{a-\frac 12}(2L)
\end{pmatrix}
=\begin{pmatrix}
 \frac{a-\frac{1}{2}}{L} & -\frac{2}{L} \\
 2 L & -\frac{a-\frac{1}{2}}{L}
\end{pmatrix}
\begin{pmatrix}
\J_{a+\frac 12}(2L) \\ L\J_{a-\frac 12}(2L)
\end{pmatrix}.
\ee
As before, we start by computing the part of the sum that comes from $a\not=b$; denoting $\Delta=\{(a,a):\ a\in\Z'\}$, this contribution to \eqref{eq:start2} is
\be
\label{eq:startoffdiagonal2}
-2\sum_{a,b\in\Z'\setminus\Delta}
\frac{\J_{a+\frac 12}\sqrt{\wt\s}(a)}{1-M_s(a,a)}
\frac{\mathbf f^\top(a)Y^\top(a)Y^{-\top}(b)\mathbf g(b)}{a-b}\frac{\J_{b-\frac 12}\sqrt{\wt\s}(b)}{1-M_s(b,b)}
=-2\sum_{a,b\in\Z'\setminus\Delta}\res{z=a}\res{w=b}\frac{\bs\omega^\top(z)\bs\psi(w)}{z-b}\d w\d z,
\ee
where we introduce the meromorphic vector functions $\bs\psi$, as in \eqref{eq:varphipsi}, and
\be
\label{eq:omega}
\bs\omega(z):=Y(z)\begin{pmatrix}
1/L \\ 0
\end{pmatrix}
,\qquad\quad
\res{z=a}\bs\omega(z)\d z=Y(a)\mathbf f(a)\frac{\J_{a+\frac 12}\sqrt{\wt\s}(a)}{1-M_s(a,a)},\quad a\in\Z',
\ee
the last equality stemming from~\eqref{eq:Yres}.
Thanks to~\eqref{eq:psipartial}, we rewrite~\eqref{eq:startoffdiagonal2} as
\begin{multline}
-2\sum_{a\in\Z'}\res{z=a}\bs\omega^\top(z)\left[\bs\psi(z)+\begin{pmatrix}1 \\ 0 \end{pmatrix}-\frac{\res{w=a}\bs\psi(w)\d w}{z-a}\right]\d z
\\
\label{eq:prima2}
=-2\sum_{a\in\Z'}\res{z=a}\bs\omega^\top(z)\begin{pmatrix}1 \\ 0 \end{pmatrix}\d z+2\sum_{a\in\Z'}\res{z=a}\res{w=a}\frac{\bs\omega^\top(z)\bs\psi(w)}{z-a}\d w\d z,
\end{multline}
where we use that $\bs\omega^\top(z)\bs\psi(z)$ is regular at $\Z'$.
Again, the first term is a formal residue at $z=\infty$:
\be
\label{eq:seconda2}
-2\sum_{a\in\Z'}\res{z=a}\bs\omega^\top(z)\begin{pmatrix}1 \\ 0 \end{pmatrix}\d z
=-2\lim_{n\to+\infty}\frac1{2\pi\i}\oint_{|z|=n}\bs\omega^\top(z)\d z\begin{pmatrix}1 \\ 0 \end{pmatrix}=-\frac{2\a(L,s)}L,
\ee
where $\a(L,s)$ is introduced in \eqref{eq:Yinfty}.
Using the Laurent expansion \eqref{eq:LaurentY} and \eqref{eq:structureCC}, we compute the second part in \eqref{eq:prima2} as
\be
\label{eq:terza2}
2\sum_{a\in\Z'}\res{z=a}\res{w=a}\frac{\bs\omega^\top(z)\bs\psi(w)}{z-a}\d w\d z=2\sum_{a\in\Z'}\frac{c_Y(a)\J_{a-\frac 12}\J_{a+\frac 12}\wt\s(a)}{1-M_s(a,a)},
\ee
where, as before, $c_Y(a):=\det C_Y(a)$.
With a computation completely analogous to \eqref{eq:quarta} we compute the terms in~\eqref{eq:start2} coming from the diagonal $\Delta\subset\Z'\times\Z'$ as
\be
\label{eq:quarta2}
-2\sum_{a\in\Z'}\J_{a-\frac 12}\J_{a+\frac 12}\wt\s(a)\biggl(1+\frac{M_s(a,a)}{1-M_s(a,a)}+\wt{\mathbf g}^\top(a)Y^{-1}(a)Y'(a)\wt{\mathbf f}(a)\biggr)
=-2\sum_{a\in\Z'}\frac{c_Y(a)\J_{a-\frac 12}\J_{a+\frac 12}\wt\s(a)}{1-M_s(a,a)}.
\ee
The proof of the second equation in~\eqref{eq:dlogQ} is complete by combining~\eqref{eq:startoffdiagonal2},~\eqref{eq:prima2}--\eqref{eq:quarta2}.
\end{proof}

\section{Proof of Theorem~\texorpdfstring{\ref{thm:1}}{I}}\label{sec:thm1}

Throughout this section we shall assume that $s\in\Z'$ is such that $Q_\s(L,s)>0$. In particular (Theorem~\ref{thm:Borodin}), the matrix $Y(z)$ introduced in the last section exists and is unique.

\subsection{Dressing}
We proceed to a \emph{dressing} of the discrete RH problem for~$Y$, mimicking a common technique for continuous RH problems, see, e.g.,~\cite{KH}.
Introduce the following entire matrix function of $z$:
\be
\Phi(z) := \left(\begin{array}{cc}
\J_{z-\frac 12}(2L) &  \i\pi\mathrm H_{z-\frac 12}^{(1)}(2L) \\
L\J_{z+\frac 12}(2L) & \i\pi L\mathrm H_{z+\frac 12}^{(1)}(2L) 
\end{array}\right),
\ee
where $\mathrm H_k^{(1)}(2L)$ is the the Hankel function of the first kind of order~$k$ and argument~$2L$~\cite{DLMF}.
The vectors $\mathbf f$ and $\mathbf g$ in~\eqref{eq:fg} can be expressed as
\be
\label{eq:fgPhi}
\mathbf f(a)=\sqrt\s(a-s-\tfrac 12)\Phi(a)\begin{pmatrix}
1 \\ 0
\end{pmatrix},\qquad
\mathbf g(a)=\sqrt\s(a-s-\tfrac 12)\Phi^{-\top}(a)\begin{pmatrix}
0 \\ -1
\end{pmatrix}.
\ee
Moreover, we have $\det\Phi(z)=1$ identically in $z$~\cite[eq.~10.5.3]{DLMF}; thus $\Phi^{-1}(z)$ is also entire in~$z$.
For later convenience, we also note that $\Phi$ satisfies
\be
\label{eq:dPhi}
\Phi(z+1)=\frac 1L \begin{pmatrix} 0 & 1\\ -L^2 & z+\frac 12\end{pmatrix}\Phi(z),
\qquad
\frac{\pa}{\pa L}\Phi(z)=\frac 1L \begin{pmatrix}
z-\frac12 & -2 \\
 2 L^2 & -z+\frac12
\end{pmatrix}\Phi(z),
\ee
as it follows from the identities~\cite[eq.~10.6.1]{DLMF}
\be
\mathrm B_{k+1}(2L)+\mathrm B_{k-1}(2L)=\frac kL\mathrm B_k(2L),
\qquad
\frac{\pa}{\pa L}\mathrm B_k(2L) = \mathrm B_{k-1}(2L)-\mathrm B_{k+1}(2L),
\ee
where~$\mathrm B_k(\cdot)$ is either of the functions~$\mathrm J_k(\cdot),\mathrm H_k^{(1)}(\cdot)$.

In the interest of clarity, let us momentarily restore the dependence $Y(z)=Y_\s(z;L,s)$ and $\Phi(z)=\Phi(z;L)$.
We introduce the matrix $\Psi(z)=\Psi_\s(z;L,s)$ by
\be
\label{eq:defPsi}
\Psi_\s(z;L,s):=Y_\s(z+s+\tfrac 12;L,s)\Phi(z+s+\tfrac 12;L).
\ee
As we shall now prove, $\Psi(z)$ is uniquely characterized by the following conditions.

\subsubsection*{Discrete RH problem for $\Psi$}
\begin{itemize}
\item[(a)] {\em $\Psi(z)$ is a $2\times 2$ matrix-valued meromorphic function of $z$ with simple poles at $\Z'$ only.}
\item[(b)] {\em For all $a\in\Z'$, the function
\be
\Psi_a^\reg(z):=\Psi(z)\left(I-\frac{W_\Psi(a)}{z-a}\right)
\ee
has a removable singularity at $z=a$, where
\be
\label{eq:WPsi}
W_\Psi(a):=\begin{pmatrix}
0 & -\s(a) \\ 0 & 0
\end{pmatrix}
,\qquad a\in\Z'.
\ee
}
\item[(c)] {\em We have $\lim_{n\to+\infty}\sup_{|z|=n}|\Psi(z)\Phi^{-1}(z+s+\tfrac 12)-I|=0$, where the limit is taken over integer values of $n$, $I$ denotes the identity $2\times 2$ matrix and $|\cdot|$ denotes any matrix-norm.}
\end{itemize}

\begin{proof}
The only condition that does not directly follow from the analogous conditions of the discrete RH problem for $Y$, thus deserving a proof, is (b).
For, we need to show that with $W_\Psi(a)$ as given, for all $a\in\Z'$
\be
\Psi_a^\reg(z)=\Psi(z)\left(I-\frac{W_\Psi(a)}{z-a}\right)
\ee
is regular at~$z=a$.
By using that $W_Y^2(a)=0$ and the definition~\eqref{eq:defPsi} of~$\Psi$, this condition is equivalent to regularity at~$z=a$ of
\be
Y_{a+\wh s}^\reg(z+\wh s)\left(I+\frac{W_Y(a+\wh s)}{z-a}\right)\Phi(z+\wh s)\left(I-\frac{W_\Psi(a)}{z-a}\right),
\ee
where we denote $\wh s:=s+\frac12\in\Z$.
Since $Y_{a+\wh s}^\reg(z+\wh s)$ is regular at $z=a$, we only need to prove that
\be
\left(I+\frac{W_Y(a+\wh s)}{z-a}\right)\Phi(z+\wh s)\left(I-\frac{W_\Psi(a)}{z-a}\right)\mbox{ is regular at }z=a.
\ee
To this end we consider the Laurent expansion at $z=a$ of the previous expression, which is
\begin{multline}
-\frac{W_Y(a+\wh s)\Phi(a+\wh s)W_\Psi(a)}{(z-a)^2}
\\
+\frac{W_Y(a+\wh s)\Phi(a+\wh s)-\Phi(a+\wh s)W_\Psi(a)-W_Y(a+\wh s)\Phi'(a+\wh s)W_\Psi(a)}{z-a}+O(1).
\end{multline}
Vanishing of the coefficient of $(z-a)^{-1}$ implies
\be
\label{eq:WPsicomputation}
W_\Psi(a)=\Phi^{-1}(a+\wh s)W_Y(a+\wh s)\Phi(a+\wh s)\biggl(I+\Phi^{-1}(a+\wh s)\Phi'(a+\wh s)\Phi^{-1}(a+\wh s)W_Y(a+\wh s)\Phi(a+\wh s)\biggr)^{-1}.
\ee
Since $W_Y^2=0$, this also implies that the coefficient of $(z-a)^{-2}$ vanishes and that the series is regular. It remains to show that \eqref{eq:WPsicomputation} simplifies to \eqref{eq:WPsi}. To this end we deduce from \eqref{eq:fgPhi} that
\begin{align}
W_Y(a+\wh s)&=\rho(a,s)\Phi(a+\wh s)\begin{pmatrix}0 & 1 \\ 0 & 0 \end{pmatrix}\Phi^{-1}(a+\wh s),\qquad \rho(a,s):=-\frac{\s(a)}{1-M_s(a+\wh s,a+\wh s)},
\end{align}
such that
\be
W_\Psi(a)=\rho(a,s)\begin{pmatrix}
0 & 1 \\ 0 & 0
\end{pmatrix}\left(I+\rho(a,s)\Phi^{-1}(a+\wh s)\Phi'(a+\wh s)\begin{pmatrix}0 & 1 \\ 0 & 0 \end{pmatrix}\right)^{-1}.
\ee
We now observe by a direct computation that
\be
\label{crossref}
I+\rho(a,s)\Phi^{-1}(a+\wh s)\Phi'(a+\wh s)\begin{pmatrix}0 & 1 \\ 0 & 0 \end{pmatrix}=\begin{pmatrix} 1 & \star \\ 0 & \frac{1}{1-M_s(a+\wh s,a+\wh s)}\end{pmatrix},
\ee
where $\star$ denotes a term whose explicit expression is inconsequential in this computation.
The right-hand side of~\eqref{crossref} is is invertible and so we finally get
\be
W_\Psi(a)=\rho(a,s)\begin{pmatrix}
0 & 1 \\ 0 & 0
\end{pmatrix}\begin{pmatrix}
1 & -(1-M_s(a+\wh s,a+\wh s))\star \\ 0 & 1-M_s(a+\wh s,a+\wh s)
\end{pmatrix}=\begin{pmatrix}
0 & -\s(a) \\ 0 & 0
\end{pmatrix},
\ee
as claimed in \eqref{eq:WPsi}.
\end{proof}

\subsection{Lax pair}

The main result achieved by the dressing procedure is that $W_\Psi(a)$ is independent of $s,L$.
This enables us to obtain the following equations.
It is convenient here to restore the full dependence $\Psi(z)=\Psi(z;L,s)$ (omitting anyway the dependence on $\sigma$ to have lighter notations).

\begin{proposition}
\label{prop:eqPsi}
The matrix $\Psi(z;L,s)$ satisfies
\be
\Psi(z;L,s+1)=\wt A(z;L,s)\Psi(z;L,s),\qquad 
\frac{\pa}{\pa L}\Psi(z;L,s)=\wt B(z;L,s)\Psi(z;L,s)
\ee
where
\begin{align}
\label{eq:wtA}
\wt A(z;L,s)&=\frac 1L \begin{pmatrix}
0 & 1+\beta(L,s+1) \\ -L^2-\g(L,s) & z+s+1+\a(L,s)-\a(L,s+1)
\end{pmatrix},
\\
\label{eq:wtB}
\wt B(z;L,s)&=\frac 1L \begin{pmatrix}
z+s & -2\bigl(1+\beta(L,s)\bigr) \\ 
2\bigl(L^2+\g(L,s)\bigr) & -z-s
\end{pmatrix},
\end{align}
with~$\a(L,s),\b(L,s),\g(L,s)$ as in~\eqref{eq:Yinfty}.
\end{proposition}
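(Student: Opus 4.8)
The plan is to introduce the two connection matrices
\be
\wt A(z;L,s):=\Psi(z;L,s+1)\Psi^{-1}(z;L,s),\qquad \wt B(z;L,s):=\frac{\pa\Psi(z;L,s)}{\pa L}\,\Psi^{-1}(z;L,s),
\ee
and to prove that each is entire in $z$ and grows at most linearly at infinity; Liouville's theorem then forces both to be matrix polynomials of degree one in $z$, whose coefficients can be read off from the large-$z$ expansion \eqref{eq:Yinfty}. The crucial feature exploited throughout is precisely the payoff of the dressing: in \eqref{eq:WPsi} the matrix $W_\Psi(a)$ is independent of both $s$ and $L$.

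The heart of the matter is the entirety step. By condition (b) of the discrete RH problem for $\Psi$ together with the nilpotency $W_\Psi(a)^2=0$, near any $a\in\Z'$ we may write $\Psi(z;L,s)=\Psi_a^{\reg}(z;L,s)\bigl(I+W_\Psi(a)/(z-a)\bigr)$, where $\Psi_a^{\reg}$ is invertible at $z=a$ because $\det\Psi\equiv 1$ (since $\det Y\equiv 1$ by Remark~\ref{remark:dRHp} and $\det\Phi\equiv 1$). For $\wt A$, the factors arising from $s+1$ and from $s$ involve the \emph{same} matrix $W_\Psi(a)$, so the pole contributions collapse via $\bigl(I+W_\Psi(a)/(z-a)\bigr)\bigl(I-W_\Psi(a)/(z-a)\bigr)=I$, leaving $\wt A=\Psi_a^{\reg}(z;L,s+1)\bigl(\Psi_a^{\reg}(z;L,s)\bigr)^{-1}$, which is regular at $z=a$. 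For $\wt B$, one uses instead that $\pa_L W_\Psi(a)=0$, so that $\pa_L\Psi=(\pa_L\Psi_a^{\reg})\bigl(I+W_\Psi(a)/(z-a)\bigr)$ and the same cancellation yields $\wt B=(\pa_L\Psi_a^{\reg})\bigl(\Psi_a^{\reg}\bigr)^{-1}$, regular at $z=a$. Hence $\wt A$ and $\wt B$ extend to entire functions of $z$.

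To control the growth and pin down the coefficients I would use the factorization \eqref{eq:defPsi} together with the shift and derivative relations \eqref{eq:dPhi}. A direct computation gives $\Phi(z+s+\tfrac32)\Phi^{-1}(z+s+\tfrac12)=\tfrac1L\bigl(\begin{smallmatrix}0&1\\-L^2&z+s+1\end{smallmatrix}\bigr)$ and $\pa_L\Phi(z+s+\tfrac12)\,\Phi^{-1}(z+s+\tfrac12)=\tfrac1L\bigl(\begin{smallmatrix}z+s&-2\\2L^2&-z-s\end{smallmatrix}\bigr)$, so that $\wt A$ and $\wt B$ are conjugations of these degree-one matrices by $Y(\cdot;s+1),Y^{-1}(\cdot;s)$ and by $Y,Y^{-1}$ respectively (with the extra term $\pa_L Y\cdot Y^{-1}=\mathrm O(z^{-1})$ for $\wt B$); since $Y\to I$ at infinity, the conjugations stay linear in $z$, and Liouville gives $\wt A=\wt A_1 z+\wt A_0$, $\wt B=\wt B_1 z+\wt B_0$. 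Matching the coefficients of $z^1$ and $z^0$ using $Y(z+s+\tfrac12;s)=I+z^{-1}\bigl(\begin{smallmatrix}\a(s)&\b(s)\\\g(s)&-\a(s)\end{smallmatrix}\bigr)+\mathrm O(z^{-2})$ then reproduces \eqref{eq:wtA} and \eqref{eq:wtB}. I expect the main obstacle to be the entirety step, which relies entirely on the $s$- and $L$-independence of $W_\Psi$ secured by the dressing (and on the invertibility of $\Psi_a^{\reg}$ from $\det\Psi\equiv 1$); once that is in hand, determining the coefficients is routine bookkeeping with the asymptotic expansion of $Y$.
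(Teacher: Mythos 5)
Your proposal is correct and follows essentially the same route as the paper: you define $\wt A$ and $\wt B$ as the connection matrices $\Psi(\cdot;L,s+1)\Psi^{-1}(\cdot;L,s)$ and $(\pa_L\Psi)\Psi^{-1}$, show the poles at $\Z'$ cancel precisely because the dressing makes $W_\Psi(a)$ independent of $s$ and $L$, and then use \eqref{eq:defPsi}, \eqref{eq:dPhi}, condition~(c), and Liouville's theorem to identify both as linear matrix polynomials whose coefficients are read off from \eqref{eq:Yinfty}. The only (harmless) refinement over the paper's write-up is that you make explicit the invertibility of $\Psi_a^{\reg}$ at $z=a$ via $\det\Psi\equiv 1$.
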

\begin{proof}
The fact that $W_\Psi(a)$ is independent of $s$ allows us to write
\be
\wt A(z;L,s):=\Psi(z;L,s+1)\Psi^{-1}(z;L,s)=\Psi_a^\reg(z;L,s+1)(\Psi^\reg_a)^{-1}(z;L,s)
\ee
for all $a\in\Z'$.
Hence $\wt A(z;L,s)$ has removable singularities at $z\in\Z'$ by condition (b) in the discrete RH problem for $\Psi$, and so is an entire function of $z$.
Further, due to \eqref{eq:defPsi} we can write
\begin{align}
\nonumber
\wt A(z;L,s)&=Y(z+s+\tfrac 32;L,s+1)\Phi(z+s+\tfrac 32;L)\Phi^{-1}(z+s+\tfrac 12;L)Y^{-1}(z+s+\tfrac 12;L,s)
\\
\label{eq:dressinglemmaeqdiff}
&=Y(z+s+\tfrac 32;L,s+1)\begin{pmatrix}0 & \frac 1L \\ -L & \frac{z+s+1}L\end{pmatrix}Y^{-1}(z+s+\tfrac 12;L,s),
\end{align}
where we use~\eqref{eq:dPhi}.
This identity, together with condition~(c) in the RH problem for~$Y$, shows that~$\wt A(z;L,s)$ grows linearly as~$z\to\infty$; Liouville theorem then implies that $\wt A(z;L,s)$ is a linear function of~$z$, explicitly obtained by the asymptotic relation~\eqref{eq:Yinfty} plugged in~\eqref{eq:dressinglemmaeqdiff}, which gives the claimed formula for~$\wt A(z;L,s)$.

Similarly, $\wt B(z;L,s):=\bigl(\pa_L\Psi(z;L,s)\bigr)\Psi^{-1}(z;L,s)$ is an entire function of $z$ because $\wt B(z;L,s)=\bigl(\pa_L\Psi_a^\reg(z;L,s)\bigr)(\Psi^{\reg}_a)^{-1}(z;L,s)$ for all $a\in\Z'$ hence the singularities at $\Z'$ are removable.
(Here we use again that $W_\psi(a)$ does not depend on $L,s$.)
Moreover, using~\eqref{eq:defPsi} and~\eqref{eq:dPhi}, we obtain
\be\label{eq:dressinglemmaeqdL}
\wt B(z;L,s)=\bigl(\pa_L Y(z+s+\tfrac 12;L,s)\bigr)Y^{-1}(z+s+\tfrac 12;L,s)
+Y(z+s+\tfrac 12;L,s)\begin{pmatrix}
 \frac{z+s}{L} & -\frac{2}{L} \\
 2 L & -\frac{z+s}{L}
\end{pmatrix}Y^{-1}(z+s;L,s).
\ee
Finally, condition (c) in the RH problem for $Y$ shows that $\wt B(z;L,s)$ grows linearly as $z\to\infty$, and by Liouville theorem it coincides with the linear function of $z$ explicitly obtained by the asymptotic relation~\eqref{eq:Yinfty} plugged in~\eqref{eq:dressinglemmaeqdL}, and this gives the claimed formula for $\wt B(z;L,s)$.
\end{proof}

\begin{remark}
Since $\det\Psi(z)=1$ identically in $z$, we must have $\det\wt A(z)=1$ identically in $z$ as well.
Looking at \eqref{eq:wtA}, this implies the relation
\be
\label{eq:relbetagamma}
\bigl(1+\beta(L,s+1)\bigr)\bigl(L^2+\g(L,s)\bigr)=L^2.
\ee
\end{remark}

\begin{proof}[Proof of Theorem~\ref{thm:1}]
By Proposition~\ref{prop:eqPsi} and equations~\eqref{eq:defPsi} and~\eqref{eq:dPhi}, we have
\be
\frac{\pa Y(z;L,s)}{\pa L}Y^{-1}(z;L,s) =\frac 1L Y(z;L,s)\begin{pmatrix}
z-\frac12 & -2 \\
 2 L^2 & -z+\frac12
\end{pmatrix}
Y^{-1}(z;L,s)+\wt B(z-s-\tfrac 12;L,s).
\ee
Consider the asymptotic expansion of this identity as~$z\to\infty$.
Looking at the entry~$(1,1)$ of the coefficient of~$z^{-1}$ we obtain, also using~\eqref{eq:relbetagamma},
\be
\frac{\pa}{\pa L}\a(L,s) = 2L\left(1-\frac{1+\b(L,s)}{1+\b(L,s+1)}\right).
\ee
The proof is completed by using~\eqref{eq:dlogQ}.
\end{proof}

\section{Proof of Theorem~\texorpdfstring{\ref{thm:ic}}{II}}\label{sec:proofthmic}

The discrete RH problem for $Y$ can be described equivalently as a linear equation on $\ell^2(\Z')\otimes\C^{2}$, as we now explain following the classical operator theory for continuous RH problems and the works of Borodin~\cite{BorodinIMRN,Borodin}.

By conditions~(a) and~(c) in the discrete RH problem for~$Y$ and Lemma~\ref{lemma:partialfraction}, we can write the solution in the form
\be
\label{eq:Ypf}
Y(z)=I+\sum_{b\in\Z'}\frac{R_b}{z-b}.
\ee
The matrices $R_a$ must satisfy, by condition~(b) or, equivalently, \eqref{eq:bstandard},
\be
\label{eq:R1}
R_a=W_Y(a)+\sum_{b\in\Z'\setminus\{a\}}\frac{R_bW_Y(a)}{a-b},\qquad a\in\Z'.
\ee
By~\eqref{eq:WY} we can write
\be
\label{eq:fghat}
W_Y(a)=\wh{\mathbf f}(a)\wh{\mathbf g}^\top(a),\qquad \wh{\mathbf f}(a):=\s(a-s-\tfrac 12)\begin{pmatrix}
\J_{a-\frac 12}(2L) \\ L\J_{a+\frac 12}(2L)\end{pmatrix},\quad\wh{\mathbf g}(a):=\frac 1{1-M_s(a,a)}\begin{pmatrix}
L \J_{a+\frac 12}(2L) \\ -\J_{a-\frac 12}(2L)
\end{pmatrix},
\ee
and so equation~\eqref{eq:R1} implies that $R_a$ is a rank one matrix of the form
\be
\label{eq:RdRH}
R_a=\mathbf r(a)\wh{\mathbf g}^\top(a)
\ee
for some $\mathbf r(a)\in\C^2$ (column vector). Since $\wh{\mathbf g}(a)\not=0$ for all $a\in\Z'$,~\eqref{eq:R1} implies that
\be
\mathbf r(a)=\wh{\mathbf f}(a)+\sum_{b\in\Z'\setminus\{a\}}\frac{\mathbf r(b)\wh{\mathbf g}^\top(b)\wh{\mathbf f}(a)}{a-b}.
\ee
Introduce the operator $\mathsf D:\ell^2(\Z')\otimes\C^2\to\ell^2(\Z')\otimes\C^2$ by
\be
\label{eq:D}
\mathsf D:\bigl(\mathbf r(a)\bigr)_{a\in\Z'}\mapsto\bigl((\mathsf D\mathbf r)(a)\bigr)_{a\in\Z'},\qquad
(\mathsf D\mathbf r)(a):=\sum_{b\in\Z'\setminus\{a\}}\frac{\mathbf r(b)\wh{\mathbf g}^\top(b)\wh{\mathbf f}(a)}{a-b},\ \ a\in\Z'.
\ee
It is a well-defined operator on~$\ell^2(\Z')\otimes\C^2$ by~\eqref{eq:1lineasymptotics} and the fact that $M_s(b,b)=\s(b-s-\tfrac 12)K^\Be(b,b)$ is at a bounded distance from~$1$ for all~$b\in\Z'$ by the assumptions on~$\s$.

If $1-\mathsf D$ is invertible, the discrete RH problem admits a solution, constructed via~\eqref{eq:Ypf} and~\eqref{eq:RdRH} with
\be
\label{solvingdRH}
\mathbf r:=(1-\mathsf D)^{-1}\wh{\mathbf f}.
\ee
The convenience of this approach to the discrete RH problem is evident when the operator~$\mathsf D$ is small.
This is the case when $L\to 0$.
For precision's sake, let us fix the norm on $\ell^2(\Z')\otimes\C^2$ to be the one induced by the standard norm on $\ell^2(\Z')$ and the Euclidean norm on $\C^2$.

\begin{proposition}
\label{prop51}
Let $s\in\Z'$ be such that $Q^0_\sigma(s):=\prod_{i=1}^{+\infty}\bigl(1-\sigma(-i-s)\bigr)>0$.
There exists $L_*=L_*(s),c=c(s)>0$ such that $\|\mathsf D\|<cL$ for~$0\leq L<L_*$, where $\|\mathsf D\|$ is the operator norm of $\mathsf D$.
\end{proposition}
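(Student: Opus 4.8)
The plan is to realize $\mathsf D$ as a scalar kernel operator on $\ell^2(\Z')$ tensored with the identity on $\C^2$, and then to bound its operator norm by its Hilbert--Schmidt norm. Since $\wh{\mathbf g}^\top(b)\wh{\mathbf f}(a)$ in~\eqref{eq:D} is a scalar, $\mathsf D$ acts as $D\otimes I_2$, where $D$ is induced on $\ell^2(\Z')$ by the kernel $D(a,b)=\wh{\mathbf g}^\top(b)\wh{\mathbf f}(a)/(a-b)$ for $a\neq b$ and $D(a,a)=0$. Inserting the explicit vectors~\eqref{eq:fghat} and comparing with the off-diagonal discrete Bessel kernel~\eqref{eq:offdiagonal}, this simplifies to
\[
D(a,b)=\frac{\s(a-s-\tfrac12)}{1-M_s(b,b)}\,K^\Be(a,b),\qquad a\neq b.
\]
As $\|\mathsf D\|=\|D\|\le\|D\|_{\mathrm{HS}}$, it suffices to show that $\sum_{a\neq b}|D(a,b)|^2=\mathrm O(L^2)$ as $L\to0$, with constants depending only on $s$.

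The first ingredient is a uniform lower bound $1-M_s(b,b)\ge\delta>0$ for all $b\in\Z'$ and $0\le L<L_*$; this is exactly where the hypothesis $Q^0_\s(s)>0$ enters. Since $M_s(b,b)=\s(b-s-\tfrac12)K^\Be(b,b)$ with $0\le K^\Be(b,b)\le1$, I would argue as follows. The condition $Q^0_\s(s)=\prod_{i\ge1}(1-\s(-i-s))>0$ is equivalent to $\sum_{i\ge1}\s(-i-s)<\infty$ together with $\sup_{i\ge1}\s(-i-s)=1-\eta$ for some $\eta>0$. For $b\le-\tfrac12$ the argument $b-s-\tfrac12$ equals $-i-s$ with $i=\tfrac12-b\ge1$, so $\s(b-s-\tfrac12)\le1-\eta$ and hence $M_s(b,b)\le1-\eta$. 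For $b\ge\tfrac12$ I would instead use $K^\Be(b,b)=\sum_{j\ge b+\frac12}\J_j(2L)^2\le\sum_{j\ge1}\J_j(2L)^2=\mathrm O(L^2)$, so $M_s(b,b)\le\tfrac12$ once $L$ is small. Together these give $\delta:=\min(\eta,\tfrac12)$.

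The key structural input is that the standard discrete Bessel operator is an orthogonal projection: with $\op J$ the unitary involution from Lemma~\ref{lemma:orthoBessel}, one has $\op K^\Be=\op J\,\mathbf 1_{\Z'_+}\op J$, so $(\op K^\Be)^2=\op K^\Be$. As $K^\Be$ is a symmetric kernel, this yields the \emph{exact} identity $\sum_{b\neq a}K^\Be(a,b)^2=K^\Be(a,a)\bigl(1-K^\Be(a,a)\bigr)$. This is what makes the estimate succeed. Bounding $\s\le1$ and using the lower bound on $1-M_s$,
\[
\sum_{a\neq b}|D(a,b)|^2\le\frac{1}{\delta^2}\sum_{a\in\Z'}\s(a-s-\tfrac12)^2\,K^\Be(a,a)\bigl(1-K^\Be(a,a)\bigr),
\]
and I would split the $a$-sum. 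For $a\ge\tfrac12$ I drop the factor $1-K^\Be\le1$ and $\s^2\le1$ and compute $\sum_{a\ge\frac12}K^\Be(a,a)=\sum_{j\ge1}j\,\J_j(2L)^2=\mathrm O(L^2)$, using the small-argument bound $|\J_j(2L)|\le C\,L^j/j!$. For $a\le-\tfrac12$ I keep the factor $1-K^\Be(a,a)=\sum_{j\le a-\frac12}\J_j(2L)^2\le\sum_{j\ge1}\J_j(2L)^2=\mathrm O(L^2)$ (uniformly in $a$) and control the $a$-sum by the $\ell^1$-summability of $\s$ near $-\infty$, namely $\sum_{a\le-\frac12}\s(a-s-\tfrac12)^2\le\sum_{i\ge1}\s(-i-s)<\infty$. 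This gives $\|D\|_{\mathrm{HS}}^2=\mathrm O(L^2)$ and hence $\|\mathsf D\|<cL$ for suitable $c(s),L_*(s)$.

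The main obstacle is precisely the non-uniformity that rules out a naive estimate: because $K^\Be(a,a)\to1$ as $a\to-\infty$ (Lemma~\ref{lemma:integrable}), the kernel of $\mathsf D$ is \emph{not} uniformly small, so one cannot place all the decay on $\s$, which does not decay at $+\infty$ (e.g.\ $\s=\mathbf 1_{\Z'_+}$), nor treat $K^\Be$ as uniformly $\mathrm O(L)$. The projection identity resolves this by transferring the smallness to the complementary factor $1-K^\Be(a,a)$, which \emph{is} $\mathrm O(L^2)$ at $-\infty$; combined with the asymmetric split of the $a$-sum—Bessel decay carrying the $+\infty$ tail, $\ell^1$-summability of $\s$ carrying the $-\infty$ tail—this is the crux of the argument.
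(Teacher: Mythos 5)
Your proof is correct, and it shares the paper's outer skeleton --- bounding $\|\mathsf D\|$ by the Hilbert--Schmidt norm of the scalar kernel, extracting the uniform bound $(1-M_s(b,b))^{-1}\le c(s)$ from $Q^0_\s(s)>0$ together with the standing summability of $\s$, and using $|\J_k(2L)|\le L^{|k|}/|k|!$ --- but the core estimate is genuinely different. The paper never uses that $\op K^\Be$ is a projection: it writes $\wh{\mathbf f},\wh{\mathbf g}$ out from \eqref{eq:fghat}, notes that the scalar $\wh{\mathbf g}^\top(b)\wh{\mathbf f}(a)$ carries an explicit factor $L\bigl(\J_{b+\frac12}(2L)\J_{a-\frac12}(2L)-\J_{b-\frac12}(2L)\J_{a+\frac12}(2L)\bigr)$, discards $|a-b|\ge 1$ and $\s\le 1$, and bounds the remaining double sum by $2\e^{2L}$ via square-summability of the Bessel functions --- no splitting of the sum and no row-sum identity. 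You instead identify the kernel as $\s(a-s-\tfrac12)\,(1-M_s(b,b))^{-1}K^\Be(a,b)$, exploit $(\op K^\Be)^2=\op K^\Be$ (valid, since $\op K^\Be=\op J\,\mathbf 1_{\Z'_+}\op J$ with $\op J$ the unitary involution coming from Lemma~\ref{lemma:orthoBessel}) to get the exact identity $\sum_{b\ne a}K^\Be(a,b)^2=K^\Be(a,a)\bigl(1-K^\Be(a,a)\bigr)$, and then split the $a$-sum asymmetrically. Both are valid, with constants depending only on $s$; the paper's route is shorter and purely computational, yours is more structural and makes transparent why each tail of the lattice is harmless. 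One caveat on your closing discussion: the claim that the projection identity is ``what makes the estimate succeed'' and that a direct estimate is ruled out overstates the obstacle. The non-uniform limit $K^\Be(a,a)\to 1$ as $a\to-\infty$ enters only through the denominator $1-M_s(b,b)$, which both proofs control in exactly the same way; once that is done, the explicit factor $L$ in the integrable form \eqref{eq:offdiagonal} plus square-summability of $\J_k(2L)$ in $k$ already yield the $\mathrm O(L^2)$ Hilbert--Schmidt bound, and this ``naive'' computation is precisely the paper's proof. (Incidentally, your $-\infty$ tail does not even need the $\ell^1$ property of $\s$: bounding $\s\le 1$, one has $\sum_{a\le -1/2}K^\Be(a,a)\bigl(1-K^\Be(a,a)\bigr)\le\sum_{j\ge 1}j\,\J_j(2L)^2=\mathrm O(L^2)$ directly.)
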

\begin{proof}
For~$k\geq 0$, we have the Taylor series $\J_k(2L)=L^k\sum_{j\geq 0}\frac{(-L^2)^j}{j!(k+j)!}=(-1)^k\J_k(2L)$
which implies
\be
\label{eq:BesselsmallL}
\J_k(2L)=\delta_{k,0}+L(\delta_{k,1}-\delta_{k,-1})+L^2\biggl(\frac{\delta_{k,-2}+\delta_{k,2}}2-\delta_{k,0}\biggr)+L^3\biggl(\frac{\delta_{k,3}-\delta_{k,-3}}6-\frac{\delta_{k,1}-\delta_{k,-1}}2\biggr)+\mathrm O(L^4),
\ee
as $L\to 0$, with remainder uniform in $k\in\Z$ because $|\J_k(2L)|=|\J_{-k}(2L)|\leq L^k/k!$ for $k\geq 0$ integer~\cite[eq.~10.14.4]{DLMF}.
In particular, we have the following estimate for $L\to 0$, uniform in $a\in\Z'$,
\be
\label{eq:KdiagsmallL}
K^\Be(a,a)=\sum_{l\in\Z_+'}\J_{a+l}(2L)^2=\mathbf 1_{a<0}+L^2(\delta_{a,\frac 12}-\delta_{a,-\frac 12})+\mathrm O(L^4).
\ee
(The explicit term of order~$L^2$ will be needed later.)
Therefore, for all $a\in\Z'$, we have
\be
1-M_s(a,a)=1-\sigma(a-s-\tfrac 12)K^\Be(a,a)=\begin{cases}
1+\mathrm O(L^2), & a>0,
\\
1-\sigma(a-s-\tfrac 12)+\mathrm O(L^2),&a<0,
\end{cases}
\ee
with remainders uniform in $a\in\Z'$.
As long as we assume $Q_\s^0(s)\not=0$ we have~$\sigma(a-s-\tfrac 12)\not= 1$ for all~$a\in\Z'$ with~$a<0$, and so we can estimate, for $L$ sufficiently small,
\be
\frac 1{1-M_s(a,a)}\leq c_1(s),
\ee
for a constant $c_1(s)$ depending on $s$ only.
Finally, we can estimate the square of the Hilbert--Schmidt norm of $\mathsf D$ (which is an upper bound of the square of the operator norm of $\mathsf D$) as
\be
\sum_{a,b\in\Z',\ a\not=b}\left|\frac{\wh{\mathbf g}^\top(b)\wh{\mathbf f}(a)}{b-a}\right|^2
\leq
L^2c_1(s)\sum_{a,b\in\Z'}\left|\J_{b+\frac 12}(2L)\J_{a-\frac 12}(2L)-\J_{b-\frac 12}(2L)\J_{a+\frac 12}(2L)\right|^2
\leq 2L^2\e^{2L}c_1(s)
\ee
(where we use again~$|\J_k(2L)|=|\J_{-k}(2L)|\leq L^k/k!$ for~$k\geq 0$ integer in the last step) and the proof is complete.
\end{proof}

\begin{corollary}
\label{corollary:series}
Let $s\in\Z'$ be such that $Q^0_\sigma(s):=\prod_{i=1}^{+\infty}\bigl(1-\sigma(-i-s)\bigr)>0$.
There exists $L_*=L_*(s)$ such that the discrete RH problem for $Y$ is solvable for $0\leq L\leq L_*$, and moreover we have
\be
\label{eq:expansionYsmallL}
Y(z;L,s)=Y^{[0]}(z;s)+L Y^{[1]}(z;s)+L^2 Y^{[2]}(z;s)+L^3 Y^{[3]}(z;s)+\mathrm O(L^4),\qquad L\to 0,
\ee
where $Y^{[i]}(z;s)$ are $2\times 2$ matrix-valued meromorphic functions of $z$ independent of $L$.
\end{corollary}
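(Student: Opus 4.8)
The plan is to exploit the operator reformulation \eqref{eq:Ypf}--\eqref{solvingdRH} of the discrete RH problem together with the smallness estimate of Proposition~\ref{prop51}, relying on the fact that the Bessel functions $\J_k(2L)$ depend analytically on $L$. For the solvability, I note that Proposition~\ref{prop51} gives $\|\mathsf D\|<cL$ on $[0,L_*)$; after shrinking $L_*$ so that $cL_*<1$, the operator $1-\mathsf D$ becomes invertible on $\ell^2(\Z')\otimes\C^2$, with inverse given by the norm-convergent Neumann series $(1-\mathsf D)^{-1}=\sum_{n\geq 0}\mathsf D^n$. Then \eqref{solvingdRH} yields a well-defined $\mathbf r=(1-\mathsf D)^{-1}\wh{\mathbf f}\in\ell^2(\Z')\otimes\C^2$ and, through \eqref{eq:Ypf} and \eqref{eq:RdRH}, a solution $Y$ of the discrete RH problem, unique by Remark~\ref{remark:dRHp}~{\it(iv)}, for all $0\leq L\leq L_*$.

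Next I would expand each building block in powers of $L$. Using the uniform bound $|\J_k(2L)|\leq L^{|k|}/|k|!$ and the Taylor expansion \eqref{eq:BesselsmallL}, the vectors $\wh{\mathbf f}(a)$ and $\wh{\mathbf g}(a)$ admit $L$-expansions with $L$-independent coefficients. The only non-polynomial ingredient is the scalar $1/(1-M_s(a,a))$; by \eqref{eq:KdiagsmallL} the quantity $M_s(a,a)$ is analytic in $L$ and, since $Q^0_\s(s)>0$ forces $1-M_s(a,a)$ to stay bounded away from $0$ uniformly in $a\in\Z'$ (exactly as in the proof of Proposition~\ref{prop51}), the reciprocal is analytic in $L$ with Taylor coefficients and remainders uniform in $a$. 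Feeding these into \eqref{eq:D} and estimating as in Proposition~\ref{prop51}, I obtain
\[
\mathsf D=L\,\mathsf D^{[1]}+L^2\mathsf D^{[2]}+L^3\mathsf D^{[3]}+\mathrm O_{\mathrm{HS}}(L^4),\qquad \wh{\mathbf f}=\sum_{j=0}^{3}L^j\wh{\mathbf f}^{[j]}+\mathrm O(L^4),
\]
and similarly for $\wh{\mathbf g}$, with $L$-independent Hilbert--Schmidt operators $\mathsf D^{[j]}$ and $\ell^2$ vectors $\wh{\mathbf f}^{[j]},\wh{\mathbf g}^{[j]}$; note that $\mathsf D^{[0]}=0$ because $\mathsf D=\mathrm O(L)$.

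Since $\mathsf D=\mathrm O(L)$, the truncated Neumann series gives $(1-\mathsf D)^{-1}=1+\mathsf D+\mathsf D^2+\mathsf D^3+\mathrm O(L^4)$; substituting the expansion of $\mathsf D$ and collecting powers of $L$ produces an expansion of $(1-\mathsf D)^{-1}$ to order $L^3$ with $L$-independent coefficients. Applying it to $\wh{\mathbf f}$ via \eqref{solvingdRH} yields $\mathbf r=\sum_{j=0}^{3}L^j\mathbf r^{[j]}+\mathrm O(L^4)$ in $\ell^2(\Z')\otimes\C^2$ with $L$-independent $\mathbf r^{[j]}$. I would then insert this, together with the expansion of $\wh{\mathbf g}$, into the partial-fraction representation $Y(z)=I+\sum_{b\in\Z'}\frac{\mathbf r(b)\wh{\mathbf g}^\top(b)}{z-b}$, and collect powers of $L$ to define
\[
Y^{[i]}(z;s):=\delta_{i,0}\,I+\sum_{b\in\Z'}\frac{1}{z-b}\sum_{j+k=i}\mathbf r^{[j]}(b)\bigl(\wh{\mathbf g}^{[k]}\bigr)^\top(b).
\]
Each residue sequence is $\ell^1$ in $b$ by Cauchy--Schwarz (being a product of $\ell^2$ sequences), so by Lemma~\ref{lemma:partialfraction} the series defines a meromorphic function of $z$ with simple poles on $\Z'$, independent of $L$.

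The main obstacle is the final transfer from the $\ell^2$/Hilbert--Schmidt bounds to a statement about the meromorphic function $Y(z;L,s)$: one must check that the $\mathrm O(L^4)$ remainders for $\mathbf r$ and $\wh{\mathbf g}$ yield an $\mathrm O(L^4)$ remainder for $Y(z;L,s)$ locally uniformly in $z\in\C\setminus\Z'$. This follows because, for $z$ at a bounded distance from $\Z'$, the weight $\bigl(1/(z-b)\bigr)_{b\in\Z'}$ is bounded in $\ell^\infty$, so the $\ell^1\times\ell^\infty$ pairing defining $Y(z)-I$ is continuous and converts the $\ell^1$-level $\mathrm O(L^4)$ remainder of $\mathbf r(b)\wh{\mathbf g}^\top(b)$ into the desired scalar remainder, while the rapid decay of the Bessel residues guarantees absolute convergence of all the sums involved. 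Alternatively, one can promote the whole argument to genuine analyticity of $L\mapsto Y(z;L,s)$ and invoke Taylor's theorem, but the truncated Neumann series already delivers the stated order.
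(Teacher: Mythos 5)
Your proposal is correct and takes essentially the same approach as the paper: solvability via Proposition~\ref{prop51} and the Neumann series for $(1-\mathsf D)^{-1}$, followed by expanding the representation~\eqref{eq:operatorY} of $Y$ in powers of $L$. You simply make explicit several details the paper leaves implicit (the uniform-in-$a$ expansion of $1/(1-M_s(a,a))$, the $\ell^1$ summability of the residues, and the locally uniform control of the $\mathrm O(L^4)$ remainder), all of which check out.
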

\begin{proof}
By the discussion above, if the operator $1-\mathsf D$ is invertible, the discrete RH problem for $Y$ admits a solution.
By Theorem~\ref{thm:Borodin}, if $1-\mathsf D$ is invertible, then $Q_\s(L,s)>0$.
It is then enough to use Proposition~\ref{prop51} as well as the formula
\be
\label{eq:operatorY}
Y(z)=I+\sum_{b\in\Z'}\frac{\left((1-\mathsf D)^{-1}\wh{\mathbf f}\right)(b)\wh{\mathbf g}^\top(b)}{z-b},
\ee
stemming from~\eqref{eq:Ypf},~\eqref{eq:RdRH}, and~\eqref{solvingdRH}, along with the Neumann series $(1-\mathsf D)^{-1}=\sum_{k\geq 0}\mathsf D^k$.
\end{proof}
\begin{proof}[Proof of Theorem~\ref{thm:ic}]
The proof follows from the above Corollary~\ref{corollary:series} and the following computations.
In the limit $L\to 0$, the Poissonized Plancherel probability measure converges to a delta measure supported on the empty partition. From~\eqref{eq:multstatPP}, we obtain
\be
Q_\s(L,s)\big|_{L=0}=\prod_{i\geq 0}\bigl(1-\sigma(-i-s)\bigr)=:Q^0_\s(s),\qquad s\in\Z'.
\ee
Next, by~\eqref{eq:KdiagsmallL} and~\eqref{eq:BesselsmallL}, we have that, denoting $\wt\s_s(a):=\s(a-s-\tfrac 12)$,
\begin{align}
\nonumber
W_Y(a)&=\frac{\wt\s_s(a)}{1-\wt\s_s(a)\bigl(\mathbf 1_{a<0}+L^2(\delta_{a,\frac 12}-\delta_{a,-\frac 12})+\mathrm O(L^4)\bigr)}
\\
&\qquad\qquad\qquad
\times\left[
\begin{pmatrix}
0 & -\delta_{a,\frac 12} \\ 0 & 0
\end{pmatrix}
+L^2\begin{pmatrix}
-\delta_{a,-\frac 12}+\delta_{a,\frac 12} & -\delta_{a,-\frac 12}+2\delta_{a,\frac 12}-\delta_{a,\frac 32} \\
\delta_{a,-\frac 12} & \delta_{a,-\frac 12}-\delta_{a,\frac 12}
\end{pmatrix}
+\mathrm O(L^4)
\right]
\end{align}
or, equivalently,
\be
\label{seriesWY}
W_Y(a) = W_Y^{[0]}(a)+L^2W_Y^{[1]}(a)+\mathrm O(L^4),
\ee
where
\begin{align}
\nonumber
W_Y^{[0]}(a)&=-\s(-s)\delta_{a,\frac 12}\begin{pmatrix}
0 & 1 \\ 0 & 0
\end{pmatrix},
\qquad W_Y^{[1]}(a)=\delta_{a,-\frac 12}V_{-\frac 12}+\delta_{a,\frac 12}V_{\frac 12}+\delta_{a,\frac 32}V_{\frac 32},
\\
V_{-\frac 12}&=\frac{\s(-s-1)}{1-\s(-s-1)}\begin{pmatrix}
-1 & -1 \\ 1 & 1
\end{pmatrix},\quad
V_{\frac 12}=\s(-s)\begin{pmatrix}
1 & 2-\s(-s)^2 \\ 0 & -1
\end{pmatrix}
,\quad
V_{\frac 32}=\s(-s+1)\begin{pmatrix}
0 & -1 \\ 0 & 0
\end{pmatrix}.
\end{align}
By Corollary~\ref{corollary:series}, we can solve the discrete RH problem order-by-order in~$L$, i.e., we can plug the expansion~\eqref{eq:expansionYsmallL} into the conditions of the discrete RH problem for $Y(z)$.
Due to the parity of the series~\eqref{seriesWY} it is easy to check that the terms $Y^{[1]}$ and $Y^{[3]}$ in~\eqref{eq:expansionYsmallL} vanish.
In particular, the leading term~$Y^{[0]}(z)$ is characterized by the fact that it is analytic in~$\C\setminus\{\tfrac 12\}$, with a simple pole at~$1/2$, and satisfies
\be
\res{z=1/2}Y^{[0]}(z)\d z=\lim_{z\to 1/2}Y^{[0]}(z)W_Y^{[0]}(\tfrac 12),
\ee
as well as~$\sup_{|z|=n}|Y^{[0]}(z)-I|\to 0$ as~$n\to+\infty$ through integer values.
It follows that
\be
Y^{[0]}(z)=I+\frac{W_Y^{[0]}(\tfrac 12)}{z-\frac 12}.
\ee
Similarly, $Y^{[2]}(z)$ is characterized by the fact that it is analytic in~$\C\setminus\{-\tfrac 12,\tfrac 12,\tfrac 32\}$, with simple poles at $\pm 1/2,3/2$, and satisfies
\be
\label{eq:additiveRH1}
\res{z=-1/2}Y^{[2]}(z)\d z=\lim_{z\to -1/2}Y^{[0]}(z)V_{-\frac 12},\qquad
\res{z=3/2}Y^{[2]}(z)\d z=\lim_{z\to 3/2}Y^{[0]}(z)V_{\frac 32},
\ee
and
\be
\label{eq:additiveRH2}
\res{z=1/2}Y^{[2]}(z)\d z=\lim_{z\to 1/2}\left(Y^{[0]}(z)V_{\frac 12}+Y^{[2]}(z)W_Y^{[0]}(\tfrac 12)\right),
\ee
as well as~$\sup_{|z|=n}|Y^{[2]}(z)|\to 0$ as~$n\to+\infty$ through integer values.
The solution is found in the form
\be
Y^{[2]}(z)=\frac{N_{-\frac 12}}{z+\tfrac 12}+\frac{N_{\frac 12}}{z-\tfrac 12}+\frac{N_{\frac 32}}{z-\tfrac 32}.
\ee
The residues $N_{-\frac 12}$ and $N_{\frac 32}$ are found from~\eqref{eq:additiveRH1} as
\begin{align}
N_{-\frac 12}&=\bigl(I-W^{[0]}_Y(\tfrac 12)\bigr)V_{-\frac 12}=\frac{\s(-s-1)}{1-\s(-s-1)}\begin{pmatrix}
\s(-s)-1 & \s(-s)-1 \\ 1 & 1
\end{pmatrix},
\\
N_{\frac 32}&=\bigl(I+W^{[0]}_Y(\tfrac 12)\bigr)V_{\frac 12}=\begin{pmatrix}
0 & -\sigma(-s+1) \\ 0 & 0
\end{pmatrix}
\end{align}
For the residue~$N_{\frac 12}$, we use~\eqref{eq:additiveRH2} to get
\be
\label{then}
N_{\frac 12}=\lim_{z\to 1/2}\frac{W_Y^{[0]}(\tfrac 12)V_{\frac 12}+N_{\frac 12}W_Y^{[0]}(\tfrac 12)}{z-\tfrac 12}+V_{\frac 12}+(N_{-\frac 12}-N_{\frac 32})W_Y^{[0]}(\tfrac 12).
\ee
Existence of the limit implies
\be
W_Y^{[0]}(\tfrac 12)V_{\frac 12}+N_{\frac 12}W_Y^{[0]}=0
\ee
which can be used to show that
\be
\label{structureN}
N_{\frac 12}=\begin{pmatrix}
\s(-s) & \star \\ 0 & \star 
\end{pmatrix}
\ee
The remaining entries, denoted with $\star$, of $N_{\frac 12}$ can be found then by~\eqref{then} but are not needed for the present argument\footnote{It is however important to note that~\eqref{then} is compatible with the structure~\eqref{structureN} of $N_{\frac 12}$, so that we can really solve for these entries and thus fully determine $Y^{[2]}(z)$.}. Indeed, we have shown that
\be
-\frac L2\frac\pa{\pa L}\log Q_\s(L,s)=\alpha(L,s)=L^2\left(N_{-\frac 12}+N_{\frac 12}+N_{\frac 32}\right)_{1,1}+\mathrm O(L^4),\qquad L\to 0_+,
\ee
where we use~\eqref{eq:dlogQ}, hence the proof is completed by the explicit computation
\be
\left(N_{-\frac 12}+N_{\frac 12}+N_{\frac 32}\right)_{1,1}=\frac{\sigma(-s)-\sigma(-s-1)}{1-\sigma(-1-s)}.
\ee
\end{proof}

\section{Proof of Theorem~\texorpdfstring{\ref{thm:2}}{III}}\label{sec:thm2}

Throughout this section we assume $s\in\Z'$ is large enough such that $Q_\s(L,s-1)>0$. Introduce, as in~\eqref{eq:frakthm},
\begin{align}
\label{eq:fraka}
\mathfrak a(L,s):={}&\frac{\sqrt{Q_\s(L,s+1)Q_\s(L,s-1)}}{Q_\s(L,s)}=\sqrt{\frac{1+\b(L,s)}{1+\b(L,s+1)}},
\\
\label{eq:frakb}
\mathfrak b(L,s):={}&\frac{\pa}{\pa L}\log\frac{Q_\s(L,s)}{Q_\s(L,s-1)}=-\frac 2L\bigl(\a(L,s)-\a(L,s-1)\bigr),
\end{align}
where we use~\eqref{eq:dlogQ}, and
\be
\label{eq:Theta}
\Theta(z;L,s):=\begin{pmatrix}
\frac L{1+\b(L,s)} &0 \\ 0 & 1
\end{pmatrix}\Psi(z;L,s).
\ee
The following proposition is a consequence of~\eqref{eq:wtA}--\eqref{eq:wtB} whose proof is a simple computation that we omit.

\begin{proposition}
We have
\be
\label{eq:Lax}
\Theta(z;L,s+1)=A(z;L,s)\Theta(z;L,s),\qquad 
\frac{\pa}{\pa L}\Theta(z;L,s)=B(z;L,s)\Theta(z;L,s)
\ee
where
\be
\label{eq:AB}
A(z;L,s)=\begin{pmatrix}
0 & 1 \\ -\mathfrak a^2(L,s) & \frac{z+s+1}L+\frac{\mathfrak b(L,s+1)}2 
\end{pmatrix},
\qquad
B(z;L,s)=\begin{pmatrix}
\frac{z+s+1}L+\mathfrak b(L,s) & -2 \\ 
2\mathfrak a^2(L,s) & -\frac{z+s}L
\end{pmatrix}.
\ee
In particular,
\be
\label{eq:Thetastructure}
\Theta(z;L,s)=\begin{pmatrix}
\chi(z;L,s-1) & \wt\chi(z;L,s-1)
\\
\chi(z;L,s) & \wt\chi(z;L,s)
\end{pmatrix}
\ee
where~$f(s)=\chi(z;L,s)$ or~$f(s)=\wt\chi(z;L,s)$ are both solutions to
\be
\label{eq:eqf}
f(s+1)+\mathfrak a^2(L,s)f(s-1)=\left(\frac{z+s+1}L+\frac{\mathfrak b(L,s+1)}2\right)f(s).
\ee
\end{proposition}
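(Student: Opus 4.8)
The plan is to recognize the passage from $\Psi$ to $\Theta$ in~\eqref{eq:Theta} as a gauge transformation $\Theta(z;L,s)=D_s\Psi(z;L,s)$ by the diagonal matrix $D_s=\mathrm{diag}\bigl(L/(1+\b(L,s)),1\bigr)$, and simply transport the two equations of Proposition~\ref{prop:eqPsi} through this conjugation. For the shift equation this yields $A(z;L,s)=D_{s+1}\wt A(z;L,s)D_s^{-1}$; for the $L$-derivative it yields $B(z;L,s)=\bigl(\pa_L D_s\bigr)D_s^{-1}+D_s\wt B(z;L,s)D_s^{-1}$, the extra inhomogeneous term appearing because $D_s$ itself depends on $L$. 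Everything then reduces to multiplying out these $2\times 2$ products and simplifying.

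For $A$, I would first note that the factor $L/(1+\b(L,s+1))$ in the top-left of $D_{s+1}$ exactly cancels the $1+\b(L,s+1)$ in the top-right entry of $\wt A$ in~\eqref{eq:wtA}, producing the first row $(0,1)$. The $(2,1)$ entry involves $-(L^2+\g(L,s))(1+\b(L,s))/L^2$, which collapses to $-\mathfrak a^2(L,s)$ upon inserting the determinant constraint~\eqref{eq:relbetagamma} in the form $L^2+\g(L,s)=L^2/(1+\b(L,s+1))$ and comparing with~\eqref{eq:fraka}. The $(2,2)$ entry is $\bigl(z+s+1+\a(L,s)-\a(L,s+1)\bigr)/L$, which equals $\frac{z+s+1}{L}+\frac{\mathfrak b(L,s+1)}{2}$ by the $\a$-expression for $\mathfrak b$ in~\eqref{eq:frakb}.

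For $B$, the conjugation $D_s\wt B D_s^{-1}$ alone reproduces the off-diagonal entries and the $(2,2)$ entry $-\frac{z+s}{L}$; its $(2,1)$ entry $2(L^2+\g(L,s))(1+\b(L,s))/L^2$ again simplifies to $2\mathfrak a^2(L,s)$ by the same use of~\eqref{eq:relbetagamma}. The inhomogeneous term $(\pa_L D_s)D_s^{-1}$ is diagonal with a single nonzero entry $\frac 1L-\pa_L\b(L,s)/(1+\b(L,s))$ in position $(1,1)$; adding it to the $\frac{z+s}{L}$ coming from the conjugation yields the $(1,1)$ entry $\frac{z+s+1}{L}+\mathfrak b(L,s)$, where I use that $\mathfrak b(L,s)=-\pa_L\b(L,s)/(1+\b(L,s))$ — the $\b$-expression for $\mathfrak b$, obtained from $Q_\s(L,s-1)/Q_\s(L,s)=1+\b(L,s)$ in~\eqref{eq:dlogQ}. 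The one point requiring a moment's care is that $\mathfrak b$ is being read off in two different ways — through $\a$ inside $A$ and through $\b$ inside $B$ — but the equality of these two expressions is exactly the content of Theorem~\ref{thm:Qprime}, so no genuine obstacle arises.

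Finally, the structural statement~\eqref{eq:Thetastructure} and the recursion~\eqref{eq:eqf} come for free once $A$ is known. Because the first row of $A$ is $(0,1)$, the shift equation $\Theta(z;L,s+1)=A(z;L,s)\Theta(z;L,s)$ forces the top row of $\Theta$ at $s+1$ to coincide with the bottom row at $s$; writing $\chi(z;L,s)$ and $\wt\chi(z;L,s)$ for the two bottom entries therefore produces the claimed form~\eqref{eq:Thetastructure}. Reading instead the second row of the same shift equation gives $\chi(z;L,s+1)+\mathfrak a^2(L,s)\chi(z;L,s-1)=\bigl(\tfrac{z+s+1}{L}+\tfrac{\mathfrak b(L,s+1)}{2}\bigr)\chi(z;L,s)$ and the identical relation for $\wt\chi$, which is precisely~\eqref{eq:eqf} applied to each column.
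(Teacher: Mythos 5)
Your proposal is correct and follows exactly the route the paper intends: the paper omits the proof, stating only that the proposition ``is a consequence of~\eqref{eq:wtA}--\eqref{eq:wtB} whose proof is a simple computation,'' and your gauge-transformation computation $A=D_{s+1}\wt A D_s^{-1}$, $B=(\pa_L D_s)D_s^{-1}+D_s\wt B D_s^{-1}$ (using~\eqref{eq:relbetagamma},~\eqref{eq:fraka},~\eqref{eq:frakb}, and both identities of~\eqref{eq:dlogQ}) is precisely that computation, carried out correctly, including the deduction of~\eqref{eq:Thetastructure} and~\eqref{eq:eqf} from the first row $(0,1)$ of $A$.
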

\begin{remark}
It is worth noting that the compatibility of~\eqref{eq:Lax} is expressed by the identity
\be
B(z;L,s+1)A(z;L,s)-A(z;L,s)B(z;L,s)=\frac{\pa}{\pa L}A(z;L,s).
\ee
Spelling out this equation gives the two relations
\be\label{eq:Toda_ab}
\frac{\pa\mathfrak a(L,s)}{\pa L} = \frac{\mathfrak a(L,s)}2\bigl(\mathfrak b(L,s+1)-\mathfrak b(L,s)\bigr),\qquad
\frac{\pa\mathfrak b(L,s)}{\pa L} = 4\bigl(\mathfrak a^2(L,s)-\mathfrak a^2(L,s-1)\bigr)-\frac{\mathfrak b(L,s)}L.
\ee
The first identity is manifest already by comparing~\eqref{eq:fraka} and~\eqref{eq:frakb}.
Combining it with the second one immediately implies the 2D Toda equation
\be
\frac{\pa^2}{\pa\theta_+\pa\theta_-}q_s(\theta_+,\theta_-)=\e^{q_{s+1}(\theta_+,\theta_-)-q_s(\theta_+,\theta_-)}-\e^{q_s(\theta_+,\theta_-)-q_{s-1}(\theta_+,\theta_-)}
\ee
for~$q_s(\theta_+,\theta_-):=\log\bigl(Q_\s(\sqrt{\theta_+\theta_-},s)/Q_\s(\sqrt{\theta_+\theta_-},s-1)\bigr)$, which is a consequence of Theorem~\ref{thm:1}.
\end{remark}

The matrix~$W_\Psi(a)$ is not constant in~$a$, and so we shall now obtain a difference equation in~$z$ which, in general, has a meromorphic non-rational matrix coefficient.
To this end, it is convenient to work with the gauge transformed matrix~$\Theta(z;L,s)$ introduced in~\eqref{eq:Theta} and with the functions
\be
\label{eq:defvarphi}
\varphi(z;L,s):=\sqrt{\frac {1+\beta(L,s+1)}L}\,\chi(z;L,s).
\ee
In particular, from~\eqref{eq:eqf} we have
\be
\label{eq:eqvarphi}
\mathfrak a(L,s+1)\varphi(z;L,s+1)+\mathfrak a(L,s)\varphi(z;L,s-1)=\left(\frac{z+s+1}L+\frac{\mathfrak b(L,s+1)}2\right)\varphi(z;L,s).
\ee

\begin{proposition}
We have
\be\label{eq:LaxC}
\Theta(z+1;L,s) = C(z;L,s)\Theta(z;L,s),
\ee
where
\begin{align}
\nonumber
C(z)&=\begin{pmatrix}
0&1\\ -\mathfrak a^2(L,s)&\frac{z+s+1}L
\end{pmatrix}
\\
\label{eq:C}
&\qquad+\sum_{l\in\Z'}\frac{\Delta\sigma(l)}{z-l}\begin{pmatrix}
\mathfrak a(L,s)\varphi(l+1;L,s-1)\varphi(l;L,s) & -\varphi(l+1;L,s-1)\varphi(l;L,s-1)
\\ 
\mathfrak a^2(L,s)\varphi(l+1;L,s)\varphi(l;L,s) & -\mathfrak a(L,s)\varphi(l+1;L,s)\varphi(l;L,s-1)
\end{pmatrix}
\end{align}
where $\Delta\s(l):=\s(l+1)-\s(l)$,~$\mathfrak a(L,s)$ is defined in~\eqref{eq:fraka}, and $\varphi(z;L,s)$ is defined in~\eqref{eq:defvarphi}.
\end{proposition}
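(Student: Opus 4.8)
The plan is to set $C(z;L,s):=\Theta(z+1;L,s)\Theta^{-1}(z;L,s)$ and to identify it with the right-hand side of~\eqref{eq:C} by analysing its poles and its behaviour at infinity, concluding with a Liouville-type argument via Lemma~\ref{lemma:partialfraction}. First I would note that, since $\det Y\equiv 1$ and $\det\Phi\equiv 1$, we have $\det\Psi\equiv 1$, so that $\det\Theta(z;L,s)=L/(1+\b(L,s))$ is a nonzero constant in $z$ (nonzero because $1+\b(L,s)=Q_\s(L,s-1)/Q_\s(L,s)>0$ by~\eqref{eq:dlogQ}). Hence $\Theta^{-1}=\bigl(1+\b(L,s)\bigr)L^{-1}\,\mathrm{adj}\,\Theta$ is meromorphic with simple poles on $\Z'$, and therefore $C$ is meromorphic with possible poles only on $\Z'$: those of $\Theta^{-1}(z)$ at $\Z'$ and those of $\Theta(z+1)$ at $\Z'-1=\Z'$.

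For the local analysis I would fix $l\in\Z'$. Because the gauge factor $D_s:=\mathrm{diag}\bigl(L/(1+\b(L,s)),1\bigr)$ is independent of $z$, the matrix $\Theta=D_s\Psi$ inherits condition~(b) of the discrete RH problem for $\Psi$ with the \emph{same} nilpotent matrices $W_\Psi(a)$ of~\eqref{eq:WPsi}; using $W_\Psi(a)^2=0$ these read $\Theta(z+1)=\Theta_{l+1}^\reg(z+1)\bigl(I+W_\Psi(l+1)/(z-l)\bigr)$ and $\Theta^{-1}(z)=\bigl(I-W_\Psi(l)/(z-l)\bigr)(\Theta_l^\reg)^{-1}(z)$. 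The decisive observation is that $W_\Psi(l+1)W_\Psi(l)=0$, since by~\eqref{eq:WPsi} both factors are multiples of the single nilpotent $\bigl(\begin{smallmatrix}0&1\\0&0\end{smallmatrix}\bigr)$; thus the a priori double pole of $C$ at $l$ cancels and $\res{z=l}C=\Theta_{l+1}^\reg(l+1)\bigl(W_\Psi(l+1)-W_\Psi(l)\bigr)(\Theta_l^\reg)^{-1}(l)$, where $W_\Psi(l+1)-W_\Psi(l)=-\Delta\s(l)\bigl(\begin{smallmatrix}0&1\\0&0\end{smallmatrix}\bigr)$. Since $W_\Psi$ is strictly upper triangular, the pole of $\Theta$ lies only in its second column, so $\Theta_{l+1}^\reg(l+1)$ applied to $(1,0)^\top$ is the regular first column of $\Theta$ at $l+1$, and dually $(0,1)(\Theta_l^\reg)^{-1}(l)$ is the regular second row of $\Theta^{-1}$ at $l$. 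Reading these off from~\eqref{eq:Thetastructure} gives a rank-one residue in the functions $\chi$, which I would convert to the stated $\varphi$-form of~\eqref{eq:C} using $\chi(z;L,s)=\sqrt{L/(1+\b(L,s+1))}\,\varphi(z;L,s)$ (from~\eqref{eq:defvarphi}) and $\mathfrak a(L,s)=\sqrt{(1+\b(L,s))/(1+\b(L,s+1))}$ (from~\eqref{eq:fraka}).

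To extract the polynomial part I would use condition~(c) for $\Psi$ together with the first identity in~\eqref{eq:dPhi}, which gives $\Phi(w+1)\Phi^{-1}(w)=L^{-1}\bigl(\begin{smallmatrix}0&1\\-L^2&w+\frac12\end{smallmatrix}\bigr)$, to write $C(z)=D_s\,Y(z+s+\tfrac32)\,L^{-1}\bigl(\begin{smallmatrix}0&1\\-L^2&z+s+1\end{smallmatrix}\bigr)\,Y^{-1}(z+s+\tfrac12)\,D_s^{-1}$. Expanding $Y=I+z^{-1}\bigl(\begin{smallmatrix}\a&\b\\\g&-\a\end{smallmatrix}\bigr)+\mathrm O(z^{-2})$ from~\eqref{eq:Yinfty} and keeping the growing and constant terms, the result is $D_s\bigl(zE+F+[\,\cdot\,,E]\bigr)D_s^{-1}$ with $E=L^{-1}\mathrm{diag}(0,1)$; the commutator of $\bigl(\begin{smallmatrix}\a&\b\\\g&-\a\end{smallmatrix}\bigr)$ with $E$ produces $L^{-1}\bigl(\begin{smallmatrix}0&1+\b(L,s)\\-L^2-\g(L,s)&z+s+1\end{smallmatrix}\bigr)$, and conjugating by $D_s$ while substituting $L^2+\g(L,s)=L^2/(1+\b(L,s+1))$ from~\eqref{eq:relbetagamma} collapses this to exactly $\bigl(\begin{smallmatrix}0&1\\-\mathfrak a^2(L,s)&(z+s+1)/L\end{smallmatrix}\bigr)$, the leading matrix in~\eqref{eq:C}; the omitted terms are $\mathrm O(z^{-1})$.

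Finally, writing $\wt C(z)$ for $C(z)$ minus the right-hand side of~\eqref{eq:C}, the residue computation cancels all principal parts on $\Z'$, so $\wt C$ is entire, while the behaviour-at-infinity computation shows $\wt C(z)\to 0$; Liouville's theorem then forces $\wt C\equiv 0$. I expect the main obstacle to lie in making this last step rigorous: one must verify that the series in~\eqref{eq:C} converges and represents the principal-part sum with the uniform decay on the circles $|z|=n$ required by Lemma~\ref{lemma:partialfraction}, which rests on the Bessel asymptotics~\eqref{eq:1lineasymptotics}--\eqref{eq:2lineasymptotics}, the summability of $\s$ at $-\infty$, and the decay of $\varphi$ underlying~\eqref{eq:asympvarphi}. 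The conceptual heart of the argument, by contrast, is the single identity $W_\Psi(l+1)W_\Psi(l)=0$ --- the reason the difference equation in $z$ has only simple poles --- which is precisely the dividend of the dressing that rendered $W_\Psi$ both $(s,L)$-independent and strictly upper triangular.
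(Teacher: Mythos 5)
Your proposal is correct and follows essentially the same route as the paper's proof: defining $C(z;L,s)=\Theta(z+1;L,s)\Theta^{-1}(z;L,s)$, using the $(s,L)$-independence and nilpotency of $W_\Psi$ (in particular $W_\Psi(l+1)W_\Psi(l)=0$) to compute the rank-one residues at $\Z'$, extracting the linear-in-$z$ part from \eqref{eq:definingC} together with \eqref{eq:Yinfty} and \eqref{eq:relbetagamma}, and concluding via Lemma~\ref{lemma:partialfraction}. The only (inessential) difference is organizational: the paper applies the partial-fraction lemma directly to $C$ minus its polynomial part, whereas you subtract the full candidate expression and invoke Liouville, which is the same argument packaged slightly differently.
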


\begin{proof}
Define~$C(z;L,s):=\Theta(z+1;L,s)\Theta^{-1}(z;L,s)$. Using~\eqref{eq:defPsi},~\eqref{eq:Theta}, and~\eqref{eq:dPhi}, we obtain
\be
\label{eq:definingC}
C(z;L,s)=\begin{pmatrix}
\frac L{1+\b(L,s)} &0 \\ 0 & 1
\end{pmatrix} Y(z+s+\tfrac 32;L,s)\begin{pmatrix}0 & \frac 1L \\ -L & \frac{z+s+1}L\end{pmatrix}Y^{-1}(z+s+\tfrac 12;L,s)\begin{pmatrix}
\frac {1+\b(L,s)}L &0 \\ 0 & 1
\end{pmatrix}.
\ee
By using this relation, it is straightforward to see that
\be
C_-(z;L,s):=C(z;L,s)-\begin{pmatrix}
0&1\\ -\mathfrak a^2(L,s)&\frac{z+s+1}L
\end{pmatrix}=\mathrm O(1/z)
\ee
as $z\to\infty$, in the sense of Lemma~\ref{lemma:partialfraction}, and applying this lemma we get
\be
C_-(z;L,s)=\sum_{l\in\Z'}\frac{\res{w=l}C_-(w)\d w}{z-l}=\sum_{l\in\Z'}\frac{\res{w=l}C(w)\d w}{z-l}.
\ee
We are left with the task of computing these residues.
For all~$l\in\Z'$, using~\eqref{eq:Theta} we have
\begin{align}
\nonumber
C(z)&=
\Theta(z+1)\Theta^{-1}(z)
\\
\nonumber
&=\Theta^\reg_{l+1}(z+1)\biggl(I+\frac{W_\Psi(l+1)}{z-l}\biggr)\biggl(I-\frac{W_\Psi(l)}{z-l}\biggr)(\Theta^\reg_l)^{-1}(z)
\\
&=\Theta^\reg_{l+1}(z+1)(\Theta^\reg_l)^{-1}(z)+\frac 1{z-l}\Theta^\reg_{l+1}(z+1)\bigl(W_\Psi(l+1)-W_\Psi(l)\bigr)(\Theta^\reg_l)^{-1}(z)
\end{align}
where
\be
\Theta^\reg_l(z):=\begin{pmatrix}
\frac L{1+\b(L,s)} &0 \\ 0 & 1
\end{pmatrix}\Psi^\reg_l(z)=\begin{pmatrix}
\frac L{1+\b(L,s)} &0 \\ 0 & 1
\end{pmatrix}\Psi(z)\biggl(I-\frac{W_\Psi(l)}{z-l}\biggr),
\ee
which is regular at $z=l$.
Hence, using~\eqref{eq:WPsi},~\eqref{eq:Thetastructure},~and the fact that $\det\Theta(z;L,s)=\tfrac L{1+\beta(L,s)}$, we compute $\res{w=l}C(w)\d w$ as
\begin{align}
\nonumber
&\Theta^\reg_{l+1}(l+1)\begin{pmatrix}
0 & \sigma(l)-\sigma(l+1) \\  0 & 0
\end{pmatrix}
\bigl(\Theta^\reg_l\bigr)^{-1}(l)
\\
&=\frac{1+\beta(L,s)}{L}\bigl(\sigma(l)-\sigma(l+1)\bigr)\begin{pmatrix}-\chi(l+1;L,s-1)\chi(l;L,s) & \chi(l+1;L,s-1)\chi(l;L,s-1)
\\ 
-\chi(l+1;L,s)\chi(l;L,s) & \chi(l+1;L,s)\chi(l;L,s-1)
\end{pmatrix}
\end{align}
and the proof is complete by using the definition~\eqref{eq:defvarphi}.
\end{proof}

\begin{proof}[Proof of Theorem~\ref{thm:2}]
We first prove~\eqref{eq:frakanonlocal} and~\eqref{eq:frakbnonlocal}.
To this end, let~$C_{-1}$ be the coefficient of~$z^{-1}$ in the asymptotic series for~$C(z;L,s)$ at~$z=\infty$.
On the one hand, using~\eqref{eq:C}, we have
\be
\label{11C}
(C_{-1})_{1,1}=\mathfrak a(L,s)\sum_{l\in\Z'}\Delta\s(l)\varphi(l+1;L,s-1)\varphi(l;L,s).
\ee 
On the other hand, using~\eqref{eq:definingC} and~\eqref{eq:Yinfty} instead,
\be
\label{11Cbis}
(C_{-1})_{1,1}=L\left(1-\frac{1+\b(L,s)}{1+\b(L,s+1)}\right)=L\bigl(1-\mathfrak a^2(L,s)\bigr),
\ee
where we use~\eqref{eq:fraka} in the last equality.
Hence, \eqref{11C} and \eqref{11Cbis} are equal and we get \eqref{eq:frakanonlocal}.
Similarly, using~\eqref{eq:C}, we have
\be
\tr C_{-1}=\mathfrak a(L,s)\sum_{l\in\Z'}\Delta\s(l)\bigl(\varphi(l+1;L,s-1)\varphi(l;L,s)-\varphi(l+1;L,s)\varphi(l;L,s-1)\bigr),
\ee
whilst using~\eqref{eq:definingC} and~\eqref{eq:Yinfty} we have $\tr C_{-1}=\alpha(L,s)/L$, and so
\be
\alpha(L,s)=L\mathfrak a(L,s)\sum_{l\in\Z'}\Delta\s(l)\bigl(\varphi(l+1;L,s-1)\varphi(l;L,s)-\varphi(l+1;L,s)\varphi(l;L,s-1)\bigr).
\ee
Using this expression and~\eqref{eq:eqvarphi} we finally simplify
\begin{align}
\nonumber
\frac{\alpha(L,s+1)-\alpha(L,s)}L&=\sum_{l\in\Z'}\Delta\s(l)\biggl\lbrace\mathfrak a(L,s+1)\bigl[\varphi(l+1;L,s)\varphi(l;L,s+1)-\varphi(l+1;L,s+1)\varphi(l;L,s)\bigr]
\\
\nonumber
&\qquad\qquad\quad-\mathfrak a(L,s)\bigl[\varphi(l+1;L,s-1)\varphi(l;L,s)-\varphi(l+1;L,s)\varphi(l;L,s-1)\bigr]\biggr\rbrace
\\
\nonumber
&=\sum_{l\in\Z'}\Delta\s(l)\biggl\lbrace\varphi(l+1;L,s)\bigl[\mathfrak a(L,s+1)\varphi(l;L,s+1)+\mathfrak a(L,s)\varphi(l;L,s-1)\bigr]
\\
\nonumber
&\qquad\qquad\quad-\varphi(l;L,s)\bigl[\mathfrak a(L,s+1)\varphi(l+1;L,s+1)+\mathfrak a(L,s)\varphi(l+1;L,s-1)\bigr]\biggr\rbrace
\\
&=-\frac 1L\sum_{l\in\Z'}\Delta\s(l)\varphi(l;L,s)\varphi(l+1;L,s).
\end{align}
Next,~\eqref{eq:eqvarphithm} is exactly~\eqref{eq:eqvarphi}, and it remains only to show the asymptotic relation~\eqref{eq:asympvarphi}.
To this end, we first observe that as~$s\to+\infty$ we have $Q_\s(L,s)\to 1$ and so, by~\eqref{eq:dlogQ},
\be
\beta(s)=\frac{Q_\s(L,s-1)}{Q_\s(L,s)}-1\to 0,\quad\mbox{as }s\to+\infty,
\ee
implying, by~\eqref{eq:defvarphi},~\eqref{eq:Theta}, and~\eqref{eq:Thetastructure}, that
\be
\varphi(z;L,s)\sim\frac 1{\sqrt L}\chi(z;L,s),\qquad s\to+\infty.
\ee
It is therefore enough to show that for all $z\in\Z'$ we have
\be
\chi(z;L,s)\sim L\J_{z+s+1}(2L),\qquad s\to+\infty.
\ee
To this end we first write, using~\eqref{eq:defPsi},~\eqref{eq:Theta},~\eqref{eq:Thetastructure}, and ~\eqref{eq:operatorY},
\begin{align}
\nonumber
\left|\frac{\chi(z;L,s)}{L\J_{z+s+1}(2L)}-1\right|&=
\left|\frac 1{L\J_{z+s+1}(2L)}\biggl((Y(z+s+\tfrac 12)-I)\Phi(z+s+\tfrac 12)\biggr)_{2,1}\right|
\\
\label{eq:vaazero}
&\leq\sum_{b\in\Z'}\left|(0,1)\bigl((1-\mathsf D)^{-1}\wh{\mathbf f}\bigr)(b)\frac{\wh{\mathbf g}^\top(b)\Phi(z+s+\tfrac 12)}{(z+s+\tfrac 12-b)L\J_{z+s+1}(2L)}\begin{pmatrix}
1 \\ 0
\end{pmatrix}\right|
\end{align}
where we recall that the operator~$\mathsf D$ is defined in~\eqref{eq:D}.
We need to show that~\eqref{eq:vaazero} vanishes as $s\to+\infty$.
To this end, we first estimate it as follows
\be
\label{eq:vaazero2}
\left|\frac{\chi(z;L,s)}{L\J_{z+s+1}(2L)}-1\right|\leq c\e^L\sum_{b\in\Z'}\left|(0,1)\bigl((1-\mathsf D)^{-1}\wh{\mathbf f}\bigr)(b)\right|
\ee
because we claim that there exists $c>0$ such that for $s$ sufficiently large (depending on $L,z$ only, not on $b$) we have, for all $b\in\Z'$,
\be
\left|\frac{\wh{\mathbf g}^\top(b)\Phi(z+s+\tfrac 12)}{(z+s+\tfrac 12-b)L\J_{z+s+1}(2L)}\begin{pmatrix}1 \\ 0
\end{pmatrix}\right|\leq c\e^L.
\ee
To prove this last assertion, we rewrite
\be
\label{eq:tobebounded}
\frac{\wh{\mathbf g}^\top(b)\Phi(z+s+\tfrac 12)}{(z+s+\tfrac 12-b)L\J_{z+s+1}(2L)}\begin{pmatrix}1 \\ 0
\end{pmatrix}=\frac{K^\Be(z+s+\tfrac 12,b)}{\bigl(1-M_s(b,b)\bigr)L\J_{z+s+1}(2L)}=\sum_{l\in\Z'_+}\frac{\J_{z+s+\frac 12+l}(2L)}{\J_{z+s+1}(2L)}\frac{\J_{b+l}(2L)}{1-M_s(b,b)}.
\ee
Observe that
\be
M_s(b,b)=\sigma(b-s-\tfrac 12) K^\Be(b,b)\leq \begin{cases}
\displaystyle\sup_{l\in\Z',\ l<-s/2}\s(l), & \mbox{if }b<s/2,
\\[20pt]
K^\Be(\lfloor\frac{s+1}2\rfloor+\frac 12,\lfloor\frac{s+1}2\rfloor+\frac 12), & \mbox{if }b>s/2,
\end{cases}
\ee
which implies
\be
\label{estimatediag}
\frac 1{1-M_s(b,b)}=\mathrm O(1),\quad\mbox{as $s\to+\infty$, uniformly in $b\in\Z'$.}
\ee
Next, for~$k$ real and sufficiently large, $\J_k(2L)$ is positive and monotonically decreasing in~$k$, as it follows, for instance, by~\eqref{eq:1lineasymptotics}.
Therefore, we can bound~\eqref{eq:tobebounded}, provided $s$ is sufficiently large,
\be
\left|\frac{\wh{\mathbf g}^\top(b)\Phi(z+s+\tfrac 12)}{(z+s+\tfrac 12-b)L\J_{z+s+1}(2L)}\begin{pmatrix}1 \\ 0
\end{pmatrix}\right|\leq c\sum_{l\in\Z_+'}|\J_{b+l}(2L)|\leq c\sum_{k\in\Z}|\J_k(2L)|\leq 2c\e^L,
\ee
for some $c>0$, where in the last step we use again the inequality~$\J_{\pm k}(2L)\leq L^k/k!$ for all integers~$k\geq 0$. (In~\eqref{eq:vaazero2} we rename $c\mapsto c/2$.)
Next, we claim that
\be
\label{eq:finalclaim}
\left\|\mathsf D \mathbf r\right\|_{\ell^1(\Z')}\leq \frac 12\left\| \mathbf r\right\|_{\ell^1(\Z')}
\ee
provided $s$ is sufficiently large. Postponing for a while the proof of this claim, let us show how to complete the estimate of~\eqref{eq:vaazero2}: note that $\mathsf D$ commutes with multiplying on the left by the vector $(0,1)$, and therefore so does $(1-\mathsf D)^{-1}$, to write, using~\eqref{eq:finalclaim},
\begin{align}
\nonumber
\sum_{b\in\Z'}\left|(0,1)\bigl((1-\mathsf D)^{-1}\wh{\mathbf f}\bigr)(b)\right|&=
\left\|(0,1)(1-\mathsf D)^{-1}\wh{\mathbf f}\right\|_{\ell^1(\Z')}
\\
\nonumber
&\leq 2\left\|(0,1)\wh{\mathbf f}\right\|_{\ell^1(\Z')}.
\\
\nonumber
&=2L\sum_{a\in\Z'}\sigma(a-s-\tfrac 12)|\J_{a+\frac 12}(2L)|
\\
\nonumber
&=2L\sum_{\begin{smallmatrix} a\in\Z',\\ a<s/2\end{smallmatrix}}\sigma(a-s-\tfrac 12)|\J_{a+\frac 12}(2L)|+2L\sum_{\begin{smallmatrix} a\in\Z',\\ a>s/2\end{smallmatrix}}\sigma(a-s-\tfrac 12)|\J_{a+\frac 12}(2L)|
\\
\label{exactly}
&\leq\left(\sup_{l\in\Z',\ l<-\frac s2}\sigma(l)\right)\sum_{k\in\Z}|\J_k(2L)|+\sum_{\begin{smallmatrix} a\in\Z',\\ a>s/2\end{smallmatrix}}\frac{L^{a+\frac 12}}{(a+\tfrac 12)!}=o(1),
\end{align}
as $s\to+\infty$.
Finally, it remains to prove the claim~\eqref{eq:finalclaim}.
To this end, we have
\be
\label{wecanbound}
\sum_{a,b\in\Z',\ a\not=b}\left|\frac{\mathbf r(b)\wh{\mathbf g}^\top(b)\wh{\mathbf f}(a)}{a-b}\right|
\leq\sum_{b\in\Z'}|\mathbf r(b)|\sum_{a\in\Z'}|\wh{\mathbf g}^\top(b)\wh{\mathbf f}(a)|.
\ee
We can bound this quantity, using $|\J_{b\pm\frac 12}(2L)|\leq \e^L$ and $(1-M_s(b,b))^{-1}\leq c$ for $s$ sufficiently large and for all $b\in\Z'$, as we proved in~\eqref{estimatediag}, as
\be
\mbox{\eqref{wecanbound}}\leq c L\e^L \|\mathbf r(b)\|_{\ell^1(\Z')}\sum_{a\in\Z'}\sigma(a-s-\tfrac 12)\left(|\J_{a+\frac 12}(2L)|+|\J_{a-\frac 12}(2L)|\right)
\ee
and we can bound the last sum over $a$ exactly as in~\eqref{exactly} by splitting it for $a<s/2$ and $a>s/2$, to obtain 
\be
\sum_{a\in\Z'}\sigma(a-s-\tfrac 12)\left(|\J_{a+\frac 12}(2L)|+|\J_{a-\frac 12}(2L)|\right)=o(1),\qquad s\to+\infty,
\ee
such that, indeed, for $s$ sufficiently large we have~\eqref{eq:finalclaim}.
\end{proof}

\subsection{Connection with the discrete Painlev\'e II equation}\label{sec:dPII}

Let us now consider, more specifically, the case $\sigma = \mathbf 1_{\mathbb Z'_+}$, studied in depth by Borodin~\cite{Borodin}.
In this case, \eqref{eq:C} reduces to
\begin{equation}
\label{eq:CBorodin}
	C(z) = \begin{pmatrix}
			0 & 1 \\
			-\mathfrak a^2(L,s) & \frac{z + s + 1}L
		\end{pmatrix} + \frac{1}{z + \frac{1}2}\begin{pmatrix}
									\mathfrak a(L,s)\varphi_+(L,s-1)\varphi_-(L,s) & -\varphi_+(L,s-1)\varphi_-(L,s-1) \\
									\mathfrak a^2(L,s)\varphi_+(L,s)\varphi_-(L,s) & -\mathfrak a(L,s)\varphi_+(L,s)\varphi_-(L,s-1)
								\end{pmatrix}
\end{equation}
where we denoted, for sake of brevity, $\varphi_{\pm}(L,s) = \varphi(\pm 1/2;L,s)$.
In this case, the compatibility conditions between the Lax equations \eqref{eq:Lax} and \eqref{eq:LaxC} greatly simplify, and we recover the well known relations between the discrete Bessel kernel, the discrete Painlev\'e II and the modified Volterra equation (see \cite{Borodin} and also \cite{AvMToda,Hisakado}). 

Let us start by noting that the identities~\eqref{eq:frakanonlocal} and~\eqref{eq:frakbnonlocal} reduce to
\be
\mathfrak{b}(L,s+1)=\frac 2L \varphi_+(L,s)\varphi_-(L,s),\quad
L\bigl(\mathfrak a^{-1}(L,s)-\mathfrak a(L,s)\bigr)=\varphi_+(L,s-1)\varphi_-(L,s).
\ee
Taking the ratio of these gives
\be
\label{eq:last}
\varphi_+(L,s) = \frac{\mathfrak{a}(L,s)\mathfrak{b}(L,s+1)}{2(1 - \mathfrak{a}^2(s))}\varphi_+(L,s-1).
\ee
We then expand the determinant of $C(z;L,s)$, which we know to be equal to $1$, around $z = \infty$.
Using~\eqref{eq:last}, the term of order $z^{-1}$ yields
\be
\label{eq:bor1}
L\mathfrak{a}^2(L,s)(\mathfrak{b}(L,s+1) + \mathfrak{b}(L,s)) = (2s + 1)(1 - \mathfrak{a}^2(L,s)).
\ee
Next, let us consider the compatibility condition between the first equation in \eqref{eq:Lax} and \eqref{eq:LaxC}
\be
	A(z+1;L,s)C(z;L,s) - C(z;L,s+1)A(z;L,z) = 0.
\ee
Inspecting the entry~$(2,2)$ of this condition, once written in terms of $\mathfrak{b}^2(L,s+1), \mathfrak{a}(L,s+1)$ and $\mathfrak{a}(L,s)$ (using the equations obtained before), yields
\be
\label{eq:bor2}
\mathfrak{b}^2(L,s+1) = 4(1 - \mathfrak{a}^2(L,s))(1 - \mathfrak{a}^2(L,s+1)).
\ee
Equations~\eqref{eq:bor1} and~\eqref{eq:bor2} are the same (up to a change of variable) as equations (3.9), (3.10) in~\cite{Borodin} and lead to an expression of the Fredholm determinants $Q_{\mathbf 1_{\mathbb Z'_+}}(L,s)$ in terms of a discrete recursion known as the \emph{discrete Painlev\'e II equation}, eq.~\eqref{eq:dPII} below.
For the reader's convenience, we explain here how to derive it, closely following~\cite{Borodin}.
Note, however, that the Lax pair used to obtain~\eqref{eq:bor1} and~\eqref{eq:bor2} is not the same as the one in op.~cit.

\begin{proposition}[cf. Borodin,~\cite{Borodin}]
Let $v(L,s)$, for $s\in\Z'$ with $s\geq -\tfrac 12$, be the sequence of functions defined by the second order recursion
\be\label{eq:dPII}
	v(L,s+1) + v(L,s-1) = \frac{(s + \frac{1}2)v(L,s)}{L(v^2(L,s) - 1)}
\ee
with initial conditions $v(L,-\tfrac 12) = 1$, $v(L,\tfrac 12) = -\mathrm I_1(2L)/{\mathrm I_0(2L)}$, where $\mathrm I_k(2L)$ is defined in~\eqref{eq:introToeplitz}.
Then, for all $s\in\Z'$ satisfying $s\geq-\tfrac 12$, 
\be\label{eq:detv}
\frac{Q_{\mathbf 1_{\mathbb Z'_+}}(L,s+1)Q_{\mathbf 1_{\mathbb Z'_+}}(L,s-1)}{Q^2_{\mathbf 1_{\mathbb Z'_+}}(L,s)} = 1 - v^2(L,s).
\ee
Moreover, the functions $v(L,s)$ satisfy the modified Volterra equation
\begin{equation}\label{eq:modifiedVolterra}
	\frac{\partial}{\partial L}v(L,s) = \left(1 - v^2(L,s)\right)\left(v(L,s+1) - v(L,s-1)\right).
\end{equation} 
\end{proposition}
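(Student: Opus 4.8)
The plan is to realize the sequence $v(L,s)$ of the statement as an explicitly sign-fixed square root of $1-\mathfrak a^2(L,s)$, so that \eqref{eq:detv} becomes tautological once the identification is made (recall that $\mathfrak a^2(L,s)$ is exactly the left-hand side of \eqref{eq:detv}, by \eqref{eq:fraka}). Writing $Q:=Q_{\mathbf 1_{\Z'_+}}$ for brevity, recall that $Q(L,s)>0$ precisely for $s\geq-\tfrac12$ while $Q\equiv0$ for $s\leq-\tfrac32$; hence $\mathfrak a^2(L,-\tfrac12)=0$ and $\mathfrak a^2(L,s)>0$ for $s\geq\tfrac12$. I would introduce an auxiliary sequence $w(L,s)$, $s\geq-\tfrac12$, by $w(L,-\tfrac12):=1$ and the first-order recursion $w(L,s+1):=-\mathfrak b(L,s+1)/\bigl(2\,w(L,s)\bigr)$, and prove by induction on $s$, using the reduced relation \eqref{eq:bor2}, that
\[
w(L,s)^2=1-\mathfrak a^2(L,s),\qquad s\geq-\tfrac12 .
\]
Indeed $w(L,-\tfrac12)^2=1=1-\mathfrak a^2(L,-\tfrac12)$, and $w(L,s+1)^2=\mathfrak b^2(L,s+1)/\bigl(4\,w(L,s)^2\bigr)=(1-\mathfrak a^2(L,s))(1-\mathfrak a^2(L,s+1))/(1-\mathfrak a^2(L,s))$. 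By construction $\mathfrak b(L,s+1)=-2\,w(L,s)\,w(L,s+1)$.

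Next I would check that $w$ obeys the \emph{same} second-order recursion \eqref{eq:dPII} as $v$, with the same first two values, and conclude $w=v$ by forward uniqueness. Substituting $1-\mathfrak a^2(L,s)=w(L,s)^2$, $\mathfrak b(L,s+1)=-2w(L,s)w(L,s+1)$ and $\mathfrak b(L,s)=-2w(L,s-1)w(L,s)$ into the reduced relation \eqref{eq:bor1} and dividing by $-2L\,w(L,s)\bigl(1-w(L,s)^2\bigr)$ yields exactly
\[
w(L,s+1)+w(L,s-1)=\frac{(s+\tfrac12)\,w(L,s)}{L\bigl(w(L,s)^2-1\bigr)},
\]
i.e.\ \eqref{eq:dPII}. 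The two initial values match: from the Borodin--Okounkov--Geronimo--Case formula \eqref{eq:introToeplitz} one computes $Q(L,-\tfrac12)=\e^{-L^2}$ and $Q(L,\tfrac12)=\e^{-L^2}\mathrm I_0(2L)$, whence $\mathfrak b(L,\tfrac12)=\pa_L\log \mathrm I_0(2L)=2\,\mathrm I_1(2L)/\mathrm I_0(2L)$ (using $\mathrm I_0'=\mathrm I_1$); therefore $w(L,\tfrac12)=-\mathfrak b(L,\tfrac12)/2=-\mathrm I_1(2L)/\mathrm I_0(2L)=v(L,\tfrac12)$, while $w(L,-\tfrac12)=1=v(L,-\tfrac12)$. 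Since \eqref{eq:dPII} determines $v(L,s+1)$ uniquely from $v(L,s),v(L,s-1)$ whenever $v(L,s)^2\neq1$, and $w(L,s)^2=1-\mathfrak a^2(L,s)<1$ for $s\geq\tfrac12$, induction gives $w=v$ and hence \eqref{eq:detv}.

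The modified Volterra equation \eqref{eq:modifiedVolterra} then follows from the first Toda relation in \eqref{eq:Toda_ab}. Differentiating $\mathfrak a^2(L,s)=1-v(L,s)^2$ in $L$ gives $\mathfrak a\,\pa_L\mathfrak a=-v(L,s)\,\pa_L v(L,s)$, while \eqref{eq:Toda_ab} gives
\[
\mathfrak a\,\pa_L\mathfrak a=\tfrac12\mathfrak a^2\bigl(\mathfrak b(L,s+1)-\mathfrak b(L,s)\bigr)=\tfrac12\bigl(1-v(L,s)^2\bigr)\bigl(-2v(L,s)\bigr)\bigl(v(L,s+1)-v(L,s-1)\bigr).
\]
Equating the two expressions and cancelling $-v(L,s)$ produces $\pa_L v(L,s)=\bigl(1-v(L,s)^2\bigr)\bigl(v(L,s+1)-v(L,s-1)\bigr)$, which is \eqref{eq:modifiedVolterra}.

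The delicate point — and the place where I expect the real work to lie — is the well-definedness and sign bookkeeping underlying the whole scheme: the recursion for $w$, the division extracting \eqref{eq:dPII}, and the final cancellation all require $w(L,s)=\pm\sqrt{1-\mathfrak a^2(L,s)}\neq0$, i.e.\ $\mathfrak a^2(L,s)\neq1$, equivalently the \emph{strict} log-concavity $Q(L,s+1)Q(L,s-1)<Q(L,s)^2$. I would establish $\mathfrak a^2(L,s)\in(0,1)$ for $s\geq\tfrac12$ from positivity of $Q$ together with log-concavity — most transparently by recognizing $v(L,s)$ as the Verblunsky (reflection) coefficients of the Toeplitz determinants in \eqref{eq:introToeplitz}, which satisfy $|v|<1$. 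Isolated values of $L$ at which $v(L,s)=0$ cause no harm: all the relations involved are analytic in $L$, so an identity valid off the discrete zero set of $v(L,\cdot)$ extends to every $L>0$ by continuity.
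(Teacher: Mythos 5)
Your proposal is correct and follows essentially the same route as the paper's own proof: introduce the auxiliary sequence through the first-order recursion $w(L,s+1)=-\mathfrak b(L,s+1)/\bigl(2w(L,s)\bigr)$, use \eqref{eq:bor2} to get $w^2=1-\mathfrak a^2$ (whence \eqref{eq:detv} via \eqref{eq:fraka}), rewrite \eqref{eq:bor1} to obtain \eqref{eq:dPII}, fix the initial data from the Toeplitz values \eqref{eq:introToeplitz}, and read off \eqref{eq:modifiedVolterra} from \eqref{eq:Toda_ab}. If anything, you are more careful than the paper: your factor $\tfrac12$ in the first-order recursion is the normalization actually consistent with \eqref{eq:bor2} and with $v(L,\tfrac 12)=-\mathrm I_1(2L)/\mathrm I_0(2L)$ (the paper's displayed recursion $v(L,s+1)=-\mathfrak b(L,s+1)v^{-1}(L,s)$ omits it), and you make explicit the forward-uniqueness identification and the non-vanishing/analyticity caveats ($v(L,s)\neq 0$, $v^2\neq 1$) that the paper passes over in silence.
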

\begin{proof}
	We start by defining $v^2(L,-\tfrac 12) = 1$ (which satisfies~\eqref{eq:detv}) and then recursively 
\be
	v(L,s+1) := -\mathfrak{b}(L,s+1)v^{-1}(L,s).
\ee
	Using~\eqref{eq:bor2}, we have $v^2(L,s) = 1 - \mathfrak{a}^2(L,s)$ for all $s\geq-\tfrac 12$.
	We can now write~\eqref{eq:bor1} just in terms of the functions $v(L,s)$, and in this way we obtain~\eqref{eq:dPII}. As for \eqref{eq:detv}, it comes from the equality $v^2(L,s) = 1 - \mathfrak{a}^2(L,s)$ combined with~\eqref{eq:fraka}.
	Finally, the initial condition for~$v(L,1/2)$ can be deduced from the recursive definition  $v(L,s+1) = -\mathfrak{b}(L,s)v^{-1}(L,s)$ combined with \eqref{eq:frakb} and the fact that $Q_{\mathbf 1_{\mathbb Z_+'}}(-1/2) = {\rm e}^{-L^2}, \quad Q_{\mathbf 1_{\mathbb Z_+'}}(1/2) = {\rm e}^{-L^2}{\mathrm I}_0(2L)$, see \eqref{eq:introToeplitz}.
	Finally, the modified Volterra equation~\eqref{eq:modifiedVolterra} is merely a rewriting of~\eqref{eq:Toda_ab} in terms of the functions~$v(s,L)$.
\end{proof}

\paragraph{Acknowledgements.}
We are grateful to Tom Claeys for valuable conversations.
M.C. and G.R. acknowledge support by the European Union Horizon 2020 research and innovation program under the Marie Sk\l odowska-Curie RISE 2017 grant agreement no. 778010 IPaDEGAN and by the IRP ``PIICQ'', funded by the CNRS.
G.R. also acknowledges support by the Fonds de la Recherche Scientifique-FNRS under EOS project O013018F.

\appendix

\section{Partial fraction expansion}\label{app:partialfraction}

\begin{lemma}
\label{lemma:partialfraction}
Let $f(\cdot)$ be a meromorphic function with simple poles at $\Z':=\Z+\tfrac 12$ such that
\be
\max_{|z|=n}|f(z)|\to 0,\qquad\mbox{as $n\to+\infty$ through integer values.}
\ee
Then, for all $z\in\C\setminus\Z'$ we have
\be
f(z)=\sum_{a\in\Z'}\frac{\res{w=a}f(w)\d w}{z-a}.
\ee
\end{lemma}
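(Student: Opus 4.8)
The plan is to deduce the identity from the residue theorem applied to a well-chosen auxiliary function, using the decay hypothesis on $f$ to annihilate the contribution from infinity; this is the classical Mittag--Leffler argument. First I would fix $z\in\C\setminus\Z'$ and introduce
\be
g(w):=\frac{f(w)}{w-z}.
\ee
Because $z\notin\Z'$ and $f$ is holomorphic off $\Z'$, the function $g$ is meromorphic with a simple pole at $w=z$ of residue $f(z)$, and simple poles at each $a\in\Z'$ of residue $\bigl(\res{w=a}f(w)\,\d w\bigr)/(a-z)$. No cancellation occurs at $z$, since $z$ is not among the half-integer poles of $f$.

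Next I would integrate $g$ over the circle $|w|=n$ with $n$ a positive integer. The poles of $g$ lie at $z$ and at the half-integers $\Z'$, while a circle of integer radius meets the real axis only at the integers $\pm n$; hence the contour $|w|=n$ avoids every pole of $g$, and the residue theorem applies directly. For all integers $n>|z|$ this gives
\be
\label{eq:resid-id}
\frac 1{2\pi\i}\oint_{|w|=n}g(w)\,\d w=f(z)+\sum_{\substack{a\in\Z'\\ |a|<n}}\frac{\res{w=a}f(w)\,\d w}{a-z}.
\ee

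The single estimate needed is that the left-hand side of~\eqref{eq:resid-id} tends to $0$. For $|w|=n$ with $n>2|z|$ one has $|w-z|\geq n-|z|\geq n/2$, so
\be
\left|\frac 1{2\pi\i}\oint_{|w|=n}g(w)\,\d w\right|\leq\frac 1{2\pi}\,(2\pi n)\,\frac{\max_{|w|=n}|f(w)|}{n/2}=2\max_{|w|=n}|f(w)|\xrightarrow[n\to+\infty]{}0
\ee
by hypothesis. Letting $n\to+\infty$ through integers in~\eqref{eq:resid-id} and rearranging yields
\be
f(z)=\sum_{a\in\Z'}\frac{\res{w=a}f(w)\,\d w}{z-a},
\ee
where the series is read as the limit of the symmetric partial sums $\sum_{|a|<n}$. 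This is precisely the mode of convergence produced by the circular contours, and it is the only one required in the applications of the lemma.

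I expect no substantial obstacle: once the auxiliary function $g$ is chosen the argument is routine residue calculus. The only point deserving a word of care is that the contours $|w|=n$ genuinely avoid all poles of $g$, which is guaranteed by the mismatch between the integer radii and the half-integer locations of the poles; this is what allows the residue theorem to be invoked on the undeformed circles and makes the decay hypothesis on $f$ directly usable.
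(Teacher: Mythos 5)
Your proof is correct and follows essentially the same route as the paper: both integrate $f(w)/(w-z)$ over the circles $|w|=n$, apply the residue theorem (the integer radii avoiding the half-integer poles), and use the decay hypothesis to kill the contour integral, with only cosmetic differences in the final estimate. Your explicit remark that the sum converges as a limit of symmetric partial sums is a nice clarification that the paper leaves implicit.
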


\begin{proof}
Fix $z\in\C\setminus\Z'$: for all integers $n>|z|$, Cauchy theorem implies that
\be
\frac 1{2\pi\i}\oint_{|w|=n}\frac{f(w)}{z-w}\d w=\sum_{a\in\Z',|a|<n}\frac{\res{w=a}f(w)\d w}{z-a}-f(z).
\ee
As $n\to+\infty$, the left-hand side tends to zero because
\be
\left|\oint_{|w|=n}\frac{f(w)}{z-w}\d w\right|\leq \max_{|w|=n}|f(w)|\oint_{|w|=1}\frac{|\d w|}{|w-z/n|}\to 0,
\ee
as $n\to+\infty$ by assumption, and the proof is complete.
\end{proof}

\end{document}